\documentclass[11pt,letterpaper]{article}

\usepackage[margin=1in]{geometry}
\usepackage[T1]{fontenc}
\usepackage{lmodern}
\usepackage{microtype}
\renewenvironment{abstract}
  {\begin{center}\normalfont\normalsize\bfseries Abstract\end{center}%
   \begin{quotation}\normalfont\normalsize}
  {\end{quotation}}

\usepackage{amsmath,amssymb,amsthm,mathtools,bm}
\numberwithin{equation}{section}

\usepackage{graphicx}
\usepackage{booktabs}
\usepackage{enumitem}
\DeclareMathOperator{\Tr}{Tr}
\newcommand{\ket}[1]{\lvert#1\rangle}
\newcommand{\bra}[1]{\langle#1\rvert}

\usepackage{xcolor}
\definecolor{citegreen}{HTML}{208054}
\definecolor{linkblue}{HTML}{0055CC}
\usepackage[
    colorlinks=true,
    linkcolor=linkblue,
    citecolor=citegreen,
    urlcolor=linkblue,
    bookmarksnumbered=true
]{hyperref}
\usepackage[capitalize,nameinlink,noabbrev]{cleveref}

\theoremstyle{plain}
\newtheorem{theorem}{Theorem}
\newtheorem{lemma}[theorem]{Lemma}
\newtheorem{proposition}[theorem]{Proposition}
\newtheorem{corollary}[theorem]{Corollary}

\theoremstyle{definition}

\theoremstyle{remark}

\newif\ifanonymous
\anonymousfalse

\title{Optimal single-copy estimation of quantum state moments:\\[0.15em]
why $\Tr(\rho^3)$ and $\Tr(\rho^4)$ are equally hard}

\ifanonymous
    \author{Anonymous Author}
\else
    \author{Zhenhuan Liu\thanks{Quantum Research Center, Technology Innovation Institute (TII), Abu Dhabi, United Arab Emirates.
    Email: \href{mailto:qubithuan@gmail.com}{\texttt{qubithuan@gmail.com}}.}}
\fi

\date{}

\begin{document}

\hypersetup{pageanchor=false}
\pagenumbering{gobble}
\pagestyle{empty}
\maketitle
\thispagestyle{empty}

\begin{abstract}
Estimating nonlinear properties of an unknown quantum state with restrictive experimental accessibility is a fundamental problem in quantum learning.
We study the sample complexity of estimating the state moments $\operatorname{Tr}(\rho^t)$, allowing arbitrary adaptive single-copy measurements.
While purity estimation has optimal sample complexity $\Theta(\sqrt d)$ at constant additive error, the optimal dimension dependence for higher moments has remained unclear.
We resolve this question for every fixed integer $t\ge 2$: for any sufficiently small constant additive error, the optimal sample complexity is
\[
\Theta\!\left(d^{\frac{\lceil\log_2 t\rceil}{1+\lceil\log_2 t\rceil}}\right)
\]
with constant success probability.
The complexity thus follows a dyadic hierarchy: for each integer $h\geq1$, all moment orders $2^{h-1}<t\leq2^h$ share the same exponent.
In particular, the third and fourth moments both require $\Theta(d^{2/3})$ copies.
We also establish an $\Omega(d^{1-1/t})$ lower bound for arbitrary nonadaptive single-copy protocols, matching the known upper bound and demonstrating an advantage from adaptivity for every $t\geq4$.
Our upper bound uses adaptive state filtering to reduce higher-order moment estimation to lower-order moment estimation in a smaller state space.
For the lower bound, we directly compare moment-matched ensembles through a smooth interpolation, bypassing the maximally mixed state as an intermediate hypothesis and yielding matching bounds beyond $\sqrt d$.
\end{abstract}

\clearpage

\tableofcontents
\clearpage

\pagenumbering{arabic}
\pagestyle{plain}
\hypersetup{pageanchor=true}

\section{Introduction}
\label{sec:introduction}

Quantum learning seeks to extract classical information about quantum systems, providing the readout stage of all quantum information-processing tasks.
A central question is how many copies of an unknown quantum state, or queries to an unknown quantum channel, are required to learn a property of interest.
Examples range from reconstructing a full description through tomography~\cite{Vogel1989QuantumTomography,James2001MeasurementQubits} to estimating selected properties, such as fidelities to target states~\cite{Flammia2011DirectFidelity,daSilva2011PracticalCharacterization}, correlation functions~\cite{Huang2020ClassicalShadows}, and various quantum resources~\cite{Chitambar2019QuantumResourceTheories}.
The answer depends not only on the target property but also on how the learner can access the system.
Understanding learning complexity under experimentally motivated restrictions, including limited quantum memory~\cite{Chen2022QuantumMemory,Aharonov2022QuantumAlgorithmicMeasurement}, incoherent channel queries~\cite{Chen2023UnitarityEstimation,liu2026exponential}, and noisy access to quantum systems~\cite{Cotler2026NoisyLearning,Zhou2018Heisenberg}, is therefore a fundamental and practically important problem.

Nonlinear properties of quantum states are of particular interest.
Quantities such as purity, entropy, and entanglement negativity~\cite{Vidal2002Negativity} are used to study many-body dynamics~\cite{Islam2015EntanglementEntropy,Brydges2019RandomizedEntropy}, detect entanglement~\cite{Guhne2009EntanglementDetection}, and mitigate errors through virtual purification~\cite{Huggins2021VirtualDistillation,Koczor2021ExponentialErrorSuppression}.
While bounded linear observables can be estimated efficiently using single-copy measurements, nonlinear properties generally require more elaborate protocols.
For polynomial functionals, a standard approach is to express the target as $\operatorname{Tr}(O_t\rho^{\otimes t})$ and perform a joint measurement on several copies~\cite{Ekert2002DirectEstimation,Brun2004PolynomialFunctions}.
The SWAP test for purity and its cyclic-permutation generalizations are canonical examples.
Such protocols require coherent access to multiple copies and operations coupling them, imposing substantial experimental demands.
This motivates the search for efficient ways to estimate nonlinear properties.

Along this line, a basic yet unresolved question is the optimal sample complexity of estimating a single state moment $p_t(\rho)=\operatorname{Tr}(\rho^t)$ using single-copy measurements.
We consider an unknown $d$-dimensional state and a fixed integer $t\geq 2$, and ask how many copies are needed to estimate $p_t(\rho)$ to a sufficiently small constant additive error with success probability at least $2/3$.
In the single-copy model, each copy is measured separately; the measurement may depend arbitrarily on previous classical outcomes, but no quantum information is retained between copies.
For purity ($t=2$), matching upper and lower bounds establish an optimal complexity of $\Theta(\sqrt d)$ copies, with the upper bound already achievable by a nonadaptive protocol~\cite{Chen2022QuantumMemory,Aharonov2022QuantumAlgorithmicMeasurement,Gong2026PurityInnerProduct}.
The same $\Omega(\sqrt d)$ lower bound applies to every fixed higher moment, even when adaptive measurements are allowed: the hard discrimination problem between a Haar-random pure state and the maximally mixed state remains applicable, since their $t$-th moments are $1$ and $d^{1-t}$, respectively.
Existing upper bounds, however, do not match this lower bound for $t\geq 3$.
Classical-shadow methods also allow higher-order moment estimation~\cite{Huang2020ClassicalShadows,Elben2020MixedStateEntanglement,Seif2023ShadowDistillation}, with global randomized measurements giving an $\mathcal{O}(d)$ sample upper bound at fixed accuracy~\cite{Pelecanos2026UnentangledSpectrum}.
As part of their work on entanglement negativity, Zhou, Zeng, and Liu obtained an $\mathcal{O}(d^{2/3})$ single-copy protocol for the third moment~\cite{Zhou2020Negativity}.
More recently, Li et al.\ developed collision-based nonlinear estimation, a nonadaptive single-copy protocol achieving an $\mathcal{O}_t(d^{1-1/t})$ upper bound for every fixed integer $t\geq 2$~\cite{Li2026SingleSetting}.
Allowing joint operations on two copies enables estimation of both the third and fourth moments using $\mathcal{O}(\sqrt d)$ copies at constant additive error~\cite{Zhou2024HybridFramework,Ye2025ReplicaAdvantage}.
Within the single-copy model, however, the available bounds leave a polynomial gap for higher moments and do not determine whether adaptivity alone can improve the sample complexity.
In this work, we address the following question:
\begin{quote}
\emph{What is the optimal sample complexity of estimating $\operatorname{Tr}(\rho^t)$ using adaptive single-copy measurements?}
\end{quote}

\subsection{Our results}
\label{sec:our-results}

For every fixed integer moment order, we establish matching upper and lower bounds that apply to arbitrary quantum states and allow fully adaptive single-copy measurements.

\begin{theorem}[Optimal single-copy estimation of state moments]
\label[theorem]{thm:optimal-moments}
For every fixed integer $t\geq2$ and sufficiently small constant $\epsilon>0$, the optimal worst-case sample complexity of estimating $\operatorname{Tr}(\rho^t)$ for an unknown $d$-dimensional quantum state $\rho$, to additive error $\epsilon$ with success probability at least $2/3$, is
\[
    \Theta\!\left(
        d^{\frac{\lceil\log_2 t\rceil}{1+\lceil\log_2 t\rceil}}
    \right).
\]
Here measurements act on one copy at a time and may adapt to previous classical outcomes, but no quantum information is retained between copies.
\end{theorem}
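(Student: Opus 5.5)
We prove the two directions separately, writing $h=\lceil\log_2 t\rceil$, so that $2^{h-1}<t\le 2^h$. The target is $N=\Theta(d^{h/(1+h)})$.

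\textbf{Upper bound (adaptive filtering, induction on $h$).} For $h=1$ ($t=2$) we use the known nonadaptive $\mathcal{O}(\sqrt d)$ purity estimator. For $h\ge2$ we reduce order $t$ to order $\lceil t/2\rceil\le 2^{h-1}$ by splitting the spectrum at a threshold $\tau=d^{-\alpha}$. Eigenvalues below $\tau$ contribute at most $\tau^{t-1}\sum_i\lambda_i\le\tau^{t-1}$ to $\Tr(\rho^t)$, so the tail can be neglected once the threshold is chosen suitably relative to $\epsilon$. There are at most $1/\tau$ heavy eigenvalues. We locate the heavy subspace adaptively in two stages:
\begin{itemize}
\item In a first stage, spend $M$ copies on random-basis (shadow-type) measurements to learn a projector $P$ of rank $r\lesssim 1/\tau$ approximately containing the heavy eigenvectors.
\item In the remaining stage, measure each copy with $\{P,1-P\}$. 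Conditioned on the outcome $P$, the post-measurement state is the normalized compression $\sigma=P\rho P/\Tr(P\rho)$, which lives in dimension $r$.
\end{itemize}
We then estimate $\Tr(\sigma^{t})$ inside the $r$-dimensional space. Rather than recursing literally on order $t$, we use the structure $\Tr(\rho^t)\approx\Tr((P\rho P)^t)$ and write it as an inner product of $\lceil t/2\rceil$- and $\lfloor t/2\rfloor$-powers. This is estimated with the inductive hypothesis at cost $r^{(h-1)/h}$, combined with the collision estimator of \cite{Li2026SingleSetting} on the filtered space. Balancing the two stages, $M\sim d/r^{\,\cdot}$ against $r^{(h-1)/h}$, should give $r=d^{h/(h+1)}\cdot$(power) and total cost $d^{h/(h+1)}$.

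We must also control how errors in $P$ propagate to $\Tr(\rho^t)$. Because the unknown cross terms between heavy and light parts are bounded by $\tau^{\,t-1}$-type quantities, they stay within the error budget for a suitably chosen threshold $\tau$. This is the delicate bookkeeping step of the upper bound.

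\textbf{Lower bound, adaptive.} We construct two ensembles $\mathcal{E}_0,\mathcal{E}_1$ of states. Each is a Haar-random unitary conjugation of a fixed spectrum, with $r\approx d^{h/(h+1)}$-scale support plus a flat part. The spectra are chosen to agree in their first $2^{h-1}$ moments, i.e.\ to match $\Tr(\rho^s)$ for all $s\le 2^{h-1}<t$, while their $t$-th moments differ by a constant. Such spectra exist by a Carathéodory/Gauss-quadrature argument on two discrete measures, taking the top level at scale $\Theta(1)$ mass.

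The key step is bounding the total variation between the $N$-round adaptive transcript distributions. Instead of comparing each ensemble to $I/d$, we interpolate $\rho_\theta$ smoothly between the two spectra and bound $\frac{d}{d\theta}$ of the likelihood via a martingale/chi-square argument over the adaptive tree. Weingarten calculus shows that the per-round variance is controlled by the lowest unmatched moment. Because the first $2^{h-1}$ moments match, the leading discrepancy in the $N$-th moment of the outcome likelihood appears only at order $N^{h+1}/d^{h}$. We therefore obtain $\mathrm{TV}\lesssim N^{h+1}/d^h$, forcing $N=\Omega(d^{h/(h+1)})$. We expect this Weingarten/adaptive-tree estimate to be the main obstacle: one has to show that the matched moments cancel all lower-order terms uniformly over adaptive POVM choices. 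We would first try it for $h=2$ (matching the first two moments) by direct expansion, and then generalize via the cumulant structure of permutation sums.
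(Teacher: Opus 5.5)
Your upper bound fails at its first step. It needs a known projector $P$ of rank $r\ll d$ that nearly contains the heavy eigenvectors, so that $\Tr(\rho^t)\approx\Tr((P\rho P)^t)$. No protocol can produce such a $P$ from $o(d)$ copies. Take $\rho=\lambda\psi\psi^\dagger+(1-\lambda)I/d$ with constant $\lambda$ and Haar-random $\psi$; these copies can be simulated from copies of $\psi$. For $M$ copies of a Haar-random pure state, the posterior mean of $\psi\psi^\dagger$ after any outcome (even of a collective measurement) with effect $E$ is $(\Tr(E)\,I+ME^{(1)})/((d+M)\Tr E)$. Here $E$ is compressed to the symmetric subspace and $E^{(1)}$ is its one-copy marginal. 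Hence every rank-$r$ estimate obeys $\mathbb E\langle\psi|P|\psi\rangle\leq(M+r)/(M+d)$, and capturing even a constant fraction of the spike forces $M+r=\Omega(d)$. Below that budget, $P\rho P$ keeps only a vanishing and unknown fraction of each heavy eigenvector, so $\Tr((P\rho P)^t)$ has no usable relation to $\Tr(\rho^t)$. The later step, writing $\Tr(\rho^t)$ as an inner product of half powers estimated by the induction hypothesis, is also undefined, since that hypothesis supplies only scalar moments.

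The missing idea is that nothing needs to be located. The paper forms $S_x=\sum_{i\leq n}v_iv_i^\dagger$ from $n=\lceil d^b\rceil$ covariant-measurement outcomes, mixed with Haar vectors so that $\mathbb E\,vv^\dagger=(I+x\rho)/(d+x)$. It then filters fresh copies with the Kraus operator $\sqrt{S_x/L}$. The span of the $v_i$ also captures only a vanishing fraction of every eigenvector. Yet $\sqrt\rho S_x\sqrt\rho$ concentrates in Hilbert--Schmidt norm around $\frac{n}{d+x}(\rho+x\rho^2)$, and $\sqrt{S_x}\rho\sqrt{S_x}$ has the same nonzero spectrum. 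So the filtered state $\sigma_x$, of dimension at most $n$ and success probability $\Theta(n/d)$, satisfies $p_m(\sigma_x)\approx\Tr[(\rho+x\rho^2)^m]/(1+xp_2)^m$. Interpolating $\Tr[(\rho+x\rho^2)^m]=\sum_j\binom mj x^jp_{m+j}$ in $x$ returns $p_{m+1},\dots,p_{2m}$. Balancing the cost $n+dn^{a-1}$ gives the recursion $a\mapsto1/(2-a)$ from $a=1/2$, hence the exponent $h/(h+1)$.

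For the lower bound, your outline is the paper's strategy: a smooth path between moment-matched spectra and a bound on the derivative of the transcript likelihood. But the estimate that produces the exponent is only asserted, and ``the lowest unmatched moment controls the per-round variance'' does not yield $h=\lceil\log_2t\rceil$. Counting of that kind gives the nonadaptive bound $\ell!\,2^\ell T^\ell/d^{\ell-1}$ for the first surviving block of length $\ell$. For $t\geq5$ that would give a lower bound above the adaptive upper bound, so it cannot hold adaptively.

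What must be proved is the adaptivity-robust bound $\mathbb E|\mathcal C_{\ell,T}|^2=\mathcal O_\ell(T^{\lceil\log_2\ell\rceil+1}/d^{\lceil\log_2\ell\rceil})$. Here $\mathcal C_{\ell,T}=\sum_{\mathrm{distinct}}\Tr(Y_{i_1}\cdots Y_{i_\ell})$ with $Y_i=(P_i-I/d)/(d\Tr(P_i\rho_\theta))$. These cycle sums enter the score with coefficients $\alpha^\ell p_\ell'(\lambda(\theta))/\ell$ through a Gaussian cumulant expansion. The paper obtains the bound from the martingale increment $\ell\Tr(Y_n\mathcal A_{\ell-1,n-1})$ and a chain recursion. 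That recursion keeps half of each product in Hilbert--Schmidt norm and the rest in operator norm, gaining one factor $T/d$ per halving.

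This requires ingredients your plan lacks:
\begin{itemize}
\item A path along which all orders below $r=2^{h-1}+1$ stay constant, not only at the endpoints (the cosine path $\lambda_j(\theta)=[1+\cos(2\pi j/r+\theta)]/r$), so that lower-order cycles drop out of the derivative.
\item Deterministic operator-norm control of subset sums, obtained by truncating at $\|\sum_iP_i\|_{\mathrm{op}}<\kappa$ and bounding the exceptional records through a posterior covariance bound and matrix Chernoff.
\item A separate tail argument for long cycles.
\end{itemize}
Finally, the hard spectra must carry $\Theta(1)$ eigenvalues on a constant-rank spike (rank $2^{h-1}+1$, weight $\alpha$, over $I/d$). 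Eigenvalues of order $d^{-h/(h+1)}$ contribute $o(1)$ to $\Tr(\rho^t)$ and cannot create a constant gap.
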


Our result reveals that the optimal dimension exponent is constant over dyadic ranges of moment orders.
In particular, the third and fourth moments both require $\Theta(d^{2/3})$ copies, while all orders from five through eight require $\Theta(d^{3/4})$ copies.
Thus, increasing the degree of the nonlinear quantity need not increase its asymptotic dependence on the state dimension; the hidden constants can still depend on the moment order and the target precision.

\begin{figure}[htbp]
    \centering
    \includegraphics[width=0.8\linewidth]{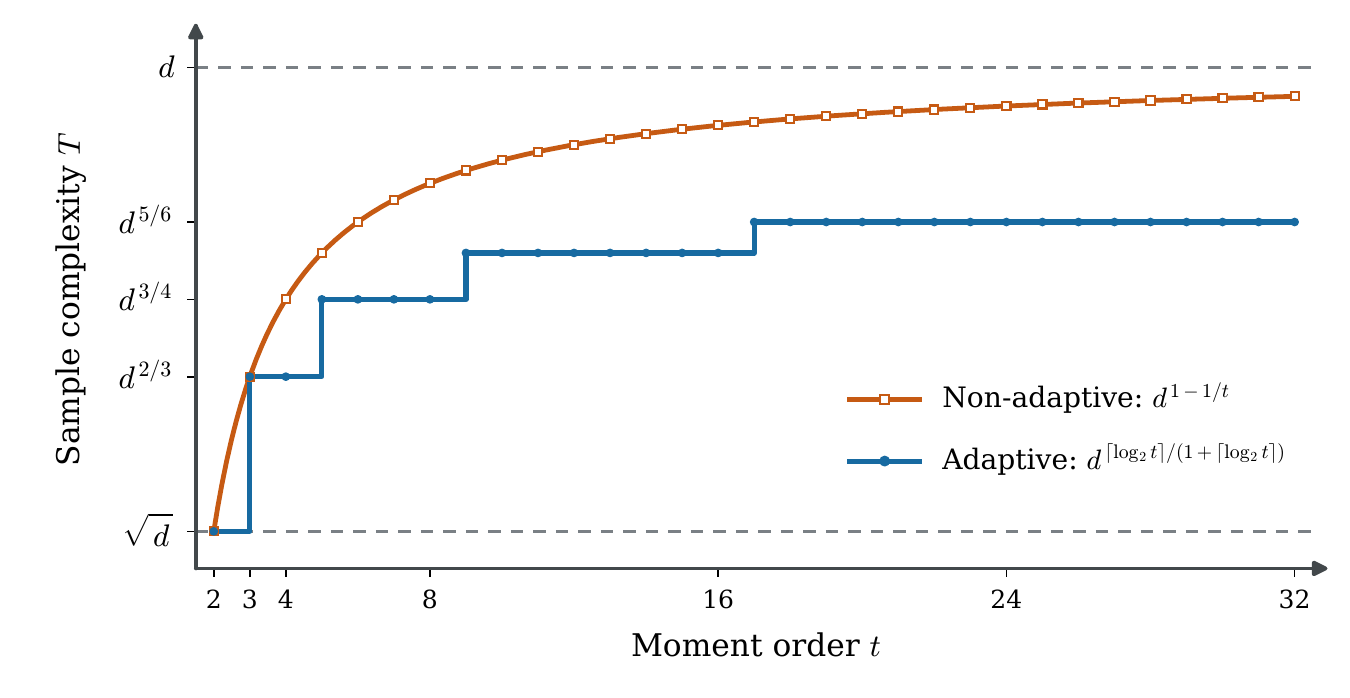}
    \caption{Comparison between sample complexities of adaptive and nonadaptive protocols.}
    \label{fig:overview}
\end{figure}

Our main technical contribution is a lower-bound method that directly compares two ensembles of quantum states.
A common approach in learning-tree analyses bounds their transcript distance by comparing each ensemble with a single reference state or channel, which are typically chosen as the maximally mixed state or fully depolarizing channel, and applying the triangle inequality~\cite{Chen2022QuantumMemory}.
In our setting, however, each hard ensemble can already be distinguished from the maximally mixed state using $\mathcal{O}(\sqrt d)$ copies through purity estimation, even while the two ensembles remain difficult to distinguish from each other.
This reference-based comparison therefore cannot establish lower bounds beyond the $\sqrt d$ scale.

We overcome this obstruction by constructing a smooth interpolation between hard ensembles built from spectra with matching lower-order moments but separated target moments.
We differentiate the measurement-record probabilities along this interpolation.
On typical records, moment matching cancels the low-order terms, while martingale second-moment estimates control the remaining matrix contributions to the derivative, uniformly over adaptive measurement strategies.
Integrating this bound and adding the small probabilities of exceptional records at the two endpoints controls their total variation distance directly.
Related smooth-path techniques have previously been used to characterize statistical distinguishability~\cite{Wootters1981StatisticalDistance} and derive lower bounds for Hamiltonian learning~\cite{Zhou2026AnsatzFreeHamiltonian}.
Our combination of moment matching and martingale estimates yields matching lower bounds beyond $\sqrt d$, without requiring either ensemble to remain close to the maximally mixed state, and provides a method for studying other learning problems involving two nontrivial ensembles.

Our upper bound uses adaptive state filtering to reduce higher-order moment estimation to simpler estimation problems in smaller state spaces.
Related projection-based filtering techniques have also been used for quantum spectrum estimation~\cite{Pelecanos2026UnentangledSpectrum}.
Preliminary single-copy measurement outcomes specify filters onto lower-dimensional subspaces, which are then applied to fresh copies of the unknown state.
The moments of the resulting conditional states encode information about higher-order moments of the original state, allowing us to recover the target through lower-order estimates and the corresponding filtering probabilities.
We apply this reduction recursively, with purity estimation as the base case.
The analysis accounts for the normalization introduced by conditioning, the copies required to obtain successful filtering outcomes, and the statistical errors accumulated through the recursion.
Optimizing the filter dimensions and the allocation of copies balances these costs and gives the claimed upper bound.
All filtering and measurement operations act on individual copies, with only classical information retained between rounds.

The main theorem also yields consequences for multistate overlaps, weighted moments, partial-transpose moments, and spectrum estimation at fixed rank.
Ordered multistate overlaps, such as $\operatorname{Tr}(\rho_0\rho_1\rho_2\cdots)$, enter measures of noncommutativity~\cite{Ferro2018MeasuringQuantumness} and experimentally accessible bounds on quantum fidelity~\cite{Miszczak2009SubSuperFidelity}.
Weighted moments are used in quantum error mitigation through virtual purification~\cite{Huggins2021VirtualDistillation,Koczor2021ExponentialErrorSuppression}, while partial-transpose moments provide entanglement witnesses~\cite{Elben2020MixedStateEntanglement,Zhou2020Negativity}.
We also obtain an upper bound for spectrum estimation~\cite{Keyl2001Spectrum,ODonnell2021QuantumSpectrumTesting} under a fixed-rank promise.

\begin{corollary}[Consequences for nonlinear properties and spectra]
\label[corollary]{cor:moment-consequences}
Let $t,r\geq2$ be fixed integers, and consider adaptive single-copy estimation with sufficiently small constant error $\epsilon>0$ and success probability at least $2/3$.
Estimating the ordered overlap $\operatorname{Tr}(\rho_0\cdots\rho_{t-1})$ of $t$ unknown $d$-dimensional states, with error measured in complex modulus, has optimal worst-case total sample complexity $\Theta(d^{\lceil\log_2 t\rceil/(1+\lceil\log_2 t\rceil)})$.
For a known Hermitian observable $O$ with $\|O\|_{\infty}\leq1$, estimating $\operatorname{Tr}(O\rho^t)$ has the same optimal worst-case sample complexity over $(O,\rho)$.
Estimating $\operatorname{Tr}[(\rho_{AB}^{\mathrm{T}_A})^t]$ also has this optimal sample complexity when one subsystem has fixed dimension, where $d=\dim(A)\dim(B)$.
If $\operatorname{rank}(\rho)\leq r$, its sorted spectrum can be estimated to $\ell_1$ error $\epsilon$ using $\mathcal{O}(d^{\lceil\log_2 r\rceil/(1+\lceil\log_2 r\rceil)})$ copies, without any eigenvalue-gap assumption.
\end{corollary}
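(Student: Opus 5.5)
The plan is to derive each of the four claims from \cref{thm:optimal-moments} by reductions in both directions. Throughout I write $h=\lceil\log_2 t\rceil$ and $\alpha_t=h/(1+h)$.

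\emph{Ordered overlaps.} The lower bound is immediate. Setting $\rho_0=\cdots=\rho_{t-1}=\rho$ gives $\operatorname{Tr}(\rho^t)$, and a protocol for the overlap using $N$ total copies yields a moment protocol using $N$ copies of $\rho$. For the upper bound, I will rerun the recursive filtering scheme behind \cref{thm:optimal-moments}, but assign copies of the different states to the different tensor slots. At the base level the purity estimator becomes an estimator of $\operatorname{Tr}(\sigma\tau)$ from single copies of $\sigma$ and $\tau$; collision/randomized-measurement statistics already handle this with $\mathcal{O}(\sqrt d)$ samples. At each filtering level, the filter projector is built from outcomes on copies of one state and then applied to fresh copies of another. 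The conditional states therefore encode consecutive factors of the cyclic product $\rho_0\rho_1\cdots$, exactly as $\rho^{2}$ arises from $\rho\cdot\rho$ in the symmetric case. Complex modulus error is handled by estimating real and imaginary parts separately. The imaginary part is accessed by inserting a phase (e.g.\ using the Hermitian and anti-Hermitian parts of the relevant bilinear estimator). The step I expect to be the main obstacle is checking that the filtering identities do not rely on all copies being equal. In the symmetric proof, a quantity like $\operatorname{Tr}(P\rho P\rho)$ normalized by $\operatorname{Tr}(P\rho)$ may use symmetry. I would first try to write every identity as a multilinear expression in the $\rho_i$ and verify the unbiasedness and variance bounds term by term. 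The variance bounds should only use $\|\rho_i\|_\infty\le1$ and trace one, so they should carry over.

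\emph{Weighted moments.} The lower bound follows by taking $O=I$. For the upper bound, I reduce to the overlap result with the $(t+1)$-fold product $\operatorname{Tr}(O\rho\cdots\rho)$. This is not a state overlap because $O$ is not a state. I would write $O=O_+-O_-$ with $0\le O_\pm\le I$ and then $O_\pm=a_\pm\sigma_\pm$ for known states $\sigma_\pm$ and $a_\pm\le d$. That scaling by $d$ is too lossy, so instead I will fold $O$ into the final measurement. At the last level of the recursion, the readout of $\operatorname{Tr}(\rho^t)$ is the expectation of some bounded-norm estimator, and I replace it by the same estimator with $O$ inserted. Because $\|O\|_\infty\le1$, the variance bounds are unchanged.

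\emph{Partial transpose.} The key identity is that $\operatorname{Tr}[(\rho^{T_A})^t]$ is a $t$-fold ordered contraction in which each copy is partially transposed. If $\dim A=k$ is fixed, I expand $\rho^{T_A}=\sum_{i,j}|j\rangle\langle i|_A\otimes\rho_{ij}$. This turns the quantity into a sum of at most $k^{2t}=O(1)$ ordered overlaps of the blocks $\rho_{ij}$ on $B$. Each block is a known linear image of $\rho$, and it can be realized as a combination of $O(1)$ normalized states obtained by measuring $A$ in a few bases, with probabilities that can be estimated. The overlap upper bound with $\dim B\le d$ then gives $O(d^{\alpha_t})$. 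For the lower bound, I take $\rho=|0\rangle\langle0|_A\otimes\sigma_B$, so that the quantity equals $\operatorname{Tr}(\sigma^t)$ in dimension $d/k$.

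\emph{Spectrum at rank $r$.} Newton's identities determine the elementary symmetric polynomials, and hence the $r$ nonzero eigenvalues, from $p_1,\dots,p_r$. Root perturbation without a gap assumption is only Hölder, but that is acceptable for constant $\epsilon$: the map from moments to the sorted spectrum in $\ell_1$ is uniformly continuous on the compact simplex. So a sufficiently small constant error $\delta(\epsilon,r)$ in each $p_s$, $s\le r$, suffices. By \cref{thm:optimal-moments}, this costs $\sum_{s\le r}O(d^{\alpha_s})=O(d^{\alpha_r})$ copies, and a union bound after median amplification completes the argument.
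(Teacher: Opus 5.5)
The reductions for the lower bounds (all inputs equal, $O=I$, and $\ket{0}\bra{0}_A\otimes\sigma_B$) and for the fixed-rank spectrum agree with the paper. However, your upper bounds for ordered overlaps and weighted moments each have a genuine gap.

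\emph{Ordered overlaps.} You claim that drawing pilots from one state and filtering another makes the conditional states ``encode consecutive factors of the cyclic product.'' This is not correct. With pilots from $\rho_a$ and the filter applied to $\rho_b$, the output is only spectrally close to the normalized $\sqrt{\rho_b}(I+x\rho_a)\sqrt{\rho_b}$. So $p_m(\sigma_x)$ encodes $\operatorname{Tr}[(\rho_b+x\rho_a\rho_b)^m]$, whose coefficients are real, cyclically symmetrized words in only two letters. An inner overlap estimator between differently filtered outputs would involve $\operatorname{Tr}(\sqrt{S}\rho_b\sqrt{S}\sqrt{S'}\rho_c\sqrt{S'})$, which the identity $p_m(\sigma_x)=p_m(B_x/\operatorname{Tr}B_x)$ does not control. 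Polarizing moments of mixtures $\sum_jc_j\rho_j$ gives only the sum over all orderings. For $t=3$ that sum is $6\operatorname{Re}z$, and for $t\geq4$ it mixes inequivalent cyclic orders. Nothing in your scheme produces the imaginary part either: the pilot covariances $(I+x\rho_a)/(d+x)$ are Hermitian, and ``Hermitian and anti-Hermitian parts'' is not a construction.

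What is missing is a symmetry-breaking device. The paper uses labelled mixtures $\Omega_S=|S|^{-1}\sum_{j\in S}\rho_j\otimes A_j$ with $\ket{a_j}=(\ket{j}+e^{i\phi_j}\ket{j+1})/\sqrt2$. Nonadjacent auxiliary states are orthogonal, so inclusion--exclusion gives $\sum_S(-1)^{t-|S|}|S|^tp_t(\Omega_S)=t\,2^{1-t}\operatorname{Re}[e^{-i\sum_j\phi_j}z]$, and the total phase selects the real or imaginary part. The moment theorem is then applied as a black box in dimension $td$. The auxiliary states are absorbed into the effects $(I\otimes\bra{a_j})E_y(I\otimes\ket{a_j})$, so the recursion itself is never modified.

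Your partial-transpose argument expands into complex ordered overlaps of the non-Hermitian blocks $\rho_{ij}$, so it inherits this gap. It would also need to discard blocks with small preparation probability. The paper instead writes $\mathrm{T}_A=(q+1)\Phi_A-q\mathcal{D}_A$ with two channels. Since $\rho_{AB}^{\mathrm{T}_A}=(1-x)\omega_0+x\omega_1$ at $x=q+1$, extrapolating $\operatorname{Tr}[((1-x)\omega_0+x\omega_1)^t]$ from $t+1$ nodes in $[0,1]$ even bypasses overlaps altogether.

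\emph{Weighted moments.} Inserting $O$ into the final readout fails because the recursion is purely spectral. The filtered state $\sigma_x=\sqrt{S_x}\rho\sqrt{S_x}/\operatorname{Tr}(S_x\rho)$ lives in a random $n$-dimensional subspace. It agrees with $H_x/\operatorname{Tr}H_x$ in eigenvalues only, not as a matrix, so $O$ has no counterpart at the inner levels. Transporting $O$ through $V=\sqrt{S_x}\sqrt\rho$ would require $\sqrt{S_x}\sqrt\rho\,O\sqrt\rho\sqrt{S_x}$, which involves the unknown $\sqrt\rho$. Moreover, the final output is an interpolated and renormalized combination of nested estimates, not the mean of one bounded estimator. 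So the issue is the bias of the modified estimator, not its variance.

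The paper applies $O$ first, through the known filter $K_x=\sqrt{(I+xO)/2}$, whose success probability $q_x$ lies in $[1/4,3/4]$. It then uses that $(2q_x)^tp_t(\sigma_x)=\operatorname{Tr}[(\rho+x\sqrt\rho O\sqrt\rho)^t]$ is a polynomial of degree at most $t$ with linear coefficient $t\operatorname{Tr}(O\rho^t)$. Estimating ordinary moments at $t+1$ nodes and interpolating recovers the weighted moment with constant overhead.
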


Ordered overlaps are recovered from ordinary moments of mixtures augmented with suitable known auxiliary states to isolate the desired ordering; these operations preserve single-copy access, and the matching lower bound follows by taking all input states equal.
Weighted moments follow by applying filters determined by $O$ and recovering the desired quantity through polynomial interpolation.
For partial-transpose moments with one subsystem of fixed dimension, we express the transpose map as a linear combination of physical channels and expand the target into ordered overlaps of their output states, which are estimated using the preceding overlap protocol.
Finally, a rank-$r$ spectrum is determined by its first $r$ moments, and estimating these moments to sufficiently small, dimension-independent accuracy gives the stated spectrum-estimation upper bound.

Our state-filtering protocol is inherently adaptive: earlier measurement outcomes determine the filters applied to later copies, and this dependence recurs at each level of the construction.
A natural question is therefore whether adaptivity is necessary to achieve the optimal complexity.
We answer this by determining the optimal complexity when all measurements must be chosen in advance.

\begin{theorem}[Optimal nonadaptive estimation of state moments]
\label[theorem]{thm:nonadaptive-moments}
For every fixed integer $t\geq2$ and sufficiently small constant $\epsilon>0$, the optimal worst-case sample complexity of estimating $\operatorname{Tr}(\rho^t)$ for an unknown $d$-dimensional quantum state, to additive error $\epsilon$ with success probability at least $2/3$, using nonadaptive single-copy measurements is $\Theta(d^{1-1/t})$.
Conditioned on an input-independent classical random seed, all positive operator-valued measures (POVMs) are fixed before any outcomes are observed; arbitrary classical postprocessing is allowed.
\end{theorem}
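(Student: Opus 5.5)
The upper bound $\mathcal{O}_t(d^{1-1/t})$ is the collision-based nonadaptive protocol of Li et al.~\cite{Li2026SingleSetting}, so the work is the matching lower bound $\Omega(d^{1-1/t})$ against arbitrary nonadaptive single-copy protocols. The approach is to build two ensembles of states whose $t$-th moments are separated by a constant but whose nonadaptive measurement transcripts are statistically close unless $N=\Omega(d^{1-1/t})$.

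\textbf{Ensembles.} Each state is a random unitary conjugate of a fixed spectrum. Take two spectra $\lambda^{(0)}$ and $\lambda^{(1)}$ on $\mathbb{R}^d$ with the following properties:
\begin{itemize}[nosep]
\item each has $\mathcal{O}(1)$ "heavy" eigenvalues of order $d^{-1/t}$-scaled mass, and the remaining mass is spread nearly uniformly;
\item their power sums $\sum_i\lambda_i^k$ agree for $k=1,\dots,t-1$;
\item $\sum_i\lambda_i^t$ differs by a constant.
\end{itemize}
Concretely, we rescale a pair of discrete distributions on $m=\Theta(1)$ atoms with matching first $t-1$ moments and distinct $t$-th moments. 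The atoms are placed so that $\Tr(\rho^k)$ for $k<t$ is dominated by matched contributions, while $\Tr(\rho^t)$ is a constant dominated by the heavy part. A natural choice is the regime where $\rho$ has about $d^{1-1/t}\cdot$const heavy directions with eigenvalues $\sim d^{-(1-1/t)}$, mixed with a uniform background. Here collisions of order $t$ among $N$ samples appear only once $N^t\cdot d^{-(t-1)}\gtrsim 1$.

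\textbf{Reduction and expansion.} For nonadaptive POVMs fixed in advance, the transcript distribution is
\[
\mathbb{E}_U\prod_{j=1}^N \Tr\bigl(M^{(j)}_{x_j} U\Lambda U^\dagger\bigr).
\]
This is a degree-$N$ polynomial in Haar moments, namely $\Tr\bigl(\bigotimes_j M_{x_j}\,\mathbb{E}_U(U\Lambda U^\dagger)^{\otimes N}\bigr)$.

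We write $\Lambda=\frac{1}{d}I+\Delta$ and expand $\mathbb{E}_U(U\Delta U^\dagger)^{\otimes S}$ over subsets $S$ of copies via Weingarten calculus, in terms of permutations. The contribution of a permutation depends on the spectrum only through the products $\prod_c \Tr(\Delta^{|c|})$ over its cycles. Moment matching up to order $t-1$ means that every permutation all of whose cycles have length less than $t$ contributes identically to both ensembles, up to Weingarten error terms.

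So the difference of the two transcript distributions is controlled by terms containing at least one cycle of length at least $t$. We bound the $\chi^2$ divergence (or the TV distance via a likelihood ratio against one ensemble, not the maximally mixed state) by summing over such permutations. Nonadaptivity enters here: for fixed POVMs, a $t$-cycle on copies $j_1,\dots,j_t$ contributes a term like $\Tr(M_{x_{j_1}}\cdots M_{x_{j_t}})$, whose sum over outcomes is at most $d$ times normalization $d^{-t}$ factors. There are $\binom{N}{t}$ such cycles, so the leading difference scales as $N^t\, d^{1-t}\cdot\mathrm{poly}$. This is small unless $N\gtrsim d^{1-1/t}$.

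\textbf{Main obstacle.} The hard step is controlling the full Weingarten sum uniformly over arbitrary fixed POVMs, including cross terms and products of many long cycles, together with the $\mathcal{O}(N^2/d)$-type Weingarten corrections. Since $N\gg\sqrt d$, these corrections are not negligible individually and must cancel between the ensembles.

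I would first try the paper's smooth-interpolation idea, specialized to the nonadaptive case. Interpolate the spectrum $\lambda(s)$ between the two endpoints while preserving moments $1,\dots,t-1$, then differentiate the transcript probability in $s$. Because the POVMs are fixed, the derivative is a sum of fixed multilinear terms rather than martingale terms. Moment matching kills every term not involving a $t$-fold overlap, and the remaining terms can be bounded by Cauchy--Schwarz using $\sum_x\|M_x\|_1\le d$. Integrating over $s$ and discarding atypical records, where a heavy eigenvector is hit too often, then gives TV distance $o(1)$ for $N=o(d^{1-1/t})$, which proves the lower bound.
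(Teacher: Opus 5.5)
There is a genuine gap. Your outline (moment-matched smooth spectral path, differentiate the record density, let matching remove low orders, bound the surviving cycles, discard atypical records) is the paper's architecture. However, the hard instance is inconsistent, and the key estimate is wrong.

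\textbf{The instance.} Your ``natural choice'' of $\Theta(d^{1-1/t})$ heavy directions with eigenvalues $\sim d^{-(1-1/t)}$ gives $\Tr\rho^t\le\|\rho\|_{\mathrm{op}}^{t-1}\to0$. So there is no constant gap in the $t$-th moment. You need $\mathcal{O}(1)$ eigenvalues of constant size. The paper uses $r=t$ weights $\alpha\lambda_j(\theta)$ with $\lambda_j(\theta)=[1+\cos(2\pi j/t+\theta)]/t$, an explicit path along which $p_1,\dots,p_{t-1}$ are constant, on a background $\approx(1-\alpha)I/d$. The random eigenvectors come from unnormalized complex Gaussian vectors with a smooth cutoff, not from Haar conjugation. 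This choice removes what you call the main obstacle. The Gaussian determinant formula makes the log-generating function exactly $\sum_\ell\alpha^\ell p_\ell(\lambda)\Tr(\sum_i z_iX_i)^\ell/\ell$, so the cumulants are single cycles with no Weingarten corrections.

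\textbf{The bound on surviving terms.} With absolute values, $\sum_x\|M_x\|_1\le d$ and $\|M_x\|_{\mathrm{op}}\le1$ only give $\sum_{x_1,\dots,x_t}|\Tr(M_{x_1}\cdots M_{x_t})|\le d^t$, which is the trivial bound. The value $d$ holds only for the signed sum, by POVM completeness, or for commuting basis measurements. Signed sums do not control total variation. For a covariant rank-one POVM, each $t$-cycle has magnitude $\asymp d^{-t/2}$. Summing absolute values over the $\binom Nt$ index sets then gives $N^td^{-t/2}$, which yields only an $\Omega(\sqrt d)$ bound. In addition, each cycle in $\partial_\theta p_\theta(H)$ is multiplied by the likelihood of the remaining rounds, so the terms are not fixed multilinear functions of the outcomes.

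\textbf{What is missing.} Three steps are needed:
\begin{itemize}
\item[(i)] Divide by $p_\theta$ and use Bayes' formula to write $\partial_\theta\log p_\theta(H)=\mathbb{E}_\theta[\sum_{\ell\ge t}\alpha^\ell p_\ell'\widehat{\mathcal C}_{\ell,T}/\ell\mid H]$, up to an exponentially small error. Here the cycles are built from $\widehat Y_i=(P_i-I/d)/f_i$, because the leftover likelihood ratios become posterior averages of $\prod_{i\in J}f_i^{-1}$.
\item[(ii)] Apply conditional Jensen and Cauchy--Schwarz to reduce to the unconditional second moment $\mathbb{E}_\theta|\widehat{\mathcal C}_{t,T}|^2$ under the joint law of latent vectors and outcomes.
\item[(iii)] Use nonadaptivity here. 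Conditional on the latent state and seed, the $\widehat Y_i$ are independent and centered, so $\mathbb{E}[W_I\overline{W_J}]=0$ unless the tuples $I,J$ have the same index set. This gives $\mathbb{E}|\widehat{\mathcal C}_{t,T}|^2\le t!\,2^tT^t/d^{t-1}$, hence a contribution $\mathcal{O}(T^{t/2}/d^{(t-1)/2})$.
\end{itemize}
You also need to sum all degrees $\ell$ up to $T$. For $\ell\lesssim\log d$, the same second-moment bound decays geometrically. Beyond that, use a deterministic bound $|\widehat{\mathcal C}_{\ell,T}|\le d\,C^\ell$ on good records together with a small fixed $\alpha$. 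The exceptional records are those with $\|\sum_iP_i\|_{\mathrm{op}}\ge\kappa$, not records where a heavy eigenvector is hit too often. Their probability is controlled by a posterior-mean bound plus matrix Chernoff.
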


The upper bound is attained by nonadaptive collision-based estimation~\cite{Li2026SingleSetting}, while we prove the matching lower bound for arbitrary nonadaptive single-copy POVMs in \Cref{sec:proof-nonadaptive}.
Together with \Cref{thm:optimal-moments}, at any fixed accuracy below both theorems' thresholds, this establishes a strict advantage from adaptivity for every fixed $t\geq4$.
In particular, without adaptivity, the fourth moment requires $\Theta(d^{3/4})$ copies and is asymptotically harder to estimate than the third moment, which requires $\Theta(d^{2/3})$ copies.
Thus, the equality of their optimal dimension exponents relies on adaptive measurements.
We summarize the results of \Cref{thm:optimal-moments,thm:nonadaptive-moments} in \Cref{fig:overview}.

\subsection{Related works}
\label{sec:related-work}

Randomized measurements provide a practical route to estimating nonlinear properties with single-copy access~\cite{Elben2023RandomizedMeasurementToolbox}.
By combining outcomes from separately measured copies in classical postprocessing, these protocols construct estimators for nonlinear functions of the state.
Closely related classical-shadow methods allow many properties to be predicted from a common measurement record~\cite{Huang2020ClassicalShadows}.
Applications include state moments~\cite{VanEnk2012RandomMeasurements,Elben2018RandomQuenches,Elben2019LocalRandomizedMeasurements} and weighted moments $\operatorname{Tr}(O\rho^t)$, with protocols developed for both quadratic~\cite{Du2026OptimalRandomized} and higher-order quantities~\cite{Li2026SingleSetting}.
Randomized measurements also give access to state overlaps $\operatorname{Tr}(\rho\sigma)$, enabling comparisons of quantum states prepared on different experimental platforms~\cite{Elben2020CrossPlatformVerification}.
For correlation analysis and entanglement detection, they enable the estimation of the correlation overlap $\operatorname{Tr}[\rho_{AB}(\rho_A\otimes\rho_B)]$~\cite{Liu2022Correlation}, partial-transpose moments $\operatorname{Tr}[(\rho^{\mathrm{T}_A})^t]$~\cite{Zhou2020Negativity,Elben2020MixedStateEntanglement}, and more general permutation moments~\cite{Liu2022PermutationMoments}.
Further applications include measurable polynomial lower bounds on quantum Fisher information for quantum metrology~\cite{Rath2021QuantumFisherInformation,Yu2021RandomizedQuantumFisherInformation} and nonlinear many-body topological invariants for characterizing quantum phases~\cite{Elben2020TopologicalInvariants}.

The availability of such protocols, however, does not guarantee a small sample complexity.
For purity estimation, exponential separations between single-copy and joint measurements were established by Aharonov, Cotler, and Qi and by Chen, Cotler, Huang, and Li~\cite{Aharonov2022QuantumAlgorithmicMeasurement,Chen2022QuantumMemory}.
Huang et al.\ established similar separations for quantum principal component analysis, including the nonlinear task of estimating observables in the leading eigenstate of an unknown mixed state~\cite{Huang2022LearningExperiments}.
Ye, Liu, and Deng further showed that, for certain nonlinear functionals of fixed degree, access to just one additional replica can reduce the sample complexity at constant precision from exponential to constant in the number of qubits~\cite{Ye2025ReplicaAdvantage}.
Together, these results show that the number of copies that can be coherently processed together can fundamentally change the complexity of nonlinear property estimation.

Further work explores more detailed resource tradeoffs and broader estimation tasks.
Gong et al.\ established upper and lower bounds for purity estimation using two-copy measurements with limited quantum memory, quantifying the dependence on memory size and precision~\cite{Gong2026PurityInnerProduct}.
For low-rank states, access to copies of a purification can yield exponential savings over single-copy measurements of the target mixed state~\cite{Liu2024Purification}.
In the stronger model of coherent access to a purification-preparation oracle, estimating state moments admits a quadratic improvement in the dependence on moment order over sample-based estimation at fixed precision~\cite{Zhang2026MeasuringLess}.
The range of estimation tasks has also expanded beyond individual polynomial quantities.
Protocols using collective measurements or reusable quantum registers address the simultaneous estimation of multiple nonlinear quantities~\cite{Chen2026Simultaneous,Shi2025Simultaneous}, while studies allowing collective measurements analyze the complexity of entropy estimation~\cite{Acharya2020QuantumEntropy,Chen2026RenyiTsallis} and noninteger trace powers~\cite{Chen2026TracePowers}.

\subsection{Open problems}
\label{sec:discussion}

Our results determine the optimal dimension dependence of state-moment estimation with adaptive and nonadaptive single-copy measurements.
The comparison shows that classical adaptation is itself a useful resource: it makes the third and fourth moments equally costly, although their nonadaptive complexities differ.

A natural next question is how this picture changes when a small quantum memory is allowed to connect successive copies.
Ancillary qubits retained between rounds permit limited coherent interaction across copies and therefore go beyond the single-copy POVM model studied here.
Memory--sample tradeoffs have been established for Pauli-observable estimation and purity testing~\cite{Chen2024PauliTradeoffs,Gong2026PurityInnerProduct}.
These results motivate extending the present hierarchy to higher moments under explicit memory constraints.
In particular, it would be interesting to determine whether the relation between the complexities of third- and fourth-moment estimation changes at intermediate memory sizes.

We hope that our filtering protocols and lower-bound techniques will also find applications in other quantum learning problems.
Our analysis of filtering focuses on additive error; a natural question is how it performs under relative error and whether it can improve on existing classical-shadow-based bounds for moment estimation~\cite{Pelecanos2026UnentangledSpectrum}.
Combining filtering with classical shadows~\cite{Huang2020ClassicalShadows} may also broaden the range of efficiently estimable properties, particularly for tasks involving the simultaneous estimation of many nonlinear observables~\cite{Chen2026Simultaneous}.
Finally, it would be interesting to extend our smooth comparison of ensembles to channel-property learning, including unitarity estimation~\cite{Chen2023UnitarityEstimation} and Choi-state moments under restricted probe and ancilla access.

\section{Overview of techniques}
\label{sec:overview}

We first construct a single-copy protocol based on state filtering that achieves the upper bound.
We then prove the matching lower bound by smoothly interpolating between hard ensembles.
Martingale second-moment estimates control the probability derivative on typical records, while the remaining contribution to total variation is bounded by the probabilities of exceptional records at the endpoints.
Throughout this section, moment orders and target accuracies are fixed independently of the dimension, and we write $p_j(\rho)=\operatorname{Tr}(\rho^j)$.

\subsection{Upper bound: recursive state filtering}
\label{sec:overview-upper}

The central idea is to encode higher-order moments of $\rho$ into lower-order moments of a state supported on a smaller, classically known subspace.
The resulting reduction doubles the range of accessible moment orders at each recursive step.

\begin{lemma}[Doubling the accessible moment orders]
\label[lemma]{lem:moment-doubling}
Fix an integer $m\geq2$ and $1/2\leq a<1$.
Suppose that, for every dimension $D$, $p_m$ can be estimated on every $D$-dimensional state using $\mathcal{O}(D^a)$ adaptive single-copy measurements at any fixed additive accuracy and failure probability.
Then all moments $p_{m+1},\ldots,p_{2m}$ can be estimated using $\mathcal{O}(d^{1/(2-a)})$ copies of an arbitrary $d$-dimensional state, under the same accuracy convention.
The implicit constants may depend on $m,a$, the accuracy, and the failure probability, but not on the dimension or the input state.
\end{lemma}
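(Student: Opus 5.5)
The plan is to reduce estimation of $p_{m+1},\ldots,p_{2m}$ on a $d$-dimensional state to estimation of $p_m$ (and lower moments) on a state supported on a random, classically known subspace of dimension $k\ll d$. Then $k$ is chosen to balance the cost of building the subspace against the cost of running the assumed $\mathcal{O}(k^a)$ protocol after filtering.

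\emph{Construction of the filter.} Measure $k$ copies of $\rho$ with the Haar-random rank-one POVM $\{d\,\ket{\psi}\bra{\psi}\,d\psi\}$. The outcomes $\psi_1,\ldots,\psi_k$ have density $d\bra{\psi}\rho\ket{\psi}$, so $\mathbb{E}\ket{\psi_i}\bra{\psi_i}=(\rho+I)/(d+1)$, and higher joint moments are given by the standard symmetric-subspace formula. Let $P$ be the projector onto $\mathrm{span}\{\psi_i\}$, which is classically known. Each fresh copy is then measured with the two-outcome instrument $\{P,I-P\}$. A success occurs with probability $q=\Tr(P\rho)$ and leaves the $k$-dimensional state $\sigma=P\rho P/q$. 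On the successful copies we run the assumed $p_m$ estimator, and the filtering frequency estimates $q$ itself. The key algebraic step is to compute $\mathbb{E}_P[q^{m}p_m(\sigma)]=\mathbb{E}_P\Tr[(P\rho P)^m]$, and suitable variants obtained by varying $k$ or using sub-spans, by Weingarten/symmetric-subspace calculus. I expect the result to be a polynomial in $p_1,\ldots,p_{2m}$ whose leading dimension-dependent coefficients isolate the terms $\Tr(\rho^{j})$ with $j\le 2m$. The reason is that each insertion of $\ket{\psi_i}\bra{\psi_i}$ between two factors of $\rho$ contributes, on average, an extra factor $\rho$ (the $\rho$ part of $(\rho+I)/(d+1)$), in addition to a trivial identity term.

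\emph{Recovering the moments.} Taking several values of $k$ (a constant number, depending only on $m$) and solving a fixed, well-conditioned linear system, or doing polynomial interpolation in $k$, expresses each of $p_{m+1},\ldots,p_{2m}$ as a combination of the measurable quantities $\mathbb{E}[q^{m}p_m(\sigma)]$ and of lower moments $p_2,\ldots,p_m$. The lower moments are themselves obtainable within the same cost, either directly by the filtering identity with a smaller exponent or inductively on $m$. In practice I would average over independent repetitions of the whole filter construction, with a constant number of repetitions. Because $P$ is random, concentration of $\Tr[(P\rho P)^m]$ around its mean is needed, or alternatively an unbiased estimator per draw of $P$ combined with a variance bound.

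\emph{Cost balancing.} For states with non-negligible high moments, $q$ is typically of order $k/d$ at worst (for example when $\rho$ is close to mixed, whose contribution is in any case suppressed). So obtaining $\mathcal{O}(k^a)$ successful filtered copies costs $\mathcal{O}((d/k)k^a)$ raw copies, on top of $k$ copies to build $P$. Setting $k=(d/k)k^a$ gives $k^{2-a}=d$, and the total is $\mathcal{O}(d^{1/(2-a)})$. The condition $a\ge1/2$ should guarantee that the statistical error of estimating $q$ (to relative accuracy, from $\sim d^{1/(2-a)}$ filtering trials) is dominated, so that additive accuracy in $q^m p_m(\sigma)$ translates into constant additive accuracy in the target moments.

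\emph{Main obstacle.} I expect the main obstacle to be the exact moment calculus for the projector onto a span of non-orthogonal random vectors. The span's projector is a nonlinear (Gram-inverse) function of the $\psi_i$, so the clean multilinear expectations above do not hold directly. I would first try replacing $P$ by the unnormalized operator $\sum_i\ket{\psi_i}\bra{\psi_i}$ implemented as a filter up to a known scaling: its operator norm is $\mathcal{O}(1)$ with high probability when $k\ll d$, which permits an instrument with Kraus operator proportional to its square root. Then $\Tr[(\sqrt{M}\rho\sqrt{M})^m]=\Tr[(M\rho)^m]$ expands into cyclic words, whose expectations are exact polynomials in the $p_j$. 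This route requires controlling the normalization and the tail event where $\|M\|$ is large, and showing that the resulting linear system for $p_{m+1},\ldots,p_{2m}$ has leading coefficients bounded away from zero uniformly in $d$.
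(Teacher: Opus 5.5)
\textbf{Verdict: genuine gap.} Your filter matches the paper's, but your recovery step has no way to separate $p_{m+1},\ldots,p_{2m}$ from one another. The paper avoids this by tuning the pilot covariance, which you are missing.

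\textbf{What matches.} The pilot vectors drawn from the covariant rank-one POVM, the Kraus operator $\sqrt{M/L}$ with an abort when $\|M\|_{\mathrm{op}}>L$, the identity $\Tr[(\sqrt M\rho\sqrt M)^m]=\Tr[(M\rho)^m]$, and the balance $k\approx(d/k)k^a$ all appear in the paper.

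\textbf{Why varying $k$ fails.} When every pilot vector comes from the covariant POVM, $\mathbb E\ket{\psi_i}\bra{\psi_i}=(I+\rho)/(d+1)$. Then $B=\frac{d+1}{k}\sqrt\rho M\sqrt\rho$ has mean $\rho+\rho^2$ and mean-square Hilbert--Schmidt error $O(1/k)$. The filtered state is isospectral to $B/\Tr B$. So for every $k$ in the range you need ($k=d^{\Theta(1)}$), with high probability $p_m(\sigma)$ lies within $O(k^{-1/2})$ of the single number $\Tr[(\rho+\rho^2)^m]/(1+p_2)^m$, whatever $k$ is. Interpolating in $k$ only reaches repeated-index corrections that are suppressed by powers of $1/k$. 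For $m=2$ one computes
\[
\frac{(d+1)^2}{k^2}\,\mathbb E\Tr[(M\rho)^2]=\Bigl(1-\frac1k\Bigr)(p_2+2p_3+p_4)+\frac{d+1}{k(d+2)}\,(1+3p_2+2p_3).
\]
Separating $p_3$ from $p_4$ would require resolving the $1/k$ term, i.e.\ accuracy much finer than $1/k$. A constant-accuracy $p_m$ subroutine cannot supply this. Taking $k$ constant instead loses concentration and pushes the cost to order $d$.

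\textbf{The other options don't help.} Sub-spans are just smaller $k$. Lower filtered moments $\Tr[(\rho+\rho^2)^j]$ with $j<m$, even if they were available, never involve $p_{2m-1}$ or $p_{2m}$. So the ``fixed, well-conditioned linear system'' you describe does not exist. Already for $m=2$ you learn only $2p_3+p_4$.

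\textbf{The missing idea: tune the pilot covariance by a dimension-free parameter.}
\begin{itemize}
\item In each pilot trial, use the covariant measurement with probability $\gamma_x=x(d+1)/(d+x)$, and otherwise draw a Haar vector classically. This gives $\mathbb E P_v=(I+x\rho)/(d+x)$.
\item Then $B_x=\frac{d+x}{n}\sqrt\rho S_x\sqrt\rho$ concentrates around $\rho+x\rho^2$. A single draw suffices, via the bound $d\,\mathbb E_\nu z^3\le 6/d^2$ and Markov. No Weingarten calculus and no averaging over draws is needed.
\item Multiply the filtered $m$th moment by $(1+x\widehat p_2)^m$, where $p_2$ is estimated separately with $O(\sqrt d)$ copies. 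This yields
\[
F_m(x)=\sum_{j=0}^m\binom mj x^jp_{m+j},
\]
a degree-$m$ polynomial in $x\in[0,1]$ whose coefficients are exactly the individual target moments.
\item Lagrange interpolation at the nodes $x_i=i/m$ is then a fixed linear system, independent of the dimension.
\end{itemize}
Because only normalized filtered moments are used, you also never need to estimate the success probability $q$ to relative accuracy.
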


The ingredient behind this reduction is a tunable transformation of moments.
For $x\in[0,1]$, consider the normalized matrix
\[
    \tau_x=\frac{\rho+x\rho^2}{1+xp_2(\rho)}.
\]
Its $m$th moment contains the moments of $\rho$ from order $m$ through order $2m$.
Our protocol constructs a low-dimensional state whose $m$th moment approximates that of $\tau_x$.

\begin{lemma}[Moment transformation by state filtering]
\label[lemma]{lem:moment-filter}
Fix $m\geq2$, $0<b<1$, and constants $0<\eta,\delta<1$, and let $n=\lceil d^b\rceil$.
For each $x\in[0,1]$, a preliminary stage using at most $n$ copies constructs a known filter whose conditional output $\sigma_x$ is supported on a known subspace of dimension at most $n$.
For sufficiently large $d$, with probability at least $1-\delta$, the filter succeeds with probability $\Theta(n/d)$ and
\[
    \left|
        p_m(\sigma_x)
        -\frac{\operatorname{Tr}[(\rho+x\rho^2)^m]}
                    {(1+xp_2(\rho))^m}
    \right|\leq\eta.
\]
On this event, a protocol requiring $M\geq1$ single-copy measurements of $\sigma_x$ can be simulated using $\mathcal{O}((d/n)M)$ additional copies of $\rho$, with additional failure probability at most $\delta$.
The copy budget includes unsuccessful filtering outcomes.
The implicit constants and dimension threshold may depend on the fixed parameters $m,b,\eta,\delta$.
\end{lemma}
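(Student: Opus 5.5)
The plan is to build the filter from random vectors obtained by measuring copies of $\rho$, so that the resulting projector carries a controlled bias towards $\rho$ that produces the extra $x\rho^2$ term.

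\textbf{Preliminary stage.} Use at most $n$ copies.
\begin{itemize}
\item Measure $k=\lfloor xn\rfloor$ copies with the Haar-random rank-one POVM $\{d\,|\psi\rangle\langle\psi|\,d\psi\}$. This yields vectors $\psi_1,\ldots,\psi_k$ with density $d\langle\psi|\rho|\psi\rangle$ relative to Haar measure.
\item Draw $n-k$ further vectors $\phi_1,\ldots,\phi_{n-k}$ directly from the Haar measure, with no copies used.
\item Let $\Pi$ be the projector onto the span of all $n$ vectors; it is classically known.
\item The filter is the two-outcome measurement $\{\Pi,I-\Pi\}$, with conditional output $\sigma_x=\Pi\rho\Pi/\operatorname{Tr}(\Pi\rho)$.
\end{itemize}

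\textbf{Mechanism.} The identity $\mathbb{E}\,|\psi\rangle\langle\psi|=(I+\rho)/(d+1)$ for a $\rho$-weighted Haar vector shows that each measured vector behaves like an unbiased random direction plus a $\rho$-tilt. To leading order in $n/d$, one then expects
\[
\Pi\rho\Pi\approx\frac{n}{d}\,\Pi_0(\rho+x\rho^2)\Pi_0,
\]
where $\Pi_0$ is an unbiased random projector of dimension $n$. Under the $m$th moment, such a compression returns $\operatorname{Tr}[(\rho+x\rho^2)^m]$ up to lower-order corrections. The normalization $\operatorname{Tr}(\Pi\rho)\approx (n/d)(1+xp_2(\rho))$ gives both the claimed success probability $\Theta(n/d)$ and the denominator $(1+xp_2)^m$.

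The proportion $x$ may need rescaling by an $m$-independent constant, or the weighted vectors may need to be oversampled. This is harmless, since only a tunable one-parameter family is needed downstream.

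\textbf{Concentration.} I would make the approximation rigorous by expanding $\operatorname{Tr}[(\Pi\rho\Pi)^m]$ via the Gram matrix $G$ of the sampled vectors, writing $\Pi=\sum_{ij}|v_i\rangle(G^{-1})_{ij}\langle v_j|$. Since $n\le d^b$ with $b<1$, one has $G=I+O(\sqrt{n/d})$ in operator norm with high probability, so $G^{-1}$ can be expanded in a Neumann series.

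Each resulting term is a sum of cyclic products $\langle v_{i_1}|\rho|v_{i_2}\rangle\cdots$. I would then:
\begin{itemize}
\item compute the expectations of these products by Weingarten and the weighted-sampling identity;
\item show that the fully cyclic index patterns give the target $(n/d)^m\operatorname{Tr}[(\rho+x\rho^2)^m]$;
\item show that diagonal and coincident-index patterns are suppressed by powers of $1/n$ or $n/d$.
\end{itemize}
Next I would bound the variance of $\operatorname{Tr}[(\Pi\rho\Pi)^m]$ and of $\operatorname{Tr}(\Pi\rho)$ similarly, at relative order $o(1)$ as $d\to\infty$. Chebyshev then gives the $\eta$-accuracy of $p_m(\sigma_x)$ and the $\Theta(n/d)$ success probability with probability $1-\delta$ for large $d$.

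\textbf{Main obstacle.} This concentration step is where I expect the most difficulty. Pure states and high-rank states behave very differently: for nearly pure $\rho$, a single weighted vector already aligns with the dominant eigenvector. The bounds must therefore hold uniformly in $\rho$.

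I would try to organize all error terms as monomials in $p_j(\rho)\le 1$ multiplied by explicit powers of $n/d$ and $1/n$. Only $m$-dependent combinatorial constants then appear, and uniformity in $\rho$ follows.

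\textbf{Simulation.} Once the good event holds, simulate each requested copy of $\sigma_x$ by repeatedly applying the filter to fresh copies of $\rho$ until it succeeds. Successful post-measurement states are exactly $\sigma_x$, and these are independent across trials, so any adaptive protocol on $\sigma_x$ runs unchanged.

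The number of trials is a sum of $M$ geometric variables with mean $\Theta(d/n)$. A Chernoff bound for sums of geometric variables, or simply Markov's inequality with the constant $1/\delta$, shows that $\mathcal{O}((d/n)M)$ copies suffice except with probability $\delta$. These copies include all unsuccessful outcomes, which completes the budget accounting.
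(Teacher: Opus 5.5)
Your pilot construction and retry-until-success simulation essentially match the paper. Your deterministic split $k=\lfloor xn\rfloor$, instead of mixing with probability $\gamma_x=x(d+1)/(d+x)$, only shifts $x$ by $O(1/n+1/d)$.

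\textbf{The gap.} The concentration step is the whole content of the lemma, and you only outline it. Its one concrete claim is that $\|G-I\|_{\mathrm{op}}=O(\sqrt{n/d})$ with high probability, uniformly in $\rho$. You assert this without proof, and it is not elementary. Bounding the operator norm by the Hilbert--Schmidt norm only gives $\mathbb E\|G-I\|_{\mathrm{HS}}^2\approx n^2/d$, which diverges for $b>1/2$; every case used in the recursion has $b=1/(2-a)\geq 2/3$. You would need:
\begin{itemize}
\item a genuine operator-norm random-matrix estimate for independent but $\rho$-biased vectors;
\item an $m$th-order Weingarten expansion with Neumann corrections;
\item a matching variance bound, uniform in $\rho$.
\end{itemize}
None of this is carried out.

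Your heuristic is also off by a factor $n/d$. We have $\operatorname{Tr}(\Pi\rho\Pi)\approx(n/d)(1+xp_2(\rho))$, whereas $\frac nd\Pi_0(\rho+x\rho^2)\Pi_0$ has trace about $(n/d)^2(1+xp_2(\rho))$. Moreover, the correct approximation is spectral rather than an equality of matrices: $\frac dn\sqrt\rho\,\Pi\sqrt\rho\approx\rho+x\rho^2$.

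\textbf{The missing idea.} $K\rho K^\dagger$ and $\sqrt\rho K^\dagger K\sqrt\rho$ have the same nonzero eigenvalues, so the filtered state's moments depend only on $\sqrt\rho K^\dagger K\sqrt\rho$. The paper takes $K=\sqrt{S/L}$ with $S=\sum_iP_{v_i}$ and a constant $L$ depending on $b$. It aborts if $\|S\|_{\mathrm{op}}>L$, which a simple matrix Chernoff bound makes unlikely; this is the only control of $S$ needed, and no conditioning of the Gram matrix is required.

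With this filter the relevant matrix is linear in the samples. $B=(d+x)\sqrt\rho S\sqrt\rho/n$ is a sum of $n$ independent rank-one terms with mean exactly $\rho+x\rho^2$. Both $\mathbb E\|B-(\rho+x\rho^2)\|_{\mathrm{HS}}^2$ and $\mathbb E|\operatorname{Tr}(B-\rho-x\rho^2)|^2$ are at most $24/n$, using only the Haar third-moment identity and $p_2,p_3\leq1$. Markov's inequality, the normalization estimate, and $|p_m(A)-p_m(B)|\leq m\|A-B\|_{\mathrm{HS}}$ then give the moment bound and the $\Theta(n/d)$ success probability uniformly in $\rho$, with no higher moments at all.

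The same identity would also rescue your projector filter cheaply. Write $\Pi=VG^{-1}V^\dagger$ and $A=\sqrt\rho V$, so that $\|A\|_{\mathrm{HS}}^2=\operatorname{Tr}(\rho S)$. Then $\|\sqrt\rho(\Pi-S)\sqrt\rho\|_{\mathrm{HS}}\leq\operatorname{Tr}(\rho S)\,\|G^{-1}-I\|_{\mathrm{op}}$. You would therefore need only $\|G-I\|_{\mathrm{op}}=o(1)$ on top of the linear Hilbert--Schmidt estimate, rather than the Weingarten program.
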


The idea is to construct a state on a known low-dimensional subspace whose $m$th moment, after correcting for normalization, encodes higher moments of $\rho$.
To construct the filter, we use a covariant rank-one measurement on $\rho$~\cite{Harrow2013SymmetricSubspace}.
Its classical outcome is a unit vector $v$ satisfying $\mathbb E[vv^\dagger]=(I+\rho)/(d+1)$.
Mixing these outcomes with independently generated Haar-random vectors lets us tune the covariance to $(I+x\rho)/(d+x)$ for any $x\in[0,1]$.
This covariance is chosen because $\sqrt\rho(I+x\rho)\sqrt\rho=\rho+x\rho^2$, whose $m$th power contains the moments we wish to estimate.
From $n$ independent vectors we form the known positive matrix $S_x=\sum_{i=1}^n v_iv_i^\dagger$ and use the filter $K_x=\sqrt{S_x/L}$, where $L$ is a sufficiently large constant depending only on the fixed exponent $b$.
We abort if $\|S_x\|_{\mathrm{op}}>L$, an event whose probability vanishes with $d$; otherwise this is a valid quantum operation.
Its conditional output is $\sigma_x=\sqrt{S_x}\rho\sqrt{S_x}/\operatorname{Tr}(S_x\rho)$, supported on the known subspace $\operatorname{span}\{v_1,\ldots,v_n\}$ of dimension at most $n$.

To relate this output to the desired moments, define $B_x=(d+x)\sqrt\rho S_x\sqrt\rho/n$ and $H_x=\rho+x\rho^2$.
These matrices are used only for analysis; implementing the filter requires only the known matrix $S_x$.
The covariance identity gives $\mathbb E B_x=H_x$, and both the mean-square Hilbert--Schmidt error and the mean-square trace error are $\mathcal{O}(1/n)$.
Moreover, $\sqrt{S_x}\rho\sqrt{S_x}$ and $\sqrt\rho S_x\sqrt\rho$ have the same nonzero eigenvalues.
Consequently,
\[
    p_m(\sigma_x)
    =\frac{\operatorname{Tr}(B_x^m)}{(\operatorname{Tr}B_x)^m}
    \approx
    \frac{\operatorname{Tr}[(\rho+x\rho^2)^m]}
         {(1+xp_2(\rho))^m}.
\]
Here trace concentration controls the normalization, which remains bounded away from zero because $\operatorname{Tr}H_x=1+xp_2(\rho)\in[1,2]$.
The same concentration bound gives the filtering success probability
$\operatorname{Tr}(S_x\rho)/L=n\operatorname{Tr}B_x/[L(d+x)]=\Theta(n/d)$, establishing the guarantees in \Cref{lem:moment-filter}.
The filter acts on fresh copies and depends only on the preliminary classical record.
It can be combined with each subsequent measurement into a single POVM, so no quantum information needs to be retained between rounds.

We can now see the cost of the reduction in \Cref{lem:moment-doubling}.
The assumed $m$th-moment estimator needs $\mathcal{O}(n^a)$ successful filtered samples.
Each success costs $\mathcal{O}(d/n)$ original copies, giving a subsequent cost of $\mathcal{O}(dn^{a-1})$.
For $a<1$, increasing $n$ reduces this cost but increases the preliminary cost.
Balancing the two gives
\[
    \mathcal{O}\!\left(n+dn^{a-1}\right)
    =\mathcal{O}\!\left(d^{1/(2-a)}\right)
    \qquad\text{for }n=\left\lceil d^{1/(2-a)}\right\rceil.
\]
We also estimate $p_2(\rho)$ using $\mathcal{O}(\sqrt d)$ copies~\cite{Gong2026PurityInnerProduct}, which does not change this scaling.
Multiplying the estimated filtered moment by $(1+xp_2(\rho))^m$ then estimates the polynomial
\begin{equation}
\label{eq:filter-moment-polynomial}
    F_m(x):=\operatorname{Tr}[(\rho+x\rho^2)^m]
    =\sum_{j=0}^m\binom mj x^j p_{m+j}(\rho).
\end{equation}
Thus the desired moments appear as its coefficients and can be recovered by interpolation from $m+1$ fixed values of $x$.
Since $m$, the target error, and the failure probability are fixed, the number of filters and the accuracy required for interpolation contribute only dimension-independent factors.

The third and fourth moments provide a simple example.
Taking $m=2$ gives $F_2(x)=p_2+2xp_3+x^2p_4$, so both higher moments can be recovered from purity measurements on the filtered states.
After estimating $p_2$, the two values $F_2(1/2)$ and $F_2(1)$ determine
\[
    p_3=2F_2(1/2)-\tfrac12F_2(1)-\tfrac32p_2,
    \qquad
    p_4=2F_2(1)-4F_2(1/2)+2p_2.
\]
Purity estimation on the filtered subspace needs $\mathcal{O}(\sqrt n)$ successful samples, so the total cost is $\mathcal{O}(n+d/\sqrt n)$.
Choosing $n=\lceil d^{2/3}\rceil$ gives $\mathcal{O}(d^{2/3})$ copies for both moments.
Their equal scaling follows from recovering two coefficients of the same polynomial, without an additional filtering stage for the fourth moment.
More generally, iterating the exponent transformation $a\mapsto1/(2-a)$ from the purity exponent $1/2$ gives $1/2,2/3,3/4,\ldots$, while doubling the largest accessible moment order at each step.
This proves the upper bound in \Cref{thm:optimal-moments}.

\subsection{Lower bound: smooth interpolation of hard ensembles}
\label{sec:overview-lower}

We illustrate the argument with the third and fourth moments.
Consider the probability vectors $\lambda^{(0)}=(2/3,1/6,1/6)$ and $\lambda^{(1)}=(1/2,0,1/2)$.
Writing $p_k(\lambda)=\sum_j\lambda_j^k$, both have purity $p_2=1/2$, but their third and fourth moments differ by $1/18$ and $2/27$, respectively.
The key is to connect them by a smooth path that preserves the matching purity throughout:
\begin{equation}
\label{eq:third-fourth-spectral-path}
    \lambda_j(\theta)
    =\frac{1+\cos(\theta+2\pi j/3)}3,
    \qquad j=0,1,2,\quad 0\leq\theta\leq\pi/3.
\end{equation}

These weights define a smooth family of random states.
For independent standard complex Gaussian vectors $g_0,g_1,g_2\in\mathbb C^d$ and a sufficiently small fixed constant $\alpha>0$, set
\begin{equation}
\label{eq:overview-gaussian-prior}
    \rho_\theta
    =\frac{1-R_\theta}{d}I
      +\frac\alpha d\sum_{j=0}^2\lambda_j(\theta)g_jg_j^\dagger,
    \qquad
    R_\theta=\frac\alpha d\sum_{j=0}^2
                  \lambda_j(\theta)\|g_j\|^2.
\end{equation}
We weight the Gaussian sampling density by a continuously differentiable cutoff $\chi(R_\theta)$ and renormalize, changing only an exponentially unlikely tail and ensuring $R_\theta<1/2$ and $\rho_\theta\succeq I/(2d)$.
For every fixed $k\geq2$, the state moment $p_k(\rho_\theta)$ converges in probability to $\alpha^k p_k(\lambda(\theta))$ as $d$ grows.
Thus the endpoint ensembles retain constant gaps in both target moments.

The distinguishing problem chooses either endpoint with equal prior probabilities, draws one state from that ensemble, and supplies copies of that same state to the learner.
An accurate moment estimator would distinguish the endpoints by thresholding its output.
By Le Cam's testing identity~\cite{LeCam1986,Tsybakov2009}, success probability $2/3-o(1)$ requires the endpoint transcript laws to have total variation distance at least $1/3-o(1)$.
Our goal is to show that too few single-copy measurements cannot produce this separation.

Let $p_\theta(H)$ be the density of the complete classical record $H$, including the random seed, relative to the fixed law $\nu$ obtained by running the same protocol on $I/d$.
For discrete records, this means $p_\theta(H)=P_\theta(H)/\nu(H)$.
The fundamental theorem of calculus gives
\begin{equation}
\label{eq:overview-tv-derivative}
    \operatorname{TV}(P_u,P_v)
    \leq\frac12\int_u^v\int
          |\partial_\theta p_\theta(H)|\,d\nu(H)\,d\theta.
\end{equation}
Here $I/d$ only defines a reference measure: we compare the endpoints directly, without requiring either to be close to that reference.
Related smooth-path comparisons underlie statistical-distance geometry~\cite{Wootters1981StatisticalDistance} and lower bounds for Hamiltonian learning~\cite[App.~G]{Zhou2026AnsatzFreeHamiltonian}.
Our task is to control the derivative while preserving the cancellations from moment matching.

\paragraph{From the probability derivative to spectral blocks.}
Refine each measurement into rank-one effects $w_iP_i$, as in learning-tree analyses~\cite{Chen2022QuantumMemory}.
Once a record is fixed, its measurement choices are fixed even for an adaptive protocol.
Writing $f_i=d\operatorname{Tr}(P_i\rho_\theta)$, the relative likelihood for fixed latent vectors is $\prod_i f_i$, and hence
\[
    p_\theta(H)
    =\frac{\mathbb E_G[\chi(R_\theta)\prod_{i=1}^T f_i]}
           {\mathbb E_G\chi(R_\theta)},
\]
where $\mathbb E_G$ denotes expectation under the original Gaussian law.
Up to the exponentially small cutoff correction, Gaussian pairing and the moment--cumulant relation organize this product into blocks~\cite{Isserlis1918GaussianMoments,McCullagh1987TensorMethods}.
Differentiating a block of size $\ell$ produces the spectral coefficient $\alpha^\ell p_\ell'(\lambda(\theta))/\ell$.
Thus every block whose spectral moment is constant along the path disappears from the derivative; here the expansion starts at block size three.

The matrix statistics associated with these blocks have a useful martingale structure.
Define $\widehat Y_i=(P_i-I/d)/f_i$.
Conditional on the latent vectors, the seed, and the preceding outcomes, its mean is zero: the outcome probability $w_if_i/d$ cancels the denominator, and POVM completeness gives $\sum_yw_y(P_y-I/d)=0$.
This conditional cancellation holds for arbitrary adaptive measurements.

To obtain uniform matrix bounds, the detailed proof uses the auxiliary differences
$Y_i=\boldsymbol1_{\{\|\sum_{j<i}P_j\|_{\mathrm{op}}<\kappa\}}\widehat Y_i$, initially with $\kappa=6$.
Their multiplier depends only on the past, so the martingale property is preserved.
On the good records $G=\{H:\|\sum_{i=1}^T P_i\|_{\mathrm{op}}<\kappa\}$, they agree with the original matrices; the actual protocol still runs for all $T$ rounds.
Define the block statistics, or cycle sums, by
\[
    \mathcal C_{\ell,T}
    =\sum_{\substack{i_1,\ldots,i_\ell\leq T\\\text{all distinct}}}
       \operatorname{Tr}(Y_{i_1}\cdots Y_{i_\ell}),
\]
and write $\widehat{\mathcal C}_{\ell,T}$ for the same sum formed from $\widehat Y_i$.
The two agree on $G$.
For the copy budgets considered below, the Gaussian expansion and Bayes' formula give, uniformly on these records,
\begin{equation}
\label{eq:overview-marked-cycle-derivative}
    \partial_\theta\log p_\theta(H)
    =\mathbb E_\theta\!\left[
       \left.\sum_{\ell=3}^{T}
          \frac{\alpha^\ell p_\ell'(\lambda(\theta))}{\ell}
          \mathcal C_{\ell,T}\right|H\right]+o(1),
    \qquad H\in G.
\end{equation}
The error comes from the cutoff and is exponentially small.
Conditional Jensen and Cauchy--Schwarz therefore reduce the good-record derivative bound to estimates of $\mathbb E_\theta|\mathcal C_{\ell,T}|^2$ under the original joint law of the latent vectors and outcomes.

\paragraph{Martingale estimates and the dyadic hierarchy.}
Let $\mathcal A_{s,n}$ be the sum of all ordered products of $s$ distinct matrices among $Y_1,\ldots,Y_n$.
Cyclicity of the trace gives the increment
$\mathcal C_{\ell,n}-\mathcal C_{\ell,n-1}=\ell\operatorname{Tr}(Y_n\mathcal A_{\ell-1,n-1})$.
The chain multiplying $Y_n$ is determined by the past, so this increment has conditional mean zero.
The martingale isometry~\cite[Sec.~12.1]{Williams1991ProbabilityMartingales} and the one-step variance bound imply
\[
    \mathbb E_\theta|\mathcal C_{\ell,T}|^2
    \leq\frac{2\ell^2}{d}\sum_{n=1}^T
       \mathbb E_\theta\|\mathcal A_{\ell-1,n-1}\|_{\mathrm{HS}}^2.
\]
The crucial fact for the third and fourth moments is that both required chain lengths satisfy the same bound:
\[
    \mathbb E_\theta\|\mathcal A_{s,n}\|_{\mathrm{HS}}^2
       =\mathcal{O}(n^2/d),\qquad s=2,3.
\]
Consequently, both cycle variances are $\mathcal{O}(T^3/d^2)$, and their contributions to the derivative are $\mathcal{O}(T^{3/2}/d)$.

The same recursion explains the hierarchy at higher orders.
We express each chain as an average of products of sums over disjoint groups of rounds.
For a fixed grouping, only one factor changes at each round, so the product is itself a matrix martingale.
In its second-moment estimate, we retain a subproduct of length $\lfloor s/2\rfloor$ and bound the other factors in operator norm.
Each recursive reduction contributes a factor of order $n/d$, starting from an $\mathcal{O}(n)$ bound at length one.
Lengths $2^j\leq s<2^{j+1}$ therefore undergo the same number of reductions.
Since a cycle of length $\ell$ uses a chain of length $\ell-1$, this gives
\[
    \mathbb E_\theta|\mathcal C_{\ell,T}|^2
       =\mathcal{O}_\ell\!\left(\frac{T^{h+1}}{d^h}\right),
       \qquad 2^{h-1}<\ell\leq2^h.
\]

For a target order $2^{h-1}<t\leq2^h$, choose $r=2^{h-1}+1$ and the path
$\lambda_j(\theta)=[1+\cos(2\pi j/r+\theta)]/r$ for $0\leq j<r$ and $0\leq\theta\leq\pi/r$.
The ensemble construction uses $r$ Gaussian vectors in place of the three in \eqref{eq:overview-gaussian-prior}.
Summing over the equally spaced phases cancels all nonzero Fourier frequencies below $r$, so the moments through order $r-1$ remain constant, while the endpoint $t$th moments differ.
The first surviving block range is therefore $r\leq\ell\leq2^h$, with common variance scale $T^{h+1}/d^h$.

So far, we have controlled the leading blocks $r\leq\ell\leq2^h$, whose contribution to the total absolute derivative on good records is $\mathcal{O}_h(T^{(h+1)/2}/d^{h/2})$.
Two contributions remain: the sum over larger block sizes, and the records outside $G=\{H:\|\sum_{i=1}^T P_i\|_{\mathrm{op}}<\kappa\}$.

For the larger blocks, the key observation is that increasing the block size introduces additional powers of $T/d$ in the second-moment bounds.
For the longest blocks, we instead use a deterministic bound that grows only exponentially with their size, which the coefficients $\alpha^\ell$ suppress when $\alpha$ is sufficiently small.
Together, these estimates make the higher-order sum negligible.

For records outside $G$, let $P_\theta$ denote the distribution of the complete measurement record under the ensemble at parameter $\theta$; thus $P_0$ and $P_{\pi/r}$ are the two endpoint transcript laws.
The key observation is that, before the projector sum reaches $\kappa$, the conditional mean of the next projector remains bounded by $2I/d$.
A posterior covariance bound establishes this property~\cite{Carlen2013BrascampLieb}, and matrix concentration gives~\cite{Forrester2014GoldenThompson}
\[
    P_\theta(G^c)
    \leq d\left(\frac{2eT}{\kappa d}\right)^\kappa.
\]
For $T=\lfloor cd^{h/(h+1)}\rfloor$ with fixed $0<c\leq1$, choosing $\kappa=2(h+1)$ and sufficiently small fixed $\alpha$ makes this probability $\mathcal{O}_h(d^{-1})$, uniformly in $\theta$.

Thus, integrating the derivative on $G$ and adding the endpoint contribution $[P_0(G^c)+P_{\pi/r}(G^c)]/2$ gives
\[
    \operatorname{TV}(P_0,P_{\pi/r})
    \leq C_h\frac{T^{(h+1)/2}}{d^{h/2}}+o(1)
    \leq C_hc^{(h+1)/2}+o(1),
\]
where $C_h$ depends only on $h$.
Taking $c$ sufficiently small contradicts the testing requirement and proves the matching lower bound.
In particular, the common chain bound for lengths two and three yields the same $\Omega(d^{2/3})$ lower bound for the third and fourth moments.

\subsection{The role of adaptivity}
\label{sec:overview-nonadaptive}

The nonadaptive proof follows the same smooth-path and block-expansion framework.
Conditional on the latent vectors and random seed, the original, untruncated matrices $\widehat Y_i$ are independent as well as centered, even when the POVMs differ between rounds.
We therefore work directly with $\widehat Y_i$ and their cycle sums $\widehat{\mathcal C}_{\ell,T}$ to exploit this independence.
The auxiliary matrices $Y_i$ need not remain independent, since their truncation multipliers depend on previous outcomes.

Expand the squared cycle sum into products of two trace terms.
If their underlying index sets differ, a matrix appearing in only one term is independent of all the others and has mean zero, so the cross expectation vanishes.
For example, the cycles indexed by $(1,2,3,5)$ and $(1,2,4,5)$ have zero cross expectation after averaging over $\widehat Y_3$.
For adaptive measurements, the fifth measurement can depend on the third outcome, so this cancellation is no longer guaranteed.
Only pairs with the same index set can contribute in the nonadaptive case; counting these pairs and applying the one-step variance estimates gives
\[
    \mathbb E_\theta|\widehat{\mathcal C}_{\ell,T}|^2
    \leq\ell!2^\ell\frac{T^\ell}{d^{\ell-1}}.
\]
Inserting the indicator of $G$ can only decrease this second moment, so the same bound controls its contribution on good records without conditioning on $G$.

Choose the spectral path with $r=t$, so the moments through order $t-1$ agree while the target moments remain separated.
The first surviving block then has degree $t$ and contributes at most $\mathcal{O}_t(T^{t/2}/d^{(t-1)/2})$ to the derivative integral on $G$.
The stronger bound comes from cancellations between different index sets, beyond the martingale cancellations available in the adaptive analysis.
For cycles of length four, it improves the second-moment bound from $\mathcal{O}(T^3/d^2)$ to $\mathcal{O}(T^4/d^3)$.
Together with matching the first three spectral moments, this raises the fourth-moment lower-bound threshold from $d^{2/3}$ to $d^{3/4}$.

As in the adaptive proof, we must also control the higher-degree tail on $G$ and the endpoint probabilities of the exceptional records $G^c$.
Similar tail and concentration estimates make both contributions negligible, with a sufficiently large fixed threshold $\kappa$ and sufficiently small fixed $\alpha$.
Integrating the derivative on $G$ and adding the exceptional-record probabilities therefore gives endpoint total variation distance at most $C_tc^{t/2}+o(1)$ at $T=\lfloor cd^{1-1/t}\rfloor$.
Taking $c$ sufficiently small proves the nonadaptive lower bound in \Cref{thm:nonadaptive-moments}.

\subsection{Extensions to other nonlinear properties and spectra}
\label{sec:overview-extensions}

The four consequences in \Cref{cor:moment-consequences} follow by reducing the corresponding tasks to ordinary state-moment estimation.
For fixed moment order, rank, and local dimension, these reductions introduce only dimension-independent overheads.

\paragraph{Ordered overlaps of several states.}
The main difficulty is that $\operatorname{Tr}(\rho_0\cdots\rho_{t-1})$ can be complex, whereas ordinary state moments are real.
We therefore introduce known auxiliary states in the construction to select the desired ordering and supply a tunable phase.
For $t\geq3$, choose $A_j=\ket{a_j}\bra{a_j}$ with $\ket{a_j}=(\ket{j}+e^{i\phi_j}\ket{j+1})/\sqrt2$, where basis labels are taken modulo $t$.
Consider the mixtures $\Omega_S=|S|^{-1}\sum_{j\in S}\rho_j\otimes A_j$ for nonempty subsets $S\subseteq\{0,\ldots,t-1\}$.
An inclusion--exclusion combination of their $t$th moments cancels terms with repeated input labels.
The orthogonality of nonadjacent auxiliary states then leaves only the forward and reverse cyclic orderings, giving
\[
    t\,2^{1-t}\operatorname{Re}\!\left[
        e^{-i\sum_j\phi_j}
        \operatorname{Tr}(\rho_0\cdots\rho_{t-1})
    \right].
\]
Choosing the total phase to be $0$ or $\pi/2$ recovers the real or imaginary part, respectively.
For fixed $t$, the number of mixtures, their dimension enlargement, and the error amplification are constants, so \Cref{thm:optimal-moments} gives the claimed upper bound.

The auxiliary states need not be physically supplied in our single-copy POVM model.
After selecting source $j$, any measurement effect $E_y$ on the enlarged system can be replaced by $(I\otimes\bra{a_j})E_y(I\otimes\ket{a_j})$ on the original input, with exactly the same outcome probability.
Thus the protocol uses only source-dependent single-copy measurements and classical memory.
For $t=2$, ordinary overlap estimation reduces directly to purity estimation of the two inputs and their equal mixture.
The matching worst-case lower bound follows by setting all input states equal to $\rho$, so that the target becomes $p_t(\rho)$.

\paragraph{Weighted moments.}
For a known Hermitian observable $O$ with $\|O\|_{\mathrm{op}}\leq1$, apply the filter $K_x=\sqrt{(I+xO)/2}$, where $x\in[-1/2,1/2]$.
Its success probability $q_x=\operatorname{Tr}(K_x\rho K_x)$ lies in $[1/4,3/4]$, so obtaining the conditional state $\sigma_x=K_x\rho K_x/q_x$ incurs only constant copy overhead.
The quantity
\[
    F(x)=(2q_x)^t\operatorname{Tr}(\sigma_x^t)
    =\operatorname{Tr}\!\left[
        \bigl(\sqrt\rho(I+xO)\sqrt\rho\bigr)^t
      \right]
\]
is a polynomial of degree at most $t$ whose linear coefficient is $t\operatorname{Tr}(O\rho^t)$, by cyclicity of the trace.
Estimating the filtering probability and the ordinary $t$th moment at $t+1$ fixed values of $x$ therefore recovers the weighted moment by interpolation.
The interpolation and normalization require only dimension-independent precision, giving the same dimension exponent as in \Cref{thm:optimal-moments}.
The worst-case lower bound over $(O,\rho)$ follows immediately by taking $O=I$.

\paragraph{Partial-transpose moments.}
Suppose that $q=\dim(A)$ is fixed.
The transpose map on $A$ admits the decomposition $\mathrm{T}_A=(q+1)\Phi_A-q\mathcal D_A$ into the two quantum channels
\[
    \Phi_A(X)=\frac{X^\mathrm{T}+\operatorname{Tr}(X)I_A}{q+1},
    \qquad
    \mathcal D_A(X)=\frac{\operatorname{Tr}(X)}q I_A.
\]
Write $\omega_1=(\Phi_A\otimes\operatorname{id}_B)(\rho_{AB})$ and $\omega_0=(\mathcal D_A\otimes\operatorname{id}_B)(\rho_{AB})$.
For $x\in[0,1]$, the mixture $\omega_x=(1-x)\omega_0+x\omega_1$ is a physical state obtained by randomly applying one of these channels to a single input copy.
The function $G(x)=\operatorname{Tr}(\omega_x^t)$ is a polynomial of degree at most $t$, and its extrapolated value $G(q+1)$ equals $\operatorname{Tr}[(\rho_{AB}^{\mathrm{T}_A})^t]$.
We recover this value by estimating ordinary moments at $t+1$ fixed points in $[0,1]$ and interpolating.
The resulting error amplification depends only on $q$ and $t$, which explains the fixed-local-dimension assumption.
The matching lower bound follows by restricting to product states $\ket{0}\!\bra{0}_A\otimes\rho_B$, for which the target is $\operatorname{Tr}(\rho_B^t)$ and $\dim(B)=d/q$.
If $B$ has fixed dimension instead, the same argument applies with the two subsystems exchanged, since their partial transposes have identical spectra.

\paragraph{Spectrum estimation at fixed rank.}
For $\operatorname{rank}(\rho)\leq r$, pad its nonzero eigenvalues with zeros to form a sorted vector $\lambda\in\mathbb R^r$.
Newton's identities show that $p_1=1,p_2,\ldots,p_r$ uniquely determine this vector.
Moreover, the map $\lambda\mapsto(p_2,\ldots,p_r)$ is continuous and injective on the compact set of sorted probability vectors, so its inverse is uniformly continuous.
Consequently, for every fixed $r$ and target $\ell_1$ error $\epsilon$, estimating these moments to a sufficiently small accuracy $\delta(r,\epsilon)>0$ suffices to recover the spectrum, including at repeated or zero eigenvalues.
For example, one can select a sorted probability vector whose moments best fit the estimates.
Estimating all moments through order $r$ costs $\mathcal{O}(d^{\lceil\log_2 r\rceil/(1+\lceil\log_2 r\rceil)})$ copies, proving the claimed upper bound without an eigenvalue-gap assumption.
All filters and channels above can be composed with the subsequent measurement on the same input copy, so each reduction preserves the single-copy access model.

\section{Preliminaries}
\label{sec:proof-preliminaries}

We collect the tools used below and then specify the learning and testing models.
All asymptotic statements concern $d\to\infty$, unless another variable is specified.
For nonnegative functions $f,g$, with $g$ eventually positive, $f=\mathcal{O}(g)$ means $f(d)\leq Cg(d)$ for all sufficiently large $d$ and some constant $C>0$; $f=\Omega(g)$ means $f(d)\geq cg(d)$ under the same convention for some $c>0$.
We write $f=\Theta(g)$ when both bounds hold, and $f=o(g)$ when $f(d)/g(d)\to0$; in particular, $o(1)$ denotes a quantity tending to zero.
For signed or complex quantities, these definitions apply to their absolute values.
Subscripts specify permitted dependence of the hidden constants and dimension thresholds: for example, $\mathcal{O}_{t,\epsilon,\delta}(d^a)$ allows dependence on $t,\epsilon,\delta$, but not on $d$.
Without subscripts, constants may depend on the fixed moment order, accuracy, failure probability, and other parameters explicitly held fixed; bounds stated uniformly over states or measurement strategies have no dependence on those choices.

A state on $\mathbb C^d$ is a positive semidefinite matrix $\rho$ with $\operatorname{Tr}\rho=1$.
We write $p_j(\rho)=\operatorname{Tr}(\rho^j)$, $\|A\|_{\mathrm{HS}}=(\operatorname{Tr}A^\dagger A)^{1/2}$, and $\|A\|_{\mathrm{op}}=\|A\|_\infty$ for the operator norm.
The order $A\preceq B$ means that $B-A$ is positive semidefinite.
For a probability vector $\lambda$, write $p_j(\lambda)=\sum_i\lambda_i^j$; for a polynomial $F$, $[x^j]F(x)$ denotes its coefficient of $x^j$.
All logarithms are natural unless their base is displayed.
Moment orders, accuracies, and failure probabilities are fixed independently of $d$.
We use $T$ for the total copy budget and $n,N$ for local sample counts or prefix lengths as specified in each argument.

\subsection{Spectral estimates and polynomial reconstruction}

\begin{lemma}[Continuity and normalization of moments]
\label[lemma]{pre:moment-lipschitz}
For states $A,B$ and every integer $m\geq2$,
$|p_m(A)-p_m(B)|\leq m\|A-B\|_{\mathrm{HS}}$.
If $B,H\succeq0$, $1\leq\operatorname{Tr}H\leq2$, $0<\eta<1/2$, and both $\|B-H\|_{\mathrm{HS}}$ and $|\operatorname{Tr}(B-H)|$ are at most $\eta$, then
\[
 \operatorname{Tr}B\geq\tfrac12,\qquad
 \left\|\frac{B}{\operatorname{Tr}B}
            -\frac{H}{\operatorname{Tr}H}\right\|_{\mathrm{HS}}
 \leq4\eta.
\]
\end{lemma}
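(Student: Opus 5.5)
For the first claim, I would use a telescoping identity. Write
\[
A^m-B^m=\sum_{k=0}^{m-1}A^{k}(A-B)B^{m-1-k},
\]
and take traces. Each term satisfies
\[
|\operatorname{Tr}(A^k(A-B)B^{m-1-k})|\leq\|A-B\|_{\mathrm{HS}}\,\|B^{m-1-k}A^k\|_{\mathrm{HS}}.
\]
This follows from Cauchy--Schwarz for the Hilbert--Schmidt inner product together with cyclicity.

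It then remains to show $\|B^{m-1-k}A^k\|_{\mathrm{HS}}\leq1$. I would use $\|XY\|_{\mathrm{HS}}\leq\|X\|_{\mathrm{op}}\|Y\|_{\mathrm{HS}}$, and for a state $\|A\|_{\mathrm{op}}\leq1$ and $\|A\|_{\mathrm{HS}}\leq1$. Since $m-1\geq1$, one of the exponents $k$ or $m-1-k$ is at least one. Placing the Hilbert--Schmidt norm on that factor and the operator norm on the remaining powers gives the bound $1$. Summing the $m$ terms yields $|p_m(A)-p_m(B)|\leq m\|A-B\|_{\mathrm{HS}}$.

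For the normalization claim, first $\operatorname{Tr}B\geq\operatorname{Tr}H-\eta\geq1-\tfrac12=\tfrac12$. For the main bound, I would split
\[
\frac{B}{\operatorname{Tr}B}-\frac{H}{\operatorname{Tr}H}
=\frac{B-H}{\operatorname{Tr}B}+H\left(\frac1{\operatorname{Tr}B}-\frac1{\operatorname{Tr}H}\right).
\]
\begin{itemize}
\item The first term has Hilbert--Schmidt norm at most $2\eta$, using $\operatorname{Tr}B\geq\tfrac12$.
\item The second term has norm
\[
\|H\|_{\mathrm{HS}}\,\frac{|\operatorname{Tr}(H-B)|}{\operatorname{Tr}B\,\operatorname{Tr}H}\leq\frac{\eta\,\|H\|_{\mathrm{HS}}}{\tfrac12\operatorname{Tr}H}.
\]
Since $H\succeq0$, we have $\|H\|_{\mathrm{HS}}\leq\operatorname{Tr}H$, so this is at most $2\eta$.
\end{itemize}
Adding the two pieces gives $4\eta$.

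Nothing here is a real obstacle. The only care needed is in the first part: the norm allocation must avoid a factor such as $\|A\|_{\mathrm{HS}}^2$ that is not automatically bounded by one.
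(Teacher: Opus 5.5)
Your proof is correct and matches the paper's argument: the same telescoping expansion, Hilbert--Schmidt Cauchy--Schwarz and operator-norm/Hilbert--Schmidt split for the first claim, and the same decomposition $B/\operatorname{Tr}B-H/\operatorname{Tr}H=(B-H)/\operatorname{Tr}B+H(1/\operatorname{Tr}B-1/\operatorname{Tr}H)$ with $\|H\|_{\mathrm{HS}}\leq\operatorname{Tr}H$ for the second. Your closing caution is harmless but unnecessary: for states $\|A\|_{\mathrm{HS}},\|B\|_{\mathrm{HS}}\leq1$, so bounding every factor by its Hilbert--Schmidt norm would also give at most one.
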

\begin{proof}
Expand $A^m-B^m=\sum_{j=0}^{m-1}A^{m-1-j}(A-B)B^j$, take traces, and apply Hilbert--Schmidt Cauchy--Schwarz.
Each factor $B^jA^{m-1-j}$ has Hilbert--Schmidt norm at most one because $m\geq2$ and states have both operator and Hilbert--Schmidt norm at most one.
For the second assertion, write $b=\operatorname{Tr}B$ and $h=\operatorname{Tr}H$.
Since $b\geq h-\eta\geq1/2$ and $\|H\|_{\mathrm{HS}}\leq h$,
\[
 \left\|B/b-H/h\right\|_{\mathrm{HS}}
 \leq \frac{\|B-H\|_{\mathrm{HS}}}{b}
       +\frac{|h-b|\|H\|_{\mathrm{HS}}}{bh}
 \leq4\eta.
\]
\end{proof}

We also use that $VV^\dagger$ and $V^\dagger V$ have the same nonzero eigenvalues, including multiplicities; this follows from the singular-value decomposition.

\begin{lemma}[Interpolation at fixed degree]
\label[lemma]{pre:interpolation}
Fix distinct real nodes $x_0,\ldots,x_m$ and a polynomial $F$ of degree at most $m$.
If $|\widehat F_i-F(x_i)|\leq\eta$ at every node, every coefficient of $F$ and its value at any fixed real $z$ can be estimated with error at most $C\eta$, where $C$ depends only on the nodes, degree, and evaluation point.
\end{lemma}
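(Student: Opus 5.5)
The plan is to use Lagrange interpolation and exploit the fact that it is a fixed linear map from node values to coefficients. Write $F(x)=\sum_{j=0}^m c_jx^j$. Evaluating at the $m+1$ distinct nodes gives the linear system $\mathbf F=V\mathbf c$, where $\mathbf F=(F(x_0),\ldots,F(x_m))^{\mathrm T}$ and $V_{ij}=x_i^j$ is the Vandermonde matrix. Since the nodes are distinct, $\det V=\prod_{i<k}(x_k-x_i)\neq0$, so $V$ is invertible and $\mathbf c=V^{-1}\mathbf F$. The estimator is the natural one: set $\widehat{\mathbf c}=V^{-1}\widehat{\mathbf F}$, which means reading off the coefficients of the unique degree-$\le m$ interpolant through the points $(x_i,\widehat F_i)$.

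The first step bounds the coefficient errors. Linearity gives $\widehat{\mathbf c}-\mathbf c=V^{-1}(\widehat{\mathbf F}-\mathbf F)$. Each entry of the error vector $\widehat{\mathbf F}-\mathbf F$ is at most $\eta$ in absolute value. Hence every coefficient error is at most $\|V^{-1}\|_{\infty\to\infty}\,\eta$, which is the maximum absolute row sum of $V^{-1}$ times $\eta$. This row sum depends only on the nodes and on $m$.

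The second step handles evaluation at a fixed point $z$. Set $\widehat F(z)=\sum_j\widehat c_jz^j$. Equivalently, in Lagrange form $\widehat F(z)=\sum_i\widehat F_i\,\ell_i(z)$, where $\ell_i(z)=\prod_{k\neq i}(z-x_k)/(x_i-x_k)$. The error is then $\sum_i(\widehat F_i-F(x_i))\ell_i(z)$, so it is at most $\eta\sum_i|\ell_i(z)|$. This Lebesgue-type constant depends only on the nodes, the degree, and $z$. Taking $C$ to be the larger of the two constants proves both claims.

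The whole argument is finite-dimensional linear algebra, so I expect no real obstacle. The one point to be careful about is that $C$ must not depend on $F$ or on the dimension $d$. This holds because the maps $\mathbf F\mapsto\mathbf c$ and $\mathbf F\mapsto F(z)$ are fixed linear maps determined entirely by the nodes. In later applications the nodes, the degree $m$, and the point $z$ (for example $z=q+1$ in the partial-transpose reduction) are all fixed independently of $d$, so the constant is dimension-independent.
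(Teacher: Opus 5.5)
Your proposal is correct and follows the paper's proof: the paper also forms the Lagrange interpolant $\widehat F(x)=\sum_i\widehat F_iL_i(x)$ and bounds the errors by $\eta\sum_i|[x^k]L_i(x)|$ and $\eta\sum_i|L_i(z)|$. The $k$th row of $V^{-1}$ consists exactly of the coefficients $[x^k]L_i(x)$, so your Vandermonde row-sum constant is the same quantity as the paper's.
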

\begin{proof}
Set $L_i(x)=\prod_{j\ne i}(x-x_j)/(x_i-x_j)$ and $\widehat F(x)=\sum_i\widehat F_iL_i(x)$.
The errors in coefficient $k$ and in evaluation at $z$ are at most $\eta\sum_i|[x^k]L_i(x)|$ and $\eta\sum_i|L_i(z)|$, respectively.
These constants are finite, including for extrapolation outside the node interval.
\end{proof}

\begin{lemma}[Recovering a fixed-size spectrum]
\label[lemma]{pre:spectrum-continuity}
Let $r\geq2$ be fixed and let $\Delta_r^\downarrow$ be the set of sorted probability vectors of length $r$.
For every $\epsilon>0$ there exists $\delta>0$, depending only on $r,\epsilon$, such that
\[
 \lambda,\mu\in\Delta_r^\downarrow,\qquad
 \max_{2\leq j\leq r}|p_j(\lambda)-p_j(\mu)|<\delta
 \quad\Longrightarrow\quad \|\lambda-\mu\|_1<\epsilon.
\]
\end{lemma}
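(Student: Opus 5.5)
This is a compactness argument. First I show that the power-sum map is injective on $\Delta_r^\downarrow$; then I upgrade injectivity to uniform continuity of the inverse via compactness.

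Consider the map $\Phi:\Delta_r^\downarrow\to\mathbb R^{r-1}$, $\Phi(\lambda)=(p_2(\lambda),\ldots,p_r(\lambda))$. The set $\Delta_r^\downarrow$ is closed and bounded in $\mathbb R^r$, hence compact, and $\Phi$ is continuous since it is polynomial.

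For injectivity, note that every $\lambda\in\Delta_r^\downarrow$ has $p_1(\lambda)=1$. So $\Phi(\lambda)$ determines all power sums $p_1,\ldots,p_r$ of the $r$ entries. By Newton's identities, these determine the elementary symmetric polynomials $e_1,\ldots,e_r$ recursively; the recursion $ke_k=\sum_{i=1}^k(-1)^{i-1}e_{k-i}p_i$ divides only by $k\le r$, which is harmless over $\mathbb R$. The $e_k$ in turn determine the polynomial $\prod_j(z-\lambda_j)$, and hence the multiset of entries, including repeated values and zeros. Since elements of $\Delta_r^\downarrow$ are sorted, the multiset determines the vector. Thus $\Phi(\lambda)=\Phi(\mu)$ implies $\lambda=\mu$.

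Now argue by contradiction. Suppose that for some $\epsilon>0$ no $\delta$ works. Then there are sequences $\lambda^{(k)},\mu^{(k)}\in\Delta_r^\downarrow$ with
\[
\max_{2\le j\le r}\bigl|p_j(\lambda^{(k)})-p_j(\mu^{(k)})\bigr|<1/k
\quad\text{but}\quad
\|\lambda^{(k)}-\mu^{(k)}\|_1\ge\epsilon .
\]
By compactness, pass to a subsequence along which $\lambda^{(k)}\to\lambda$ and $\mu^{(k)}\to\mu$ with $\lambda,\mu\in\Delta_r^\downarrow$. Continuity of $\Phi$ gives $\Phi(\lambda)=\Phi(\mu)$, so $\lambda=\mu$ by injectivity. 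But continuity of the $\ell_1$ norm gives $\|\lambda-\mu\|_1\ge\epsilon$, which is a contradiction. Since $\Delta_r^\downarrow$ and $\Phi$ depend only on $r$, the resulting $\delta$ depends only on $r$ and $\epsilon$.

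The only step needing care is injectivity. It relies on two facts: that $p_1=1$ supplies the missing first power sum, and that sortedness removes the permutation ambiguity. Everything else is a standard compactness argument, so I expect no real obstacle.
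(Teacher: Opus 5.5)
Your proposal is correct and follows essentially the same route as the paper: Newton's identities (with $p_1=1$) and sorting give injectivity of the continuous power-sum map on the compact set $\Delta_r^\downarrow$, and compactness then gives the uniform statement. The only difference is that the paper simply invokes uniform continuity of the inverse on the image, whereas you write out the sequential-compactness argument explicitly.
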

\begin{proof}
Newton's identities $ke_k=\sum_{j=1}^k(-1)^{j-1}e_{k-j}p_j$, with $e_0=1$ and $p_1=1$, show that $p_2,\ldots,p_r$ determine the polynomial $\prod_i(z-\lambda_i)$.
The polynomial determines the eigenvalue multiset, and sorting removes the permutation ambiguity.
Thus the power-sum map is continuous and injective on the compact set $\Delta_r^\downarrow$, and its inverse on its image is uniformly continuous.
This argument includes zero and repeated eigenvalues.
\end{proof}

\subsection{Haar and Gaussian identities}

\begin{lemma}[Moments of a Haar vector]
\label[lemma]{pre:haar}
For a Haar-random unit vector $v\in\mathbb C^d$, $P_v=vv^\dagger$, and integer $k\geq1$,
\[
 \mathbb E P_v^{\otimes k}
 =\frac{\sum_{\pi\in S_k}V_\pi}{d(d+1)\cdots(d+k-1)},
\]
where $V_\pi$ permutes the tensor factors~\cite[Prop.~6]{Harrow2013SymmetricSubspace}.
In particular, for $z=v^\dagger\rho v$,
\begin{align*}
 \mathbb Ez^2&=\frac{1+p_2(\rho)}{d(d+1)},&
 \mathbb Ez^3&=\frac{1+3p_2(\rho)+2p_3(\rho)}{d(d+1)(d+2)},\\
 d\mathbb E[zP_v]&=\frac{I+\rho}{d+1}.&&
\end{align*}
\end{lemma}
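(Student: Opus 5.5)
The plan is to take the first displayed identity as given from the cited result \cite[Prop.~6]{Harrow2013SymmetricSubspace} and derive the three special cases by contracting it against suitable operators. The tool throughout is the standard contraction identity
\[
 \operatorname{Tr}\bigl[V_\pi(X_1\otimes\cdots\otimes X_k)\bigr]=\prod_{\text{cycles }(c_1\cdots c_j)\text{ of }\pi}\operatorname{Tr}(X_{c_1}\cdots X_{c_j}),
\]
together with its partial-trace version. For the partial trace, tracing out all factors but the first gives the product of cycle traces for the cycles not containing position $1$. The cycle through position $1$ contributes the ordered product of its $X$'s as an operator on the first factor.

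For the scalar moments, I write $z^k=\operatorname{Tr}(P_v^{\otimes k}\rho^{\otimes k})$, so that $\mathbb Ez^k=\operatorname{Tr}(\mathbb EP_v^{\otimes k}\rho^{\otimes k})$. Each permutation then contributes $\prod_{\text{cycles}}p_{|c|}(\rho)$, with $p_1=1$.
\begin{itemize}
 \item For $k=2$: the identity gives $1$ and the transposition gives $p_2$, so $\mathbb Ez^2=(1+p_2)/(d(d+1))$.
 \item For $k=3$: the identity gives $1$, the three transpositions give $p_2$ each, and the two $3$-cycles give $p_3$ each. The numerator is therefore $1+3p_2+2p_3$, over $d(d+1)(d+2)$.
\end{itemize}

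For the operator identity, I note that $zP_v=\operatorname{Tr}_2\bigl[P_v^{\otimes2}(I\otimes\rho)\bigr]$. Taking expectations,
\[
 \mathbb E[zP_v]=\frac{\operatorname{Tr}_2[(I+V_{(12)})(I\otimes\rho)]}{d(d+1)}.
\]
The identity term gives $\operatorname{Tr}(\rho)I=I$. The swap term gives $\operatorname{Tr}_2[V_{(12)}(I\otimes\rho)]=\rho$, which I will check on basis elements. Multiplying by $d$ then yields $(I+\rho)/(d+1)$.

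No step is a genuine obstacle. The only care needed is the partial-trace convention for the swap term, namely verifying that $\operatorname{Tr}_2[F(I\otimes\rho)]=\rho$ rather than $\rho^{\mathrm T}$. Expanding $F=\sum_{ab}\ket{a}\bra{b}\otimes\ket{b}\bra{a}$ makes this immediate.
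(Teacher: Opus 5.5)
Your proposal is correct and matches the paper's argument for the three special cases: contract the tensor identity against $\rho^{\otimes k}$, with each permutation contributing the power sums indexed by its cycles, and take a partial trace of the $k=2$ identity against $I\otimes\rho$ to get the matrix formula. The only difference is that the paper derives the tensor identity rather than just citing it: it writes a standard complex Gaussian vector as $g=Rv$ with independent radius and direction, and divides the Wick pairing identity $\mathbb E(gg^\dagger)^{\otimes k}=\sum_{\pi\in S_k}V_\pi$ by $\mathbb E R^{2k}=d(d+1)\cdots(d+k-1)$ (since $R^2$ is Gamma distributed with shape $d$), which you could add to make your argument self-contained.
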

\begin{proof}
Write a standard complex Gaussian vector as $g=Rv$, with independent radius and direction.
Gaussian pairing, the complex form of the Wick--Isserlis formula~\cite{Isserlis1918GaussianMoments}, gives $\mathbb E(gg^\dagger)^{\otimes k}=\sum_{\pi\in S_k}V_\pi$, while $R^2$ has the Gamma distribution of shape $d$ and $\mathbb ER^{2k}=d(d+1)\cdots(d+k-1)$.
Dividing yields the tensor identity.
Contracting against $\rho^{\otimes k}$ gives the scalar formulas, with each permutation contributing the moments indexed by its cycles; a partial trace of the $k=2$ identity gives the matrix formula.
\end{proof}

\begin{lemma}[Complex Gaussian quadratic forms]
\label[lemma]{pre:gaussian}
Let $g_1,\ldots,g_r$ be independent standard complex Gaussian vectors in $\mathbb C^d$, each with density $\pi^{-d}e^{-\|g_j\|^2}$.
Then $\mathbb E g_jg_j^\dagger=I$ and, for Hermitian $M$ with $\|M\|_{\mathrm{op}}<1$,
\[
 \mathbb E e^{g_j^\dagger M g_j}=\det(I-M)^{-1}.
\]
For $U=\sum_j\|g_j\|^2$ and integer $m\geq1$,
$\mathbb EU^m=\prod_{a=0}^{m-1}(rd+a)\leq(rd+m)^m$.
For fixed $r$, the Gram matrix $(g_j^\dagger g_k/d)_{j,k=1}^r$ converges in probability to $I_r$.
\end{lemma}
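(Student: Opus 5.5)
We verify the four assertions of \Cref{pre:gaussian} in turn; each reduces to a one-dimensional Gaussian computation or a standard concentration argument.

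\emph{Covariance.} With density $\pi^{-d}e^{-\|g\|^2}$, the coordinates of $g_j$ are independent standard complex Gaussians. Each satisfies $\mathbb E|z|^2=1$ and $\mathbb E z\bar w=0$ for distinct coordinates, so $\mathbb E g_jg_j^\dagger=I$.

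\emph{Exponential moment.} Diagonalize $M=U\Lambda U^\dagger$ with eigenvalues $\mu_k<1$. Unitary invariance of the density lets us replace $g_j$ by $U^\dagger g_j$, so that $g^\dagger Mg=\sum_k\mu_k|z_k|^2$ with independent $z_k$. Each $|z_k|^2$ is $\mathrm{Exp}(1)$, hence $\mathbb E e^{\mu|z|^2}=\int_0^\infty e^{-(1-\mu)s}\,ds=(1-\mu)^{-1}$ for $\mu<1$. Taking the product over $k$ gives $\prod_k(1-\mu_k)^{-1}=\det(I-M)^{-1}$. The hypothesis $\|M\|_{\mathrm{op}}<1$ guarantees $\mu_k<1$, so every integral converges; Tonelli justifies the interchange because the integrand is positive.

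\emph{Moments of $U$.} The variable $U$ is a sum of $rd$ independent $\mathrm{Exp}(1)$ variables, so it is Gamma distributed with shape $rd$ and scale one. Therefore
\[
\mathbb EU^m=\frac{\Gamma(rd+m)}{\Gamma(rd)}=\prod_{a=0}^{m-1}(rd+a).
\]
Each factor is at most $rd+m$, which gives the stated bound.

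\emph{Gram matrix.} Since $r$ is fixed, convergence in probability of the $r\times r$ matrix follows from entrywise convergence. For the diagonal entries, $\|g_j\|^2/d$ has mean $1$ and variance $1/d$ by the Gamma law, so Chebyshev gives convergence to $1$. For $j\neq k$, the entry $g_j^\dagger g_k=\sum_i\bar g_{j,i}g_{k,i}$ is a sum of $d$ independent mean-zero terms. Each term has $\mathbb E|\bar g_{j,i}g_{k,i}|^2=1$, so $\mathbb E|g_j^\dagger g_k/d|^2=1/d\to0$, and Chebyshev again yields convergence to $0$. A union bound over the $r^2$ entries completes the argument.

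None of these steps presents a real obstacle. The only points needing care are the sign condition $\mu_k<1$, which ensures the exponential integrals converge, and the correct normalization of the complex Gaussian, which is fixed by the stated density.
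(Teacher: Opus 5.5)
Your proposal is correct and follows the paper's own proof essentially step for step: diagonalizing $M$ to reduce to one-dimensional exponential integrals $(1-\mu)^{-1}$, using the Gamma law with shape $rd$ for $U$, and applying Chebyshev plus a union bound to the Gram entries. The paper also remarks that the determinant identity extends by analytic continuation to complex linear combinations of Hermitian matrices near the origin, which it uses later in the cumulant expansion; that extension is not part of the lemma as stated, so your argument covers everything the statement requires.
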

\begin{proof}
Diagonalizing $M$ reduces the determinant identity to one-dimensional Gaussian integrals, each equal to $(1-\lambda)^{-1}$.
Squared Gaussian coordinates are independent exponential variables with mean one, so $U$ is Gamma distributed with shape $rd$, giving its moments.
After normalization by $d$, each diagonal Gram entry has mean one and variance $1/d$, and each off-diagonal entry has mean zero and squared-modulus expectation $1/d$.
Chebyshev's inequality and a union bound prove the last assertion.
The determinant identity also holds analytically near the origin for complex linear combinations of Hermitian matrices, by analytic continuation.
\end{proof}

\subsection{Martingales, concentration, and second moments}

We use martingales to express the conditional cancellations needed below, following the standard definitions and $L^2$ theory~\cite[Chs.~10 and 12]{Williams1991ProbabilityMartingales}.
A filtration $(\mathcal F_i)_{i=0}^N$ is an increasing sequence of sigma-algebras representing the information available through round $i$.
A process $(M_i)$ of random vectors or matrices is adapted if each $M_i$ is determined by $\mathcal F_i$.
An adapted process is a martingale if $\mathbb E\|M_i\|<\infty$ and
\[
    \mathbb E[M_i\mid\mathcal F_{i-1}]=M_{i-1}
    \qquad (i\geq1),
\]
where matrix conditional expectations are taken entrywise.

Its increments $D_i=M_i-M_{i-1}$ are therefore martingale differences: they are determined by $\mathcal F_i$ and have conditional mean zero given $\mathcal F_{i-1}$.
Conversely, summing such differences starting from an integrable $\mathcal F_0$-measurable variable gives a martingale.
Thus the next increment has zero average once the past is fixed, even though increments need not be independent.

A scalar or matrix factor used at round $i$ is called predictable when it is determined by $\mathcal F_{i-1}$, before that round's outcome is observed.
For example, if $V_i$ is an outcome statistic from an adaptive measurement, then $D_i=V_i-\mathbb E[V_i\mid\mathcal F_{i-1}]$ is a martingale difference whenever $V_i$ is integrable.
The measurement may depend on the past; subtracting its conditional mean removes only the predictable part of the next observation.
Below, all martingales are square-integrable, meaning that their squared norms have finite expectation.

\begin{lemma}[Scalar concentration]
\label[lemma]{pre:binomial}
If $B\sim\operatorname{Bin}(N,p)$, then $\Pr(B\leq Np/2)\leq e^{-Np/8}$.
Consequently, when $p\geq p_*>0$, a budget $N\geq8p_*^{-1}(M+\log(1/\delta))$ supplies at least $M$ successes with probability at least $1-\delta$.
For independent $[0,1]$-valued observations with common mean $p$, Hoeffding's inequality~\cite{Hoeffding1963ProbabilityInequalities} gives $\Pr(|\widehat p-p|\geq\eta)\leq2e^{-2N\eta^2}$ for their sample mean.
\end{lemma}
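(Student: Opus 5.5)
The first claim is a lower-tail Chernoff bound for the binomial, and the second follows from it by choosing $N$ large enough. Hoeffding's inequality is quoted directly.

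For the first claim, apply the exponential Markov inequality to $-B$. For $s>0$,
\[
\Pr(B\le Np/2)\le e^{sNp/2}\,\mathbb E e^{-sB}=e^{sNp/2}\bigl(1-p+pe^{-s}\bigr)^N\le \exp\!\bigl(Np(s/2+e^{-s}-1)\bigr),
\]
using $1+u\le e^u$. Choosing $s=\log 2$ gives exponent $Np(\tfrac{\log 2}{2}-\tfrac12)\approx -0.153\,Np$. This is at most $-Np/8$, since $\tfrac12-\tfrac{\log 2}{2}\approx0.153>0.125$. Alternatively, one can quote the standard multiplicative Chernoff bound $\Pr(B\le(1-\gamma)\mu)\le e^{-\gamma^2\mu/2}$ with $\gamma=1/2$. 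The only point needing care is checking this numerical constant.

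For the consequence, the number of successes stochastically dominates $\operatorname{Bin}(N,p_*)$. This is immediate when trials are independent with success probability $p\ge p_*$. If trials are adaptive but each succeeds with conditional probability at least $p_*$, the domination follows by a standard coupling. It therefore suffices to treat $B\sim\operatorname{Bin}(N,p_*)$. With $N\ge 8p_*^{-1}(M+\log(1/\delta))$, we get $Np_*/2\ge 4M\ge M$. Hence
\[
\Pr(B<M)\le\Pr(B\le Np_*/2)\le e^{-Np_*/8}\le e^{-M-\log(1/\delta)}\le\delta.
\]

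Hoeffding's bound for i.i.d.\ $[0,1]$ observations is the standard two-sided inequality $2e^{-2N\eta^2}$, cited from \cite{Hoeffding1963ProbabilityInequalities}, and needs no argument.
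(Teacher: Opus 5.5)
Your proof is correct and follows the paper's argument: apply Markov's inequality to $e^{-sB}$ at $s=\log 2$ and use $1+u\le e^u$, which gives exponent $-(1-\log 2)Np/2\le -Np/8$. The differences are only cosmetic. The paper obtains the budget claim by applying the bound directly at the true $p\geq p_*$, so $Np/2\geq M$ and $e^{-Np/8}\leq\delta$ hold without your stochastic-domination step. It also sketches Hoeffding's lemma through the tilted-variance bound $1/4$ instead of simply citing it.
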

\begin{proof}
The first bound follows by applying Markov's inequality to $e^{-sB}$ at $s=\log2$, using $\mathbb Ee^{-sB}\leq\exp[Np(e^{-s}-1)]$; the stated budget makes both $Np/2\geq M$ and $e^{-Np/8}\leq\delta$.
For the last bound, a variable $X\in[0,1]$ satisfies $\mathbb Ee^{s(X-\mathbb EX)}\leq e^{s^2/8}$: the second derivative of its log moment-generating function is the variance under an exponential tilt and is at most $1/4$.
Independence, Markov's inequality, and optimization over positive and negative $s$ give the result.
\end{proof}

\begin{lemma}[Positive-matrix concentration with conditional mean bounds]
\label[lemma]{pre:matrix-chernoff}
Let $Z_1,\ldots,Z_N$ be Hermitian $d\times d$ matrices adapted to a filtration $(\mathcal F_i)_{i=0}^N$, such that $0\preceq Z_i\preceq I$ and $\mathbb E[Z_i\mid\mathcal F_{i-1}]\preceq\mu I$.
For $L>N\mu>0$,
\[
 \Pr\!\left(\left\|\sum_{i=1}^NZ_i\right\|_{\mathrm{op}}\geq L\right)
 \leq d\left(\frac{eN\mu}{L}\right)^L.
\]
\end{lemma}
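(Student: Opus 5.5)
The plan is the standard matrix-Chernoff argument for adapted sequences, using the Lieb concavity theorem or Golden--Thompson to peel off one conditional expectation at a time. Fix $s>0$ and write $S_n=\sum_{i\leq n}Z_i$. By Markov's inequality applied to $\Tr e^{sS_N}\geq e^{s\|S_N\|_{\mathrm{op}}}$ (valid since $S_N\succeq0$ and the trace of a positive exponential dominates its top eigenvalue),
\[
 \Pr(\|S_N\|_{\mathrm{op}}\geq L)\leq e^{-sL}\,\mathbb E\Tr e^{sS_N}.
\]
So it suffices to show $\mathbb E\Tr e^{sS_N}\leq d\exp\bigl(N\mu(e^s-1)\bigr)$. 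Then choosing $s=\log(L/(N\mu))>0$ (allowed since $L>N\mu$) gives $e^{-sL}e^{N\mu(e^s-1)}\leq (N\mu/L)^L e^{L}$, which is the claimed bound after dropping the harmless factor $e^{-N\mu}\leq1$.

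For the moment-generating bound, first use the scalar convexity inequality $e^{sx}\leq 1+(e^s-1)x$ on $[0,1]$. Via the spectral theorem this gives the operator inequality $e^{sZ_i}\preceq I+(e^s-1)Z_i$. Taking conditional expectations,
\[
 \mathbb E[e^{sZ_i}\mid\mathcal F_{i-1}]\preceq I+(e^s-1)\mu I\preceq e^{(e^s-1)\mu}I.
\]
Next peel off the last term. By Golden--Thompson, $\Tr e^{sS_{N-1}+sZ_N}\leq\Tr\bigl(e^{sS_{N-1}}e^{sZ_N}\bigr)$. Since $e^{sS_{N-1}}$ is $\mathcal F_{N-1}$-measurable and positive, taking $\mathbb E[\cdot\mid\mathcal F_{N-1}]$ inside the trace (linearity) gives
\[
 \mathbb E\bigl[\Tr(e^{sS_{N-1}}e^{sZ_N})\mid\mathcal F_{N-1}\bigr]
 =\Tr\bigl(e^{sS_{N-1}}\,\mathbb E[e^{sZ_N}\mid\mathcal F_{N-1}]\bigr)
 \leq e^{(e^s-1)\mu}\Tr e^{sS_{N-1}}.
\]
Here we use that $\Tr(AB)\leq\Tr(AC)$ whenever $A\succeq0$ and $B\preceq C$. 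Iterating this over $i=N,\dots,1$ with the tower property, and ending with $\Tr e^{0}=d$, yields $\mathbb E\Tr e^{sS_N}\leq d\,e^{N\mu(e^s-1)}$.

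The only delicate point is the iteration. Golden--Thompson applies only to two matrices, so we must apply it once per step to the pair $(S_{i-1},Z_i)$ rather than splitting all exponentials at once. This is exactly what the adapted structure permits, because $S_{i-1}$ is predictable at round $i$; integrability is trivial since every $Z_i$ is bounded. If a sharper constant were needed, one could instead use Lieb's concavity theorem, applying Jensen to $A\mapsto\Tr\exp(H+\log A)$, but Golden--Thompson suffices for the stated bound.
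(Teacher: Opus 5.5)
Your proposal is correct and follows the paper's proof essentially step for step. Both use the convexity bound $e^{sZ_i}\preceq I+(e^s-1)Z_i$, apply Golden--Thompson once per round to the pair $(S_{i-1},Z_i)$ with $S_{i-1}$ predictable, iterate from $\Tr e^{0}=d$, and finish with Markov's inequality at $s=\log(L/(N\mu))$, dropping the factor $e^{-N\mu}$.
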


\begin{proof}
We first control the conditional exponential moment of each increment, then iterate this estimate and apply Markov's inequality.
For $s>0$ and $x\in[0,1]$, convexity of the exponential gives $e^{sx}\leq1+(e^s-1)x$.
Applying this scalar inequality to the eigenvalues of $Z_i$ yields
\[
    e^{sZ_i}\preceq I+(e^s-1)Z_i.
\]
Taking conditional expectation and using the assumed conditional mean bound, we obtain
\[
    \mathbb E[e^{sZ_i}\mid\mathcal F_{i-1}]
    \preceq [1+\mu(e^s-1)]I
    \preceq e^{\mu(e^s-1)}I.
\]

Write $S_i=\sum_{j=1}^iZ_j$ and $S_0=0$.
Although $S_{i-1}$ and $Z_i$ need not commute, the Golden--Thompson inequality~\cite{Forrester2014GoldenThompson} gives
$\operatorname{Tr}e^{s(S_{i-1}+Z_i)}
\leq\operatorname{Tr}(e^{sS_{i-1}}e^{sZ_i})$.
Since $S_{i-1}$ is determined by $\mathcal F_{i-1}$,
\[
 \mathbb E[\operatorname{Tr}e^{sS_i}\mid\mathcal F_{i-1}]
 \leq\operatorname{Tr}\!\left(
       e^{sS_{i-1}}\mathbb E[e^{sZ_i}\mid\mathcal F_{i-1}]
     \right)
 \leq e^{\mu(e^s-1)}\operatorname{Tr}e^{sS_{i-1}}.
\]
Here the last inequality uses positivity of $e^{sS_{i-1}}$: if $A\succeq0$ and $B\preceq cI$, then $\operatorname{Tr}(AB)\leq c\operatorname{Tr}A$.
Taking expectations and iterating from $\operatorname{Tr}e^{sS_0}=d$ gives
\[
    \mathbb E\operatorname{Tr}e^{sS_N}
    \leq d\,e^{N\mu(e^s-1)}.
\]

Because $S_N\succeq0$, the event $\|S_N\|_{\mathrm{op}}\geq L$ implies $\operatorname{Tr}e^{sS_N}\geq e^{sL}$.
Markov's inequality therefore yields
\[
    \Pr(\|S_N\|_{\mathrm{op}}\geq L)
    \leq e^{-sL}\mathbb E\operatorname{Tr}e^{sS_N}
    \leq d\,e^{N\mu(e^s-1)-sL}.
\]
The assumption $L>N\mu$ allows the positive choice $s=\log(L/(N\mu))$.
Substitution gives
\[
    \Pr(\|S_N\|_{\mathrm{op}}\geq L)
    \leq d\,e^{-N\mu}\left(\frac{eN\mu}{L}\right)^L
    \leq d\left(\frac{eN\mu}{L}\right)^L.
\]
The argument uses conditional mean bounds throughout and does not require independence.
\end{proof}

\begin{lemma}[Martingale isometry and matrix products]
\label[lemma]{pre:orthogonality}
Let $D_1,\ldots,D_N$ be square-integrable martingale differences with respect to $(\mathcal F_i)_{i=0}^N$, taking values in a finite-dimensional real or complex Hilbert space.
Let $M_0$ be square-integrable and determined by $\mathcal F_0$.
For $M_N=M_0+\sum_{i=1}^ND_i$,
\[
    \mathbb E\|M_N\|^2
    =\mathbb E\|M_0\|^2+\sum_{i=1}^N\mathbb E\|D_i\|^2.
\]
This applies to matrices with the Hilbert--Schmidt norm and does not require $M_0$ to have mean zero.
For matrix-valued differences and predictable matrices $L_i,R_i$, the products $L_iD_iR_i$ are also martingale differences whenever they are integrable; the same holds for predictable scalar multiples.
If the transformed increments are square-integrable, the same isometry applies to their sums.
\end{lemma}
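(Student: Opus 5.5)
The plan is to prove the isometry by expanding the squared norm of the telescoped sum and showing that every cross term vanishes; the matrix-product statement will then follow from the tower property.

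Write $M_N=M_0+\sum_{i=1}^N D_i$ and expand $\|M_N\|^2=\langle M_N,M_N\rangle$. This produces the diagonal terms $\|M_0\|^2$ and $\|D_i\|^2$, together with cross terms $2\operatorname{Re}\langle M_0,D_i\rangle$ and $2\operatorname{Re}\langle D_j,D_i\rangle$ for $j<i$. Each cross term has the form $\langle X,D_i\rangle$ with $X$ determined by $\mathcal F_{i-1}$: either $X=M_0$, which is $\mathcal F_0\subseteq\mathcal F_{i-1}$-measurable, or $X=D_j$ with $j\le i-1$. Square-integrability and Cauchy--Schwarz make $\langle X,D_i\rangle$ integrable. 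Conditioning on $\mathcal F_{i-1}$ and pulling out the measurable factor, done coordinatewise in a fixed orthonormal basis since the space is finite-dimensional, gives
\[
\mathbb E\langle X,D_i\rangle=\mathbb E\langle X,\mathbb E[D_i\mid\mathcal F_{i-1}]\rangle=0.
\]
Hence all cross terms vanish and the identity follows. No mean-zero assumption on $M_0$ is needed, since only its $\mathcal F_0$-measurability enters.

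For transformed increments, suppose $L_i$ and $R_i$ are $\mathcal F_{i-1}$-measurable and $L_iD_iR_i$ is integrable. Each entry of $L_iD_iR_i$ is a finite sum of terms (predictable entry)$\times$(entry of $D_i$)$\times$(predictable entry). The conditional expectation of each such term equals the product of the predictable factors with $\mathbb E[(D_i)_{ab}\mid\mathcal F_{i-1}]=0$, so $\mathbb E[L_iD_iR_i\mid\mathcal F_{i-1}]=0$, and adaptedness is clear. The integrability needed to pull out predictable factors is the only subtle point. If the factors are not bounded, I would justify the step by truncating on $\{\|L_i\|,\|R_i\|\le K\}$, which is $\mathcal F_{i-1}$-measurable, and letting $K\to\infty$ by dominated convergence using the assumed integrability. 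The scalar case is the special case $L_i=cI$ and $R_i=I$. Applying the first part to these increments then gives the isometry for their sums whenever they are square-integrable.

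The only real obstacle is this measure-theoretic care in pulling out unbounded predictable factors. I would handle it with the generalized conditional expectation property for integrable products, or with the truncation argument above.
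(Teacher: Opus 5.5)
Your proposal is correct and matches the paper's proof. You expand the squared norm, eliminate each cross term $\mathbb E\langle X,D_i\rangle$ with $X$ determined by $\mathcal F_{i-1}$ by conditioning on $\mathcal F_{i-1}$ (using Cauchy--Schwarz for integrability), and handle the transformed increments by pulling the predictable factors out of the conditional expectation, truncating on past-measurable events where their norms are bounded to cover unbounded factors; your coordinatewise treatment is just a more explicit version of the same step.
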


\begin{proof}
The martingale-difference condition is $\mathbb E[D_j\mid\mathcal F_{j-1}]=0$.
For $i<j$, the earlier increment $D_i$ is determined by $\mathcal F_{j-1}$, so conditioning on this information gives
\[
    \mathbb E\langle D_i,D_j\rangle
    =\mathbb E\!\left[
       \left\langle D_i,\mathbb E[D_j\mid\mathcal F_{j-1}]\right\rangle
      \right]
    =0.
\]
Similarly, $M_0$ is determined by $\mathcal F_{j-1}$, and hence $\mathbb E\langle M_0,D_j\rangle=0$.
All these inner products are integrable by Cauchy--Schwarz and the square-integrability assumptions.

Expanding the squared norm now gives
\[
\begin{aligned}
    \mathbb E\|M_N\|^2
    =
    \mathbb E\|M_0\|^2+\sum_{i=1}^N\mathbb E\|D_i\|^2
    +2\operatorname{Re}\sum_{i=1}^N\mathbb E\langle M_0,D_i\rangle
    +2\operatorname{Re}\sum_{i<j}\mathbb E\langle D_i,D_j\rangle.
\end{aligned}
\]
The two sums of cross terms vanish, proving the identity.
Thus orthogonality in expectation is sufficient; the increments need not be independent.
For matrices, the Hilbert-space inner product is $\langle A,B\rangle=\operatorname{Tr}(A^\dagger B)$, whose induced norm is the Hilbert--Schmidt norm.

Finally, predictability means that $L_i,R_i$ are determined by $\mathcal F_{i-1}$.
They can therefore be taken outside the conditional expectation:
\[
    \mathbb E[L_iD_iR_i\mid\mathcal F_{i-1}]
    =L_i\,\mathbb E[D_i\mid\mathcal F_{i-1}]\,R_i=0.
\]
For unbounded predictable factors, first apply this identity on the past-measurable events where their norms are bounded, then pass to the limit using integrability of $L_iD_iR_i$.
The transformed increments are adapted and have conditional mean zero, as required.
The argument for predictable scalar multiples is identical.
\end{proof}

\begin{lemma}[Covariance of strongly log-concave measures]
\label[lemma]{pre:bras-lieb}
Suppose a probability measure on $\mathbb R^N$ has density proportional to $e^{-V(x)}$, where $V:\mathbb R^N\to\mathbb R\cup\{+\infty\}$ and $V(x)-a\|x\|^2/2$ is lower semicontinuous and convex for some $a>0$.
Then
\[
    \operatorname{Cov}(x)\preceq a^{-1}I.
\]
This includes extended-valued potentials imposing a convex support constraint.
For a mean-zero complex vector $g$, if its real representation $(\operatorname{Re}g,\operatorname{Im}g)$ has covariance bounded by $a^{-1}I$, then
\[
    \mathbb E[gg^\dagger]\preceq2a^{-1}I.
\]
\end{lemma}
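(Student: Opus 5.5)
The plan is to reduce the covariance bound to a variance bound for linear test functions, and then to a one-dimensional statement. Fix a unit vector $u\in\mathbb R^N$. It suffices to show $\operatorname{Var}(\langle u,x\rangle)\leq a^{-1}$, since $u^{\mathrm T}\operatorname{Cov}(x)u=\operatorname{Var}(\langle u,x\rangle)$.

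\emph{Smooth case.} When $V$ is finite and $C^2$ with $\nabla^2V\succeq aI$, I will invoke the Brascamp--Lieb inequality~\cite{Carlen2013BrascampLieb}:
\[
\operatorname{Var}(f)\leq\mathbb E\bigl[\nabla f^{\mathrm T}(\nabla^2V)^{-1}\nabla f\bigr].
\]
Taking $f(x)=\langle u,x\rangle$, so that $\nabla f=u$, gives $\operatorname{Var}(f)\leq a^{-1}\|u\|^2=a^{-1}$ directly.

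\emph{General case.} For general, possibly extended-valued $V$, I will write $V(x)=W(x)+a\|x\|^2/2$ with $W$ convex and lower semicontinuous, and approximate $W$ by smooth convex functions. Concretely, I will take the Moreau envelope $W_k$ of $W$ with parameter $1/k$ and then mollify it. The functions $W_k$ are finite, convex, and increase pointwise to $W$, and $e^{-W_k}\to e^{-W}$ pointwise, including at points outside the domain of $W$, where both sides tend to $0$. Since each $e^{-W_k-a\|x\|^2/2}$ is dominated by $e^{-c-a\|x\|^2/2+\text{linear}}$ uniformly in $k$ (Moreau envelopes of a convex function share a common affine minorant), dominated convergence shows that normalizing constants, means, and second moments converge. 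The bound $a^{-1}I$ therefore passes to the limit.

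I expect this approximation step, namely getting a uniform integrable majorant, to be the main technical obstacle. As a fallback I would avoid approximation altogether, using Prékopa's theorem. Decompose $x=su+y$ with $y\perp u$, so that $\|x\|^2=s^2+\|y\|^2$. Integrating the log-concave function $e^{-W(su+y)-a\|y\|^2/2}$ over $y$ yields a log-concave function of $s$. Hence the marginal of $s$ has density $e^{-as^2/2-w(s)}$ with $w$ convex. The one-dimensional bound $\operatorname{Var}(s)\leq1/a$ then follows either by the same limiting argument, which is easy in one dimension, or by a direct integration-by-parts argument.

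\emph{Complex statement.} For $v\in\mathbb C^N$, write $v^\dagger g=\langle\xi_1,X\rangle+i\langle\xi_2,X\rangle$, where $X=(\operatorname{Re}g,\operatorname{Im}g)$ and $\xi_1,\xi_2$ are real vectors with $\|\xi_1\|=\|\xi_2\|=\|v\|$. Because $g$ has mean zero,
\[
v^\dagger\mathbb E[gg^\dagger]v=\mathbb E|v^\dagger g|^2
=\operatorname{Var}\langle\xi_1,X\rangle+\operatorname{Var}\langle\xi_2,X\rangle\leq2a^{-1}\|v\|^2,
\]
which is the claimed bound $\mathbb E[gg^\dagger]\preceq2a^{-1}I$.
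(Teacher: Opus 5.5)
Your proposal is correct and follows essentially the same route as the paper: Brascamp--Lieb with a linear test function in the smooth case, then a Moreau-envelope approximation of $W$ followed by mollification, a common affine minorant, and dominated convergence for the extended-valued case, and finally the same real/imaginary splitting for the complex bound. One small wording point: after mollification the approximants need not increase monotonically to $W$. The paper, like your argument, only needs pointwise convergence, which holds if the mollifier radius is chosen small enough on balls of growing radius.
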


\begin{proof}
We first prove the real covariance bound for smooth, finite-valued potentials, then justify the extension to nonsmooth and extended-valued potentials.
If $V$ is smooth, convexity of $V(x)-a\|x\|^2/2$ implies $\nabla^2V(x)\succeq aI$.
The Brascamp--Lieb variance inequality~\cite{Carlen2013BrascampLieb} therefore gives, for any real vector $u$,
\[
    \operatorname{Var}(u^{\mathsf T}x)
    \leq\mathbb E\!\left[u^{\mathsf T}(\nabla^2V(x))^{-1}u\right]
    \leq a^{-1}\|u\|^2.
\]
Since $\operatorname{Var}(u^{\mathsf T}x)=u^{\mathsf T}\operatorname{Cov}(x)u$, this proves the matrix inequality in the smooth case.

For the general case, write $W(x)=V(x)-a\|x\|^2/2$.
We construct smooth convex approximations to $W$ while retaining the quadratic term exactly.
For $\varepsilon>0$, define the quadratic regularization
\[
    W_\varepsilon(x)
    =\inf_{y\in\mathbb R^N}
       \left\{W(y)+\frac{\|x-y\|^2}{2\varepsilon}\right\}.
\]
For a proper lower-semicontinuous convex function, $W_\varepsilon$ is finite and convex, and $W_\varepsilon(x)\uparrow W(x)$ as $\varepsilon\downarrow0$, including at points where $W(x)=+\infty$.
Smooth $W_{1/k}$ by convolution with a symmetric, nonnegative, smooth kernel of integral one and sufficiently small compact support.
The support radius can be chosen so that the resulting smooth convex function $W_k$ satisfies
$|W_k-W_{1/k}|\leq1/k$ on $\{\|x\|\leq k\}$.
Consequently, $W_k(x)\to W(x)$ pointwise.

To justify convergence of moments, choose an affine lower bound $W(x)\geq b^{\mathsf T}x+c$, which exists for a proper lower-semicontinuous convex function.
Completing the square in the definition of $W_\varepsilon$ gives
\[
    W_\varepsilon(x)
    \geq b^{\mathsf T}x+c-\frac{\varepsilon}{2}\|b\|^2.
\]
Convolution with a symmetric kernel preserves affine functions, so the approximations above satisfy the common bound
\[
    W_k(x)\geq b^{\mathsf T}x+c-\frac12\|b\|^2.
\]
Set $V_k(x)=a\|x\|^2/2+W_k(x)$.
The common lower bound implies, for a finite constant $C$ independent of $k$,
\[
    e^{-V_k(x)}
    \leq C e^{-a\|x\|^2/4}.
\]
The right-hand side remains integrable after multiplication by $1+\|x\|^2$.
Dominated convergence therefore gives convergence of the normalizing constants and of all unnormalized first and second moments.
The limiting normalizing constant is positive by the assumed probability-density representation.
Thus the means and covariance matrices of the normalized distributions also converge.

Each $V_k$ is smooth and satisfies $\nabla^2V_k\succeq aI$, so its covariance is bounded by $a^{-1}I$ by the first part of the proof.
Passing to the limit proves the same bound for $V$.
This also handles a convex support constraint: where $V=+\infty$, the limiting density is simply zero.

Finally, let $g$ be mean zero and let $u$ be any complex vector.
The real and imaginary parts of $u^\dagger g$ are real linear functionals of $(\operatorname{Re}g,\operatorname{Im}g)$, each with coefficient vector of Euclidean norm $\|u\|$.
Both have mean zero, so the assumed real covariance bound yields
\[
    \mathbb E|u^\dagger g|^2
    =\operatorname{Var}(\operatorname{Re}u^\dagger g)
       +\operatorname{Var}(\operatorname{Im}u^\dagger g)
    \leq2a^{-1}\|u\|^2.
\]
Since $\mathbb E|u^\dagger g|^2=u^\dagger\mathbb E[gg^\dagger]u$, the complex matrix bound follows.
\end{proof}
\subsection{Single-copy protocols, transcripts, and testing}
\label{pre:learning-model}

An adaptive single-copy protocol measures a fresh copy of the same unknown state in each round and retains only classical information between copies.
Condition on an input-independent classical random seed $z$, which may contain all the protocol's random choices.
At round $i$, after the outcomes $h_{<i}=(y_1,\ldots,y_{i-1})$, the protocol chooses a POVM $\{E_{i,y}(z,h_{<i})\}_y$.
Thus the probability of a complete outcome sequence $h=(y_1,\ldots,y_T)$, conditional on $z$, is
\begin{equation}
\label{pre:transcript-law}
 p_\rho(h\mid z)
 =\prod_{i=1}^T\operatorname{Tr}\!\left[
     E_{i,y_i}(z,h_{<i})\rho\right].
\end{equation}
A nonadaptive protocol fixes the entire POVM sequence conditional on $z$, so its effects do not depend on $h_{<i}$.
Measurements may differ between rounds and share randomness; all final classical processing is unrestricted.
The copy budget includes unsuccessful filtering attempts.
We use a fixed worst-case budget, padding early termination with ignored measurements when necessary.
In the nonadaptive model these measurements are taken from the preselected schedule.

A channel or filter followed immediately by an output measurement is allowed: the composition is a POVM on one original copy, with failures included as outcomes.
For lower bounds, diagonalize each POVM effect and record its spectral label to refine it into rank-one effects $w_yP_y$, where $P_y=v_yv_y^\dagger$, $\|v_y\|=1$, and $\sum_yw_yP_y=I$.
Forgetting the extra label recovers the original outcome, so refinement only provides more information and preserves nonadaptivity.
The formulas extend to continuous POVMs by replacing sums with integrals and using the corresponding effect densities.

For an ensemble $\nu$, a state $\rho\sim\nu$ is drawn \emph{once}, and all input copies are copies of that state.
Its transcript law therefore satisfies
\begin{equation}
\label{pre:ensemble-transcript}
 p_\nu(h\mid z)=\int p_\rho(h\mid z)\,\nu(d\rho)
 =\int\prod_{i=1}^T\operatorname{Tr}\!\left[
     E_{i,y_i}(z,h_{<i})\rho\right]\nu(d\rho).
\end{equation}
In particular, this is an average of products, not a product of averages.
We include $z$ in the full transcript $H=(z,h)$ and denote its law by $P_\nu^{(T)}$.
An individual execution returns the transcript $H$; its probability law describes repeated executions and is the object compared in the analysis.
The adaptive measurement tree is simply the history-dependent choice of effects in \eqref{pre:transcript-law}.
Following the learning-tree approach~\cite{Chen2022QuantumMemory}, our lower bound compares the transcript laws of two ensembles; the learner's final answer is classical postprocessing of this transcript.

For probability measures $P,Q$ with densities $p,q$ relative to a common dominating measure $\mu$, define
$\operatorname{TV}(P,Q)=\tfrac12\int|p-q|\,d\mu=\sup_A|P(A)-Q(A)|$.

\begin{lemma}[Le Cam's two-point method]
\label[lemma]{pre:testing}
The largest equal-prior success probability for distinguishing $P$ from $Q$, allowing randomized tests, is $\tfrac12(1+\operatorname{TV}(P,Q))$.
Classical postprocessing cannot increase total variation.
Consequently, suppose two state ensembles $\nu_0,\nu_1$ satisfy
\[
 \Pr_{\rho\sim\nu_j}(|F(\rho)-a_j|\leq b)\geq1-\eta_j,
 \qquad a_1-a_0>2(b+\epsilon).
\]
If an estimator of $F$ has error at most $\epsilon$ with probability at least $1-\delta$ on every state, their transcript laws must satisfy
\[
 \operatorname{TV}(P_{\nu_0}^{(T)},P_{\nu_1}^{(T)})
 \geq1-2\delta-\eta_0-\eta_1.
\]
\end{lemma}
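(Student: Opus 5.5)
The statement to prove is \Cref{pre:testing}. I will prove its three claims in turn: the optimal testing identity, the data-processing property, and the consequence for estimators.

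\emph{Optimal test.} Let $p,q$ be densities of $P,Q$ with respect to a common dominating measure $\mu$. Any randomized test is a measurable acceptance function $\phi:\text{records}\to[0,1]$: it outputs ``$P$'' with probability $\phi(H)$. With equal priors, its success probability is
\[
 \tfrac12\int\phi p\,d\mu+\tfrac12\int(1-\phi)q\,d\mu
 =\tfrac12+\tfrac12\int\phi\,(p-q)\,d\mu .
\]
This is maximized by $\phi=\boldsymbol1_{\{p>q\}}$. The resulting value is $\tfrac12+\tfrac12\int(p-q)_+\,d\mu=\tfrac12(1+\operatorname{TV}(P,Q))$, because $\int(p-q)_+\,d\mu=\tfrac12\int|p-q|\,d\mu$ when $\int p=\int q=1$.

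\emph{Data processing.} Classical postprocessing is a Markov kernel $K$. For any event $A$, the map $H\mapsto K(A\mid H)$ takes values in $[0,1]$. Therefore
\[
 |PK(A)-QK(A)|=\Bigl|\int K(A\mid\cdot)\,(p-q)\,d\mu\Bigr|\le\sup_{0\le\phi\le1}\Bigl|\int\phi\,(p-q)\,d\mu\Bigr|=\operatorname{TV}(P,Q).
\]
Taking the supremum over $A$ gives $\operatorname{TV}(PK,QK)\le\operatorname{TV}(P,Q)$.

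\emph{Consequence for estimators.} Set the threshold $c=(a_0+a_1)/2$. From the transcript, compute the estimate $\widehat F$ and declare $\nu_1$ iff $\widehat F>c$; this is a postprocessing of the transcript. Consider the ensemble $\nu_0$ and the good event that both $|F(\rho)-a_0|\le b$ and $|\widehat F-F(\rho)|\le\epsilon$ hold. The first condition fails with probability at most $\eta_0$ over the draw of $\rho$. Conditional on any fixed $\rho$, the second fails with probability at most $\delta$. A union bound therefore gives the good event probability at least $1-\eta_0-\delta$. On it,
\[
 \widehat F\le a_0+b+\epsilon<c,
\]
since $a_1-a_0>2(b+\epsilon)$; so the test correctly outputs $\nu_0$. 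Symmetrically, under $\nu_1$ the test errs with probability at most $\eta_1+\delta$.

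Write $P=P_{\nu_0}^{(T)}$ and $Q=P_{\nu_1}^{(T)}$. For $A=\{\widehat F\le c\}$, this gives $P(A)-Q(A)\ge(1-\eta_0-\delta)-(\eta_1+\delta)$. Hence
\[
 \operatorname{TV}(P,Q)\ge P(A)-Q(A)\ge 1-2\delta-\eta_0-\eta_1,
\]
which is the claimed bound.

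The only subtle point is the averaging over the random draw $\rho\sim\nu_j$. The estimator guarantee holds for every fixed state, so it also holds for the state-averaged transcript law in \eqref{pre:ensemble-transcript} after conditioning on $\rho$. The continuous-record case needs no change, because the whole argument is phrased through densities relative to $\mu$.
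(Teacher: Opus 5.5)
Your proof is correct and follows the paper's argument: optimize over acceptance functions or sets, treat postprocessing as a test on the original transcript, then threshold the estimate at $(a_0+a_1)/2$ and union-bound the concentration and estimation failures under each ensemble. The only cosmetic difference is that you lower-bound $P(A)-Q(A)$ for $A=\{\widehat F\le c\}$ directly, whereas the paper bounds the average success probability by $1-\delta-(\eta_0+\eta_1)/2$ and then invokes the testing formula; the two are equivalent, and if the final estimate uses extra randomness beyond the transcript, your data-processing step covers it.
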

\begin{proof}
A test choosing $P$ on a measurable set $A$ succeeds with probability $\tfrac12(1+P(A)-Q(A))$.
Maximizing over $A$ gives the first formula; randomized tests are mixtures of such decisions and cannot improve it.
Postprocessing followed by a test is itself a test on the original data, proving contraction.
For estimation, threshold the answer at $(a_0+a_1)/2$.
On the concentration and estimation events this is correct, so its average success is at least $1-\delta-(\eta_0+\eta_1)/2$.
The testing formula gives the claim, the usual two-point reduction~\cite{LeCam1986,Tsybakov2009}.
\end{proof}

\begin{lemma}[Total variation along a smooth family]
\label[lemma]{pre:tv-path}
Let $p_\theta$, $\theta\in[u,v]$, be probability densities relative to a common measure $\mu$, and denote the corresponding laws by $P_\theta$.
Suppose $\theta\mapsto p_\theta(H)$ is absolutely continuous for $\mu$-almost every $H$, its derivative is jointly measurable, and $\int_u^v\|\partial_\theta p_\theta\|_1\,d\theta<\infty$, where $\|\partial_\theta p_\theta\|_1=\int|\partial_\theta p_\theta(H)|\,\mu(dH)$.
Then
\[
 \operatorname{TV}(P_u,P_v)
 \leq\frac12\int_u^v\|\partial_\theta p_\theta\|_1\,d\theta
 \leq\frac{v-u}{2}\sup_{\theta\in[u,v]}
               \|\partial_\theta p_\theta\|_1.
\]
The same conclusion holds when the family is continuously differentiable as an $L^1(\mu)$-valued map.
\end{lemma}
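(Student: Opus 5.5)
The plan is to start from the pointwise identity $p_v(H)-p_u(H)=\int_u^v\partial_\theta p_\theta(H)\,d\theta$. This holds for $\mu$-almost every $H$ because $\theta\mapsto p_\theta(H)$ is absolutely continuous there. Take absolute values inside the $\theta$-integral, so that $|p_v(H)-p_u(H)|\leq\int_u^v|\partial_\theta p_\theta(H)|\,d\theta$. Then integrate over $H$ against $\mu$ and multiply by $1/2$. This gives $\operatorname{TV}(P_u,P_v)=\tfrac12\int|p_v-p_u|\,d\mu\leq\tfrac12\int\int_u^v|\partial_\theta p_\theta(H)|\,d\theta\,\mu(dH)$.

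Next, exchange the order of integration. The integrand $|\partial_\theta p_\theta(H)|$ is nonnegative and jointly measurable by hypothesis, so Tonelli's theorem applies with no further integrability check. The double integral therefore equals $\int_u^v\|\partial_\theta p_\theta\|_1\,d\theta$, which is the first inequality. The finiteness assumption only ensures that the bound is meaningful. The second inequality is immediate: bound the integrand by its supremum over $[u,v]$ and integrate over an interval of length $v-u$.

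For the alternative hypothesis, where $\theta\mapsto p_\theta$ is $C^1$ as an $L^1(\mu)$-valued map, use the vector-valued fundamental theorem of calculus. This gives $p_v-p_u=\int_u^v\partial_\theta p_\theta\,d\theta$ as a Riemann (Bochner) integral in $L^1(\mu)$, since the derivative is continuous on a compact interval. The triangle inequality for Bochner integrals then gives $\|p_v-p_u\|_1\leq\int_u^v\|\partial_\theta p_\theta\|_1\,d\theta$. The integrand is continuous, hence bounded, so no measurability issues arise. Halving yields the same conclusion.

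The only delicate point is measurability, and specifically justifying the use of Tonelli in the first case. This is exactly what the joint measurability hypothesis provides, so I expect no real obstacle. The substantive work lies in later applications, where $\|\partial_\theta p_\theta\|_1$ must be bounded.
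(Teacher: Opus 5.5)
Your proposal is correct and follows essentially the same route as the paper. Both arguments use the pointwise fundamental theorem of calculus, take absolute values, interchange the nonnegative integrals by Tonelli, and finish with the supremum bound; the $L^1(\mu)$-valued case is handled by the vector-valued fundamental theorem of calculus and the triangle inequality for the integral.
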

\begin{proof}
The fundamental theorem of calculus gives $p_v(H)-p_u(H)=\int_u^v\partial_\theta p_\theta(H)\,d\theta$ for almost every $H$.
Take absolute values, integrate over $H$, and interchange the nonnegative integrals to obtain the first inequality.
Bounding the integrand by its supremum proves the second.
For an $L^1$-valued continuously differentiable map, apply the fundamental theorem of calculus directly in $L^1$ and use the triangle inequality for its integral.
\end{proof}

For a positive differentiable density, the Fisher information is $I_\theta=\int (\partial_\theta p_\theta)^2/p_\theta\,d\mu$.
Cauchy--Schwarz gives $\|\partial_\theta p_\theta\|_1\leq\sqrt{I_\theta}$, recovering the Fisher-information path bound used in~\cite[App.~G]{Zhou2026AnsatzFreeHamiltonian} and related to the statistical-distance geometry of~\cite{Wootters1981StatisticalDistance}.
Our proof applies the $L^1$ path bound on good records and bounds the complementary part of total variation by its endpoint probabilities, as formalized in \Cref{lo:tv-split}.

\section{Upper bounds}
\label{sec:proof-upper}

We reduce higher-order moment estimation to lower-order estimation on a smaller, classically specified state space.
All bounds count the pilot copies and unsuccessful filtering attempts.
Write $p_j(\rho)=\operatorname{Tr}(\rho^j)$; constants may depend on the fixed moment order, accuracy, and failure probability, but not on the dimension or input state.

\begin{theorem}[Single-copy upper bound for state moments]
\label[theorem]{up:main}
For every integer $t\geq2$ and $\epsilon,\delta\in(0,1)$, an adaptive single-copy protocol estimates $p_t(\rho)$ to additive error $\epsilon$ with probability at least $1-\delta$, for every $d$-dimensional state $\rho$, using a worst-case budget of $\mathcal{O}_{t,\epsilon,\delta}(d^{h/(h+1)})$ copies, where $h=\lceil\log_2t\rceil$.
Only classical information is retained between copies.
\end{theorem}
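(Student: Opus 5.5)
We prove the bound by induction on $h=\lceil\log_2t\rceil$, following the recursion sketched in \Cref{sec:overview-upper}. For every $h\ge1$ the inductive claim is the stronger statement that \emph{all} moments $p_2,\ldots,p_{2^h}$ can be estimated with $\mathcal{O}(d^{a_h})$ copies, where $a_h=h/(h+1)$; this uniformly covers every $t$ with $2^{h-1}<t\le 2^h$. The base case $h=1$ is purity estimation with $\mathcal{O}(\sqrt d)$ nonadaptive single-copy measurements~\cite{Gong2026PurityInnerProduct}. For the inductive step we invoke \Cref{lem:moment-doubling} with $m=2^{h-1}$ and $a=a_{h-1}\in[1/2,1)$ (for $h=2$, $a=1/2$). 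The new exponent is
\[
\frac{1}{2-a}=\frac{1}{2-\frac{h-1}{h}}=\frac{h}{h+1},
\]
and the moments $p_{m+1},\ldots,p_{2m}$ are added to those already available. The claim thus reduces to proving \Cref{lem:moment-doubling}, which we derive from \Cref{lem:moment-filter}.

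To prove \Cref{lem:moment-doubling}, fix $m+1$ distinct nodes $x_0,\ldots,x_m\in[0,1]$ and set $n=\lceil d^{b}\rceil$ with $b=1/(2-a)$. For each node we run a fresh instance of \Cref{lem:moment-filter} with accuracy $\eta$ and failure probability $\delta/(4(m+1))$. This costs at most $n$ pilot copies and yields a filter of success probability $\Theta(n/d)$ together with a state $\sigma_{x_i}$ supported on a known space of dimension at most $n$. On that space we run the assumed $p_m$ estimator, which is applied to a known subspace embedded as a $D\le n$ dimensional system. The estimator needs $\mathcal{O}(n^a)$ measurements of $\sigma_{x_i}$, and simulating them costs $\mathcal{O}((d/n)n^a)$ copies of $\rho$; \Cref{pre:binomial} guarantees enough successful filtering outcomes within a fixed budget. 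We separately estimate $p_2(\rho)$ with $\mathcal{O}(\sqrt d)\le\mathcal{O}(d^b)$ copies. Multiplying each filtered-moment estimate by $(1+x_i\widehat p_2)^m$ gives $\widehat F_m(x_i)$. Since $1+x_ip_2\in[1,2]$, the multiplication inflates the error only by a constant depending on $m$. Then \Cref{pre:interpolation} applied to \eqref{eq:filter-moment-polynomial} recovers each coefficient $\binom mj p_{m+j}$ with error $C\eta$, and we choose $\eta$ so that this is below $\epsilon$. The total cost is $\mathcal{O}(n+dn^{a-1}+\sqrt d)=\mathcal{O}(d^{1/(2-a)})$, and a union bound controls the overall failure probability. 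Every filter is composed with the following measurement into a single POVM, so only classical information is kept between copies.

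The main obstacle is \Cref{lem:moment-filter}, specifically the concentration estimate stating that $B_x=(d+x)\sqrt\rho S_x\sqrt\rho/n$ satisfies $\mathbb E\|B_x-H_x\|_{\mathrm{HS}}^2=\mathcal{O}(1/n)$ and $\mathbb E|\operatorname{Tr}(B_x-H_x)|^2=\mathcal{O}(1/n)$, uniformly in $\rho$. We plan to obtain it as follows. The $n$ outcome vectors are i.i.d., being either covariant-measurement outcomes or Haar vectors mixed with probability chosen so that $\mathbb E[vv^\dagger]=(I+x\rho)/(d+x)$. For a single term $X=(d+x)\sqrt\rho vv^\dagger\sqrt\rho$ we have $\mathbb E\|X\|_{\mathrm{HS}}^2=(d+x)^2\mathbb E(v^\dagger\rho v)^2$. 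The covariant outcome density is $d\,v^\dagger\rho v$ times Haar, so this quantity is bounded by the third Haar moment in \Cref{pre:haar}, which is $\mathcal{O}(1/d^2)$. Hence the single-term variance is $\mathcal{O}(1)$, and averaging $n$ i.i.d. terms gives $\mathcal{O}(1/n)$. Chebyshev's inequality then yields both bounds with constant probability loss. From these, \Cref{pre:moment-lipschitz} together with the equality of the nonzero spectra of $\sqrt{S_x}\rho\sqrt{S_x}$ and $\sqrt\rho S_x\sqrt\rho$ gives $|p_m(\sigma_x)-p_m(H_x/\operatorname{Tr}H_x)|\le 4m\eta'$. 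The success probability $\operatorname{Tr}(S_x\rho)/L=\Theta(n/d)$ follows from the trace bound. What remains is to show that $\Pr(\|S_x\|_{\mathrm{op}}>L)\to0$. Each $v_iv_i^\dagger$ lies between $0$ and $I$ and has mean at most $2I/d$, so \Cref{pre:matrix-chernoff} with $N=n\le 2d^b$ and $\mu=2/d$ gives a bound of $d(4e d^{b-1}/L)^L$. This tends to zero once $L>1/(1-b)$ is a fixed constant.
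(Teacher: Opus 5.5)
Your proposal is correct and follows essentially the same route as the paper: induction over dyadic levels via the moment-doubling lemma, with the filter built from the covariant/Haar mixture. The supporting steps also match: second-moment concentration of $B_x$ through the third Haar moment, matrix Chernoff for $\|S_x\|_{\mathrm{op}}$, spectral equality of $\sqrt{S_x}\rho\sqrt{S_x}$ and $\sqrt\rho S_x\sqrt\rho$, interpolation of $F_m$, and the cost balance $n+dn^{a-1}$. The remaining differences are cosmetic. The paper fixes $L=\lceil 2/(1-b)\rceil$ and records the Chebyshev failure probability explicitly as $48/(n\eta^2)$, which vanishes rather than being a constant loss, and it covers small dimensions by tomography.
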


\subsection{Purity estimation}
\label{up:purity-section}

The base case uses collisions between outcomes measured in a common random basis, a standard purity-estimation approach~\cite{VanEnk2012RandomMeasurements,Gong2026PurityInnerProduct}.

\begin{lemma}[Purity subroutine]
\label[lemma]{up:purity}
For every $\epsilon,\delta\in(0,1)$, purity on $D$-dimensional states can be estimated to error $\epsilon$ with failure probability at most $\delta$, using $\mathcal{O}_{\epsilon,\delta}(\sqrt D)$ single-copy measurements.
\end{lemma}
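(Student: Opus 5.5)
We will use the standard collision estimator in a shared Haar-random basis. First draw a single Haar-random unitary $U$ on $\mathbb C^D$ and fix it as part of the seed. Then measure each of $N$ copies in the basis $\{U\ket{k}\}_{k=1}^D$, producing i.i.d.\ outcomes $y_1,\ldots,y_N$ with distribution $q_k=\bra{k}U^\dagger\rho U\ket{k}$ conditional on $U$. The collision statistic is the pair-collision fraction
\[
\widehat c=\binom N2^{-1}\sum_{a<b}\boldsymbol 1\{y_a=y_b\},
\]
which is an unbiased estimator of $\sum_kq_k^2$ given $U$.

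For the mean, write $z_k=v_k^\dagger\rho v_k$ with $v_k=U\ket{k}$; each $v_k$ is marginally Haar. \Cref{pre:haar} gives $\mathbb E z_k^2=(1+p_2)/(D(D+1))$. Summing over $k$ yields
\[
\mathbb E_U\sum_kq_k^2=\frac{1+p_2(\rho)}{D+1}.
\]
The estimator is therefore $\widehat p=(D+1)\widehat c-1$, which is unbiased for $p_2(\rho)$ over the joint randomness. This only needs the mean over $U$, not concentration of $U$.

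The main step is the variance bound $\operatorname{Var}(\widehat p)=\mathcal{O}(1+D^2N^{-2}\mathbb E_U\|q\|_2^2+D^2N^{-1}\mathbb E_U[\|q\|_3^3-\|q\|_2^4]+D^2\operatorname{Var}_U\|q\|_2^2)$. We split it by the law of total variance. For the conditional part, the U-statistic variance formula for pair collisions gives
\[
\operatorname{Var}(\widehat c\mid U)\le \frac{2\|q\|_2^2}{N(N-1)}+\frac{4}{N}\bigl(\|q\|_3^3-\|q\|_2^4\bigr).
\]
Using \Cref{pre:haar} again, $\mathbb E\|q\|_2^2=\mathcal{O}(1/D)$ and $\mathbb E\|q\|_3^3\le\mathbb E\sum_kz_k^3=\mathcal{O}(1/D^2)$. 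After multiplying by $(D+1)^2$, these contribute $\mathcal{O}(D/N^2+1/N)$.

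For the between-$U$ part, we need $\operatorname{Var}_U(\sum_kz_k^2)=\mathcal{O}(1/D^2)$, so that after scaling it contributes only $\mathcal{O}(1)$. This is the step that needs care. It requires the fourth-moment identity of \Cref{pre:haar} for the joint distribution of distinct columns of a Haar unitary, or alternatively Weingarten estimates. To avoid computing those, we will instead use Lévy concentration on the unitary group. The map $U\mapsto\sum_k\bra kU^\dagger\rho U\ket k^2$ is Lipschitz with constant $\mathcal{O}(\max_kz_k\cdot\|\rho\|_{\mathrm{HS}})$. Because $\rho$ is a state, this gives fluctuations of order $\mathcal{O}(D^{-1})$ once we truncate on the high-probability event $\max_kz_k=\mathcal{O}(\log D/D)$. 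If that route loses logarithmic factors, we fall back on the simpler fix of redrawing an independent basis for each disjoint block of copies. The between-$U$ term then averages out, at the cost of restricting collisions to within blocks, which changes only constants as long as each block holds $\Theta(\sqrt D)$ copies.

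With $N=C\sqrt D$, the total variance is $\mathcal{O}(1)$ with a constant that decreases in $C$ except for the $U$-fluctuation term. The block-redrawing variant, averaging $K$ independent blocks, makes that term also $\mathcal{O}(1/K)$. Chebyshev's inequality then gives accuracy $\epsilon$ with constant probability using $K=\mathcal{O}(\epsilon^{-2})$ blocks, each of $\mathcal{O}(\epsilon^{-1}\sqrt D)$ copies. Taking the median of $\mathcal{O}(\log(1/\delta))$ independent repetitions boosts the success probability to $1-\delta$, for a total of $\mathcal{O}_{\epsilon,\delta}(\sqrt D)$ single-copy measurements. The protocol is nonadaptive and retains only classical outcomes.
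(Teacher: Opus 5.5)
Your estimator, the unbiasedness computation via the Haar second moment, and the conditional U-statistic variance bound are the same as in the paper. Everything therefore hinges on the between-basis term $(D+1)^2\operatorname{Var}_U(Q_2)$, with $Q_j=\sum_k q_k^j$, and as written you have not established it.

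\textbf{The fallback is circular.} Redrawing a fresh basis per block makes the block estimates i.i.d., so averaging $K$ of them divides the per-block variance
$V=\mathbb E\operatorname{Var}(\widehat p\mid U)+(D+1)^2\operatorname{Var}_U(Q_2)$
by $K$. It never changes the $D$-dependence of $V$. You still need $(D+1)^2\operatorname{Var}_U(Q_2)=\mathcal O(1)$ for each block, which is the very bound in question. The trivial estimate $Q_2\le 1$ only gives $\operatorname{Var}_U(Q_2)\le\mathbb EQ_2\le 2/(D+1)$, hence $V=\mathcal O(D)$. That would force $\Omega(D)$ blocks and $\Omega(D^{3/2})$ copies.

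\textbf{The L\'evy route is salvageable but not a proof as stated.} The quantity $\max_k z_k\|\rho\|_{\mathrm{HS}}$ is only a pointwise gradient bound. Restricting to the non-convex event $\{\max_k z_k\lesssim\log D/D\}$ does not by itself produce a Lipschitz function. To complete it you would need three things: a globally Lipschitz surrogate (e.g.\ replace $z^2$ by a function whose derivative is capped at $2a$, with $a\asymp\log D/D$); concentration of measure on $SU(D)$; and a tail bound showing the exceptional event has probability $\mathcal O(D^{-2})$, so that it survives the $(D+1)^2$ scaling. That is considerably more machinery than the step requires.

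\textbf{The missing idea is elementary and uses only quantities you already computed.} Since $\sum_k q_k=1$, Cauchy--Schwarz gives
$Q_2^2=\bigl(\sum_k q_k^{1/2}q_k^{3/2}\bigr)^2\le Q_1Q_3=Q_3$.
This is exactly why the term $\|q\|_3^3-\|q\|_2^4$ in your conditional variance is nonnegative. Hence
$\operatorname{Var}_U(Q_2)\le\mathbb EQ_2^2\le\mathbb EQ_3=\frac{1+3p_2+2p_3}{(D+1)(D+2)}\le\frac{6}{(D+1)(D+2)}$,
so the between-basis contribution is at most $6$. No fourth moments, Weingarten calculus, or measure concentration are needed; this is how the paper closes the argument. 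After this, averaging independent repetitions with fresh bases (Chebyshev in the paper, or your median-of-means boosting) finishes the proof with $\mathcal O_{\epsilon,\delta}(\sqrt D)$ copies.
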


\begin{proof}
Assume $D\geq2$, as $D=1$ is immediate.
Measure $n=\max\{2,\lceil\sqrt D\rceil\}$ copies in a common Haar-random basis, with outcomes $Y_1,\ldots,Y_n$ and conditional probabilities $q_1,\ldots,q_D$.
Set
\[
    U=\binom n2^{-1}\sum_{a<b}\mathbf1\{Y_a=Y_b\},
    \qquad Z=(D+1)U-1,
    \qquad Q_j=\sum_iq_i^j.
\]
\Cref{pre:haar} gives
\[
    \mathbb EQ_2=\frac{1+p_2(\rho)}{D+1},
    \qquad
    \mathbb EQ_3=\frac{1+3p_2(\rho)+2p_3(\rho)}{(D+1)(D+2)}.
\]
Since $\mathbb E[U\mid q]=Q_2$, the estimator satisfies $\mathbb EZ=p_2(\rho)$.
Conditional on the basis, disjoint collision pairs are independent; an individual pair has variance at most $Q_2$, and overlapping pairs have covariance at most $Q_3$.
Counting these pairs gives
\[
    \operatorname{Var}(U\mid q)
    \leq\frac{Q_2}{\binom n2}
       +\frac{6\binom n3Q_3}{\binom n2^2}
    \leq\frac{4Q_2}{n^2}+\frac{4Q_3}{n}.
\]
The random basis contributes an additional variance term:
$\operatorname{Var}(U)=\mathbb E\operatorname{Var}(U\mid q)+\operatorname{Var}(Q_2)$.
Since $Q_2^2\leq Q_3$ by Cauchy--Schwarz and $\sum_iq_i=1$, we have $\operatorname{Var}(Q_2)\leq\mathbb EQ_3$.
Using $p_2,p_3\leq1$ in the Haar formulas therefore yields
\[
    \operatorname{Var}(Z)
    \leq(D+1)^2\left(
        \frac{8}{(D+1)n^2}
        +\frac{24}{(D+1)(D+2)n}
        +\frac6{(D+1)(D+2)}\right)
    \leq64.
\]
Average $\lceil64/(\epsilon^2\delta)\rceil$ independent repetitions, using fresh bases and copies.
Chebyshev's inequality proves the claim; clipping the answer to $[0,1]$ cannot increase its error.
\end{proof}

\subsection{State filtering and recursive moment estimation}
\label{up:filter-section}
\label{up:simulation-section}
\label{up:doubling-section}
\label{up:third-fourth}

We first learn a low-rank filter from pilot measurements, then explain how its conditional output encodes higher moments.
Let $\nu$ be the normalized Haar probability measure on unit vectors in $\mathbb C^d$, and write $P_v=|v\rangle\langle v|$.
The covariant POVM $dP_v\,\nu(dv)$~\cite{Harrow2013SymmetricSubspace} satisfies $\mathbb E_{\mathrm{cov},\rho}P_v=(I+\rho)/(d+1)$ by \Cref{pre:haar}.
Given $x\in[0,1]$, use this measurement on a fresh copy with probability $\gamma_x=x(d+1)/(d+x)$, and otherwise sample a Haar vector classically.
Each trial consumes at most one copy and satisfies
\begin{equation}
\label{up:pilot-mean}
    \mathbb EP_v=(1-\gamma_x)\frac Id
        +\gamma_x\frac{I+\rho}{d+1}
        =\frac{I+x\rho}{d+x}.
\end{equation}
After $n$ independent trials, the protocol knows $S_x=\sum_{i=1}^nP_{v_i}$, whose rank is at most $n$.
The matrices involving $\rho$ below are used only in the analysis.

\begin{lemma}[Concentration of the pilot matrix]
\label[lemma]{up:pilot-concentration}
Let $S_x$ be constructed from the independent pilot trials above, and define $B_x=(d+x)\sqrt\rho S_x\sqrt\rho/n$ and $H_x=\rho+x\rho^2$.
We use $\mathbb E_\nu$ for expectation over a Haar-random unit vector, and $\mathbb E_{\mathrm{cov},\rho}$ for expectation over the covariant-measurement outcome, whose probability law is $d\,\langle v|\rho|v\rangle\,\nu(dv)$.
Unsubscripted $\mathbb E$ and $\Pr$ refer to the mixture used by the pilot protocol, jointly over the independent trials when several vectors are involved.
The input state $\rho$ is fixed in all these expectations.

Uniformly over $\rho$, $x\in[0,1]$, and integers $n\geq1$,
\begin{equation}
\label{up:pilot-second-moments}
    \mathbb EB_x=H_x,
    \qquad
    \mathbb E\|B_x-H_x\|_{\mathrm{HS}}^2\leq\frac{24}{n},
    \qquad
    \mathbb E|\operatorname{Tr}(B_x-H_x)|^2\leq\frac{24}{n}.
\end{equation}
For $L>2n/d$,
\begin{equation}
\label{up:pilot-operator-tail}
    \Pr[\|S_x\|_{\mathrm{op}}\geq L]
        \leq d\left(\frac{2en}{Ld}\right)^L.
\end{equation}
\end{lemma}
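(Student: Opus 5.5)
The plan is to write $B_x$ as an average of $n$ independent, identically distributed matrices and reduce everything to single-trial second moments. Set $X_i=(d+x)\sqrt\rho P_{v_i}\sqrt\rho$, so that $B_x=n^{-1}\sum_iX_i$. Conjugating \eqref{up:pilot-mean} by $\sqrt\rho$ gives $\mathbb EX_i=\sqrt\rho(I+x\rho)\sqrt\rho=H_x$, which is the first identity.

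For the two mean-square bounds, independence and the martingale isometry (\Cref{pre:orthogonality}, used with independent centered increments) give
\[
\mathbb E\|B_x-H_x\|_{\mathrm{HS}}^2=\frac1n\,\mathbb E\|X_1-H_x\|_{\mathrm{HS}}^2\leq\frac1n\,\mathbb E\|X_1\|_{\mathrm{HS}}^2 .
\]
The same computation applied to the scalar $\operatorname{Tr}X_1$ gives the trace bound. Since $X_1$ has rank one, $\|X_1\|_{\mathrm{HS}}=\operatorname{Tr}X_1=(d+x)z$ with $z=v^\dagger\rho v$. Both bounds therefore reduce to showing $(d+x)^2\,\mathbb E z^2\leq24$.

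The pilot law is a mixture of two components, and I would bound each separately.
\begin{itemize}[leftmargin=*]
\item With probability $1-\gamma_x$, $v$ is Haar. By \Cref{pre:haar}, $\mathbb E_\nu z^2=(1+p_2)/(d(d+1))\leq 2/(d(d+1))$.
\item With probability $\gamma_x\leq1$, $v$ comes from the covariant measurement. There $\mathbb E_{\mathrm{cov},\rho}z^2=d\,\mathbb E_\nu z^3=d(1+3p_2+2p_3)/(d(d+1)(d+2))\leq 6/((d+1)(d+2))$.
\end{itemize}
Using $(d+x)^2\leq(d+1)^2$, the total is at most $2(d+1)/d+6(d+1)/(d+2)\leq 4+6=10\leq24$. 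This step is routine; the only care needed is to use the covariant density $d\,z\,\nu(dv)$, which raises the Haar moment order by one.

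For the operator tail, I would apply \Cref{pre:matrix-chernoff} with $Z_i=P_{v_i}$, which satisfy $0\preceq Z_i\preceq I$. They are independent, so the conditional mean is the unconditional mean $(I+x\rho)/(d+x)\preceq 2I/d$, using $x\leq1$ and $\|\rho\|_{\mathrm{op}}\leq1$. Taking $\mu=2/d$ and $N=n$, the hypothesis $L>N\mu=2n/d$ is exactly the assumed condition. The lemma then yields $\Pr[\|S_x\|_{\mathrm{op}}\geq L]\leq d(2en/(Ld))^L$.

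The only point I expect to need attention is confirming that the mixture weight $\gamma_x\in[0,1]$ is valid, which holds because $x(d+1)\leq d+x$ for $x\le1$. Beyond that, the proof consists of assembling bounds already established earlier in the paper.
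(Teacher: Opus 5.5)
Your proposal is correct and follows essentially the same route as the paper. You reduce both mean-square bounds to the single-trial scalar moment $\mathbb E z^2$ via the rank-one identity $\|Y\|_{\mathrm{HS}}=\operatorname{Tr}Y=z$, bound the Haar and covariant components with the Haar moment formulas (raising the order by one for the density $d\,z\,\nu(dv)$), and apply the matrix Chernoff lemma with $N=n$ and $\mu=2/d$. The only difference is cosmetic: you use $(d+x)^2\le(d+1)^2$ and bound the two mixture components separately, giving the constant $10$. The paper instead first bounds $\mathbb E z^2\le 6/d^2$ and then uses $d+x\le 2d$, giving $24$.
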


\begin{proof}
The mean identity follows directly from \eqref{up:pilot-mean}:
\[
    \mathbb EB_x
    =\frac{d+x}{n}\sqrt\rho\,(\mathbb ES_x)\sqrt\rho
    =\sqrt\rho(I+x\rho)\sqrt\rho
    =H_x.
\]
We next bound the fluctuations of an individual trial and then average over the $n$ independent trials.

For one trial, put $Y=\sqrt\rho P_v\sqrt\rho$ and $z=\langle v|\rho|v\rangle$.
The matrix $Y$ is positive semidefinite and has rank at most one, with trace $z$.
Consequently, $\operatorname{Tr}Y=z$ and $\|Y\|_{\mathrm{HS}}^2=z^2$, so the same scalar second moment controls both quantities.

Under Haar sampling, \Cref{pre:haar} gives
\[
    \mathbb E_\nu z^2
    =\frac{1+p_2(\rho)}{d(d+1)}
    \leq\frac2{d^2}.
\]
For the covariant measurement, the Born rule weights each Haar direction $v$ by $d\,z(v)$.
Thus the second moment under this outcome distribution is
\[
    \mathbb E_{\mathrm{cov},\rho}z^2
    =\int z(v)^2\,[d\,z(v)]\,\nu(dv)
    =d\,\mathbb E_\nu z^3.
\]
Applying the third-moment identity in \Cref{pre:haar} therefore yields
\[
    \mathbb E_{\mathrm{cov},\rho}z^2
    =\frac{1+3p_2(\rho)+2p_3(\rho)}{(d+1)(d+2)}
    \leq\frac6{d^2},
\]
where we used $p_2(\rho),p_3(\rho)\leq1$.
The pilot protocol mixes these two distributions, so its second moment is their weighted average:
\[
    \mathbb Ez^2
    =(1-\gamma_x)\mathbb E_\nu z^2
       +\gamma_x\mathbb E_{\mathrm{cov},\rho}z^2
    \leq\frac6{d^2}.
\]

For the independent pilot vectors, write $Y_i=\sqrt\rho P_{v_i}\sqrt\rho$.
Since $\mathbb EB_x=H_x$, we have
\[
    B_x-H_x
    =\frac{d+x}{n}\sum_{i=1}^n(Y_i-\mathbb EY_i).
\]
The centered matrices from different trials are independent.
For $i\ne j$, this gives
$\mathbb E\operatorname{Tr}[(Y_i-\mathbb EY_i)(Y_j-\mathbb EY_j)]=0$.
Hence only the diagonal terms remain when we expand the squared Hilbert--Schmidt norm:
\[
\begin{aligned}
    \mathbb E\|B_x-H_x\|_{\mathrm{HS}}^2
    &=\frac{(d+x)^2}{n^2}
        \sum_{i=1}^n
        \mathbb E\|Y_i-\mathbb EY_i\|_{\mathrm{HS}}^2\\
    &=\frac{(d+x)^2}{n}
        \left(\mathbb E\|Y\|_{\mathrm{HS}}^2
                   -\|\mathbb EY\|_{\mathrm{HS}}^2\right)\\
    &\leq\frac{(d+x)^2}{n}\mathbb Ez^2
     \leq\frac{24}{n}.
\end{aligned}
\]
The last inequality uses $d+x\leq2d$ and $\mathbb Ez^2\leq6/d^2$.

For the trace error, $\operatorname{Tr}Y_i=z_i$ gives
$\operatorname{Tr}(B_x-H_x)=(d+x)n^{-1}\sum_{i=1}^n(z_i-\mathbb Ez_i)$.
Independence again eliminates the cross terms, and therefore
\[
    \mathbb E|\operatorname{Tr}(B_x-H_x)|^2
    =\frac{(d+x)^2}{n}\mathbb E|z-\mathbb Ez|^2
    \leq\frac{(d+x)^2}{n}\mathbb Ez^2
    \leq\frac{24}{n}.
\]

Finally, every pilot projector satisfies $0\preceq P_{v_i}\preceq I$.
Since $\rho\preceq I$, its mean obeys
\[
    \mathbb EP_{v_i}
    =\frac{I+x\rho}{d+x}
    \preceq\frac{1+x}{d+x}I
    \preceq\frac2d I.
\]
The trials are independent, so the same bound holds for the conditional mean given all preceding trials.
Applying \Cref{pre:matrix-chernoff} to their sum $S_x$, with $N=n$ and $\mu=2/d$, proves \eqref{up:pilot-operator-tail}.
\end{proof}

If $\|S_x\|_{\mathrm{op}}>L$, abort this pilot.
Otherwise $K_x=\sqrt{S_x/L}$ satisfies $K_x^\dagger K_x=S_x/L\preceq I$ and is therefore a valid Kraus operator.
Applied to a fresh copy of $\rho$, it succeeds with probability $s_x=\operatorname{Tr}(K_x\rho K_x^\dagger)$ and, upon success, produces the normalized state $K_x\rho K_x^\dagger/s_x$.
Thus
\begin{equation}
\label{up:filter-state}
    s_x=\frac{\operatorname{Tr}(S_x\rho)}L,
    \qquad
    \sigma_x=\frac{\sqrt{S_x}\rho\sqrt{S_x}}
                       {\operatorname{Tr}(S_x\rho)}.
\end{equation}
The output is supported on the known space $\operatorname{range}(S_x)$, whose dimension is at most $n$.
We show below that its moments approximate those of the normalized matrix $\rho+x\rho^2$.
This is a spectral approximation; the corresponding density matrices need not be close in the same basis.

\begin{lemma}[Moments and success probability of the filter]
\label[lemma]{up:normalized-filter}
For every integer $m\geq2$ and $\eta\in(0,1/2)$, there is an event of probability at least $1-48/(n\eta^2)$ over the pilot samples with the following property.
On this event, whenever $\|S_x\|_{\mathrm{op}}\leq L$, the filter is valid and satisfies
\begin{equation}
\label{up:filtered-moment}
    \left|p_m(\sigma_x)
      -\frac{\operatorname{Tr}[(\rho+x\rho^2)^m]}
                    {(1+xp_2(\rho))^m}\right|
       \leq4m\eta,
    \qquad
    s_x\geq\frac{n}{4Ld}.
\end{equation}
\end{lemma}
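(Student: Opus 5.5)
The plan is to define the good event through the second-moment bounds of \Cref{up:pilot-concentration} and Markov's inequality. Let $E$ be the event that both $\|B_x-H_x\|_{\mathrm{HS}}\leq\eta$ and $|\operatorname{Tr}(B_x-H_x)|\leq\eta$. By Markov's inequality applied to each of the two squared errors, each fails with probability at most $24/(n\eta^2)$. A union bound then gives $\Pr(E)\geq1-48/(n\eta^2)$, as required. This event depends only on the pilot samples.

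On $E$, with $\|S_x\|_{\mathrm{op}}\leq L$, validity of the Kraus operator $K_x=\sqrt{S_x/L}$ was already noted. For the moments, I would use that $\sqrt{S_x}\rho\sqrt{S_x}=VV^\dagger$ and $\sqrt\rho S_x\sqrt\rho=V^\dagger V$ with $V=\sqrt{S_x}\sqrt\rho$. These two matrices have the same nonzero spectrum, and both traces equal $\operatorname{Tr}(S_x\rho)$. Since $B_x$ is a positive multiple of $\sqrt\rho S_x\sqrt\rho$, the normalized matrix $\sigma_x$ shares its nonzero spectrum with $B_x/\operatorname{Tr}B_x$, so $p_m(\sigma_x)=p_m(B_x/\operatorname{Tr}B_x)$. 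Likewise $H_x/\operatorname{Tr}H_x$ is a state with trace $\operatorname{Tr}H_x=1+xp_2(\rho)\in[1,2]$, and its $m$th moment is exactly $\operatorname{Tr}[(\rho+x\rho^2)^m]/(1+xp_2(\rho))^m$. The positivity and trace hypotheses of \Cref{pre:moment-lipschitz} therefore hold with $B=B_x$, $H=H_x$. That lemma gives $\operatorname{Tr}B_x\geq1/2$ and $\|B_x/\operatorname{Tr}B_x-H_x/\operatorname{Tr}H_x\|_{\mathrm{HS}}\leq4\eta$. Its first part, applied to these two states, then yields the moment error $4m\eta$.

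For the success probability I would use \eqref{up:filter-state}: $s_x=\operatorname{Tr}(S_x\rho)/L=n\operatorname{Tr}B_x/[L(d+x)]$. Combining $\operatorname{Tr}B_x\geq1/2$ with $d+x\leq2d$ gives $s_x\geq n/(4Ld)$.

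The only delicate point I anticipate is the spectral identification: care is needed that $\operatorname{Tr}B_x>0$ so the normalization is well defined, but this is guaranteed on $E$. Everything else is a direct assembly of the earlier lemmas.
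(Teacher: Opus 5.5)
Your proposal is correct and follows essentially the same argument as the paper: Markov's inequality with a union bound for the good event, the $VV^\dagger$/$V^\dagger V$ spectral identification, the normalization and moment-continuity estimates of \Cref{pre:moment-lipschitz}, and $s_x=n\operatorname{Tr}B_x/[L(d+x)]$ for the success probability. The only cosmetic difference is that you cite the second part of \Cref{pre:moment-lipschitz} for the $4\eta$ normalization bound, whereas the paper re-derives that bound inline.
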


\begin{proof}
We first obtain simultaneous control of the matrix error and the trace error.
By Markov's inequality and \eqref{up:pilot-second-moments},
\[
    \Pr\!\left(\|B_x-H_x\|_{\mathrm{HS}}>\eta\right)
    \leq\frac{\mathbb E\|B_x-H_x\|_{\mathrm{HS}}^2}{\eta^2}
    \leq\frac{24}{n\eta^2}.
\]
The same argument gives
$\Pr(|\operatorname{Tr}(B_x-H_x)|>\eta)\leq24/(n\eta^2)$.
A union bound therefore shows that, with probability at least $1-48/(n\eta^2)$, both
\[
    \|B_x-H_x\|_{\mathrm{HS}}\leq\eta,
    \qquad
    |\operatorname{Tr}B_x-\operatorname{Tr}H_x|\leq\eta
\]
hold simultaneously.
For the rest of the proof, fix pilot samples in this event and suppose that the filter is valid.

We next check that normalization preserves the approximation.
Since $H_x=\rho+x\rho^2$, its trace satisfies
$\operatorname{Tr}H_x=1+xp_2(\rho)\in[1,2]$.
The trace-error bound and $\eta<1/2$ consequently give
$1/2\leq\operatorname{Tr}B_x\leq5/2$.
In particular, the normalization denominator stays bounded away from zero.
Decomposing the normalized difference gives
\[
    \frac{B_x}{\operatorname{Tr}B_x}
      -\frac{H_x}{\operatorname{Tr}H_x}
    =
    \frac{B_x-H_x}{\operatorname{Tr}B_x}
    +\frac{H_x}{\operatorname{Tr}H_x}
       \frac{\operatorname{Tr}H_x-\operatorname{Tr}B_x}
            {\operatorname{Tr}B_x}.
\]
The matrix $H_x/\operatorname{Tr}H_x$ is a state and hence has Hilbert--Schmidt norm at most one.
Taking norms in this identity therefore yields
\[
    \left\|
      \frac{B_x}{\operatorname{Tr}B_x}
       -\frac{H_x}{\operatorname{Tr}H_x}
    \right\|_{\mathrm{HS}}
    \leq\frac{\eta}{1/2}+\frac{\eta}{1/2}
    =4\eta.
\]

To connect this approximation to the actual filtered state, set $V=\sqrt{S_x}\sqrt\rho$.
Then
\[
    VV^\dagger=\sqrt{S_x}\rho\sqrt{S_x},
    \qquad
    V^\dagger V=\sqrt\rho S_x\sqrt\rho.
\]
These two positive matrices have the same nonzero eigenvalues, namely the squared singular values of $V$, and the same trace $\operatorname{Tr}(S_x\rho)$.
Moreover, the scalar factor $(d+x)/n$ in $B_x$ cancels upon normalization, so
\[
    \frac{B_x}{\operatorname{Tr}B_x}
    =\frac{\sqrt\rho S_x\sqrt\rho}
           {\operatorname{Tr}(S_x\rho)}.
\]
Thus $\sigma_x$ and $B_x/\operatorname{Tr}B_x$ have the same spectrum, although they need not be the same matrix.
Since moments depend only on eigenvalues, we obtain
$p_m(\sigma_x)=p_m(B_x/\operatorname{Tr}B_x)$.

Both normalized matrices are states, so the moment-continuity bound in \Cref{pre:moment-lipschitz} now gives
\[
\begin{aligned}
    \left|
      p_m(\sigma_x)
       -p_m\!\left(\frac{H_x}{\operatorname{Tr}H_x}\right)
    \right|
    &\leq m
       \left\|
         \frac{B_x}{\operatorname{Tr}B_x}
          -\frac{H_x}{\operatorname{Tr}H_x}
       \right\|_{\mathrm{HS}}\\
    &\leq4m\eta.
\end{aligned}
\]
Substituting $H_x=\rho+x\rho^2$ and $\operatorname{Tr}H_x=1+xp_2(\rho)$ proves the moment bound in \eqref{up:filtered-moment}.

Finally, taking the trace in the definition of $B_x$ gives
$\operatorname{Tr}B_x=(d+x)\operatorname{Tr}(S_x\rho)/n$.
The filtering success probability can therefore be written as
\[
    s_x=\frac{n\,\operatorname{Tr}B_x}{L(d+x)}.
\]
Using $1/2\leq\operatorname{Tr}B_x\leq5/2$ and $d\leq d+x\leq2d$, we obtain
\[
    \frac{n}{4Ld}\leq s_x\leq\frac{5n}{2Ld}.
\]
This proves the claimed lower bound and shows that, for fixed $L$, the success probability is $\Theta(n/d)$.
\end{proof}

\begin{lemma}[Single-copy simulation after filtering]
\label[lemma]{up:filter-simulation}
Let $K$ be known, $K^\dagger K\preceq I$, and $s=\operatorname{Tr}(K\rho K^\dagger)\geq s_0>0$.
For every $\beta\in(0,1)$, any adaptive single-copy protocol using at most $N\geq1$ copies of $\sigma=K\rho K^\dagger/s$ can be simulated on original copies, with an additional failure probability at most $\beta$, using at most
\[
    M=\left\lceil\frac8{s_0}\left(N+\log\frac1\beta\right)\right\rceil
\]
copies and retaining only classical information between them.
\end{lemma}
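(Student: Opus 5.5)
The plan is to implement each requested copy of $\sigma$ by repeated filter attempts, and to count attempts with the binomial bound of \Cref{pre:binomial}.

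\textbf{Combined POVM.} Suppose the simulated protocol, given its past classical record, would measure $\sigma$ with a POVM $\{E_y\}$. On a fresh copy of $\rho$ we apply the two-outcome instrument with Kraus operators $K$ (success) and $\sqrt{I-K^\dagger K}$ (failure), and on success we follow it with $\{E_y\}$. This composition is a single POVM on one copy, with effects $\{K^\dagger E_yK\}_y\cup\{I-K^\dagger K\}$. Completeness holds because $\sum_yK^\dagger E_yK=K^\dagger K\preceq I$.

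\textbf{Correct conditional law.} The outcome $y$ occurs with probability $\operatorname{Tr}(K^\dagger E_yK\rho)=s\operatorname{Tr}(E_y\sigma)$. Conditional on success, $y$ therefore has exactly the law of measuring $\sigma$ with $\{E_y\}$.

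\textbf{Procedure.} The simulator keeps only a classical record: the outcomes of the simulated protocol so far. Each round, it uses the current POVM requested by the simulated protocol, which depends only on past successful outcomes. It tries original copies until one succeeds, then appends the outcome to the record.

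\textbf{Independence and exactness.} Each attempt uses a fresh copy, and its success probability is $s$ regardless of the history. So, conditioned on all previous classical data, the success indicators are i.i.d.\ Bernoulli$(s)$. Conditioned on all $N$ simulated rounds being completed, the joint law of the simulated transcript equals that of the original protocol run on $\sigma$. This follows from the product structure \eqref{pre:transcript-law} applied to successful rounds, since failures carry no information about $y$ and are independent of it.

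\textbf{Budget.} We run at most $M$ attempts in total, padding with ignored measurements so the budget is fixed. The protocol fails only if fewer than $N$ successes occur among the $M$ attempts. The number of successes stochastically dominates $\operatorname{Bin}(M,s_0)$. By \Cref{pre:binomial} with $p_*=s_0$, the choice $M\geq8s_0^{-1}(N+\log(1/\beta))$ yields at least $N$ successes with probability at least $1-\beta$.

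\textbf{Main obstacle.} The only delicate point is that adaptivity makes the success count a sequential process rather than a fixed binomial. I would handle this with a coupling: attach an independent uniform variable to each attempt so that success occurs exactly when it lies below $s$. Since $s\geq s_0$, the success count is at least the binomial count of uniforms below $s_0$, and this comparison holds independently of the outcome choices.
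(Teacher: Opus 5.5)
Your proposal is correct and follows essentially the same route as the paper. Both use the combined POVM $\{K^\dagger E_yK\}_y\cup\{I-K^\dagger K\}$, retry until success with the history-independent success probability $s$, pad to a fixed budget, and apply \Cref{pre:binomial} to bound the chance of fewer than $N$ successes. The one difference is that the paper takes the padded success count to be exactly $\operatorname{Bin}(M,s)$ with mean at least $8(N+\log(1/\beta))$. Your uniform-variable coupling down to $\operatorname{Bin}(M,s_0)$ is valid but unnecessary, since, as you yourself note, each attempt succeeds with probability exactly $s$ regardless of the requested POVM.
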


\begin{proof}
For each POVM $\{E_y\}$ requested by the inner protocol, measure a fresh original copy with effects
$F_y=K^\dagger E_yK$ and $F_{\mathrm{fail}}=I-K^\dagger K$.
Extend a POVM on the known output support arbitrarily to its complement if necessary.
These effects are positive and sum to $I$, so each trial is a single POVM on one original copy.
The success probability is $s$, independently of the inner history, and
$\Pr[y\mid\mathrm{success}]=\operatorname{Tr}(E_y\sigma)$.
After failure retry without changing that history; after success pass $y$ to the inner protocol.
Consequently the simulated successful transcript has exactly the desired law.
The success indicators are independent Bernoulli variables of parameter $s$; padding after termination if necessary defines $X\sim\operatorname{Bin}(M,s)$ with mean $\mu\geq8(N+\log(1/\beta))$.
\Cref{pre:binomial} gives $\Pr[X<N]\leq\Pr[X<\mu/2]\leq e^{-\mu/8}\leq\beta$.
Coupling to the unlimited successful transcript proves the claim.
\end{proof}

The filter therefore realizes lower-dimensional single-copy protocols at a copy overhead $\mathcal{O}(Ld/n)$.
The information it encodes is the polynomial
\begin{equation}
\label{up:encoding-polynomial}
    F_m(x)=\operatorname{Tr}[(\rho+x\rho^2)^m]
          =\sum_{j=0}^m\binom mjx^jp_{m+j}(\rho).
\end{equation}
Thus its $m$th moments at finitely many filter parameters recover every order from $m$ through $2m$.

We now prove a more precise version of \Cref{lem:moment-doubling} from the technical overview, with simultaneous estimation guarantees and explicit control of the accuracy and failure probability.

\begin{lemma}[Moment-order doubling]
\label[lemma]{up:doubling}
Fix $m\geq2$ and $a\in[1/2,1)$.
Suppose that, for every fixed accuracy and failure probability, $p_m$ on $D$-dimensional states has an adaptive single-copy estimator using $\mathcal{O}(D^a)$ copies in the worst case.
Then, for every $\epsilon,\delta\in(0,1)$, all of $p_m,\ldots,p_{2m}$ can be estimated simultaneously to error $\epsilon$, with joint success probability at least $1-\delta$, using $\mathcal{O}_{m,a,\epsilon,\delta}(d^{1/(2-a)})$ copies.
\end{lemma}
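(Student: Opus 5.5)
We plan to assemble the lemma from the filtering tools of this section. Fix $m+1$ distinct nodes $x_0,\ldots,x_m\in[0,1]$, say $x_k=k/m$. For each node we estimate the value $F_m(x_k)$ of the polynomial in \eqref{up:encoding-polynomial}. By \Cref{pre:interpolation}, estimates accurate to $\eta'$ at the nodes recover every coefficient $\binom mj p_{m+j}(\rho)$ to error $C_m\eta'$, and hence all of $p_m,\ldots,p_{2m}$ simultaneously. (The constant coefficient already gives $p_m$.) We therefore choose $\eta'=\epsilon/C_m$ and allocate failure probability $\delta/(3(m+1))$ to each of the constantly many subroutines.

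For a single node $x$, we first estimate $p_2(\rho)$ to accuracy $\eta_2$ using \Cref{up:purity}. This costs $\mathcal{O}(\sqrt d)=\mathcal{O}(d^{1/(2-a)})$ copies, since $1/(2-a)\geq 2/3>1/2$ for $a\geq1/2$. We then run the pilot stage with $n=\lceil d^{1/(2-a)}\rceil$ trials, which uses at most $n$ copies, and abort if $\|S_x\|_{\mathrm{op}}>L$.

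Choose $L$ as a fixed constant. If $n/d\to0$, then \eqref{up:pilot-operator-tail} gives failure probability $d(2en/(Ld))^L=\mathcal{O}(d\cdot d^{-L(1-a)/(2-a)})$, which tends to zero once $L>(2-a)/(1-a)$. This requires $a<1$, which is where that hypothesis enters. By \Cref{up:normalized-filter} with $\eta$ a small constant, on an event of probability $1-\mathcal{O}(1/n)$ the filtered state satisfies $|p_m(\sigma_x)-F_m(x)/(1+xp_2)^m|\leq4m\eta$, and the success probability is $s_x\geq n/(4Ld)$.

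Next, $\sigma_x$ lives on the known space $\operatorname{range}(S_x)$ of dimension $D\leq n$, which we identify with $\mathbb C^{D}$. We run the assumed $p_m$ estimator there with $\mathcal{O}(n^a)$ copies at accuracy $\eta$. \Cref{up:filter-simulation} simulates this on original copies at cost $\mathcal{O}((Ld/n)(n^a+\log(1/\beta)))=\mathcal{O}(dn^{a-1})=\mathcal{O}(d^{1/(2-a)})$. Multiplying the estimate by $(1+x\widehat p_2)^m$ gives an estimate of $F_m(x)$. Since $1+xp_2\in[1,2]$ and $p_m(\sigma_x)\leq1$, its error is at most $2^m(4m\eta+\eta)+m3^{m-1}\eta_2$. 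This error is made below $\eta'$ by choosing $\eta$ and $\eta_2$ as constants depending on $m$ and $\epsilon$.

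The step needing most care is the bookkeeping of failure events and dimension thresholds. The pilot-concentration event fails with probability $48/(n\eta^2)$, the operator-norm abort is exponentially unlikely, and both are $o(1)$; they are therefore below $\delta/(3(m+1))$ for $d$ at least a constant threshold. For smaller $d$ we simply use any $\mathcal{O}_{m,\epsilon,\delta}(1)$ protocol, for instance tomography, since $d$ is then bounded. A union bound over the subroutines (purity, the $m+1$ pilots, the filtering simulations and the inner estimators) then gives joint success probability at least $1-\delta$. All subroutines act on single copies with classical memory, because each filter composed with the inner POVM is itself one POVM. The worst-case budget is padded to a fixed total of $\mathcal{O}(d^{1/(2-a)})$.
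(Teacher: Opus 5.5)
Your proposal is correct and follows the paper's proof essentially step for step: nodes $x_k=k/m$, $n=\lceil d^{1/(2-a)}\rceil$, a constant threshold $L$ with $L(1-a)/(2-a)>1$ (the paper takes $L=\lceil 2/(1-b)\rceil$ with $b=1/(2-a)$), simulation of the assumed $p_m$ estimator through the filter at cost $\mathcal{O}(dn^{a-1})$, renormalization by $(1+x\widehat p_2)^m$, interpolation, a union bound, and tomography for small $d$. The only slips are cosmetic. The operator-norm abort probability is polynomially rather than exponentially small, as your own computation shows. Giving $\delta/(3(m+1))$ to each of the $3(m+1)+1$ subroutines slightly overshoots $\delta$; the paper uses $\delta/4$ for purity and $\delta/(4(m+1))$ for the rest. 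Your prefactor $2^m$ needs $\widehat p_2$ clipped to $[0,1]$, as the paper does.
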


\begin{proof}
Put $b=1/(2-a)$, $n=\lceil d^b\rceil$, and $L=\lceil2/(1-b)\rceil$.
The threshold $L$ depends only on $a$, not on $d$, and its choice makes the pilot operator-norm failure probability vanish.
For nodes $x_i=i/m$, let $\ell_i$ be the Lagrange polynomials and define
\[
    A_m=\max_{0\leq j\leq m}\sum_{i=0}^m
           \left|\frac{[x^j]\ell_i(x)}{\binom mj}\right|,
    \qquad \xi=\frac\epsilon{A_m}.
\]
Here $A_m\geq1$ since $x_0=0$.
By \Cref{pre:interpolation}, node errors at most $\xi$ suffice for all desired moment errors to be at most $\epsilon$.
Take the fixed internal tolerances
\begin{equation}
\label{up:doubling-accuracies}
    \tau=\frac{\xi}{3\,2^m},\qquad
    \zeta=\frac{\xi}{3m2^{m-1}},\qquad
    \eta=\frac{\xi}{12m2^m},\qquad
    \beta=\frac\delta{4(m+1)}.
\end{equation}
First estimate $p_2(\rho)$ to error $\zeta$ and failure probability $\delta/4$, and clip the result $\widehat p_2$ to $[0,1]$.
At each node independently construct its pilot filter.
By Lemmas~\ref{up:pilot-concentration} and~\ref{up:normalized-filter}, the probability that either validity or the conclusions of~\eqref{up:filtered-moment} fail is at most
\[
    d\left(\frac{2en}{Ld}\right)^L+\frac{48}{n\eta^2}
    \leq\left(\frac{4e}{L}\right)^Ld^{1-L(1-b)}
          +\frac{48}{d^b\eta^2}=o(1).
\]
Here $L(1-b)\geq2$; in particular the failure probability is at most $\beta$ for $d$ above a fixed threshold.

On a good pilot, estimate $p_m(\sigma_{x_i})$ to error $\tau$ and failure probability $\beta$, using at most $C_{m,\tau,\beta}n^a$ successful outputs.
Its support is known and has dimension at most $n$.
\Cref{up:filter-simulation} realizes this subroutine using $\mathcal{O}((d/n)(n^a+1))$ original copies and additional failure probability $\beta$.
Clip the inner estimate $\widehat b_i$ to $[0,1]$, and form
$\widehat F_i=(1+x_i\widehat p_2)^m\widehat b_i$.
On the good events, the filtered-moment estimate differs from $F_m(x_i)/(1+x_ip_2)^m$ by at most $\tau+4m\eta$.
The normalization factor is at most $2^m$, and its change when replacing $p_2$ by $\widehat p_2$ is at most $m2^{m-1}\zeta$ by the mean-value theorem.
Thus
\[
    |\widehat F_i-F_m(x_i)|
       \leq2^m(\tau+4m\eta)+m2^{m-1}\zeta=\xi.
\]
Interpolation of~\eqref{up:encoding-polynomial} now gives the desired estimates.
The purity failure contributes $\delta/4$, and the pilot, inner-estimation, and simulation failures at all nodes contribute at most $3(m+1)\beta=3\delta/4$.
An invalid pilot or an exhausted simulation budget is declared a failure and produces an arbitrary estimate.
The total budget accounts for purity estimation, pilot trials, successful filtered samples, and the additional attempts needed for a fixed simulation failure probability:
\begin{equation}
\label{up:doubling-cost}
    \mathcal{O}_{m,a,\epsilon,\delta}\left(\sqrt d+n+dn^{a-1}+\frac dn\right)
       =\mathcal{O}_{m,a,\epsilon,\delta}(d^b),
\end{equation}
because $b=1+b(a-1)$ and $b\geq2/3$.
The finitely many smaller dimensions can be handled by single-copy tomography to Hilbert--Schmidt accuracy $\epsilon/(2m)$, followed by moment evaluation and \Cref{pre:moment-lipschitz}; increasing the dimension-independent constant covers these cases.
\end{proof}

\begin{proof}[Proof of \Cref{up:main}]
Starting with purity, apply \Cref{up:doubling} for $m=2,4,8,\ldots$.
The exponents obey
\[
    a_1=\frac12,\qquad a_{h+1}=\frac1{2-a_h},
    \qquad a_h=\frac h{h+1}.
\]
Level $h$ supplies every order $2^{h-1}<t\leq2^h$, so choose $h=\lceil\log_2t\rceil$.
The number of recursive levels depends only on $t$, and each invocation permits arbitrary fixed internal error and failure tolerances.
Their allocation therefore introduces no dimension dependence.
\Cref{up:filter-simulation} counts all attempts at every level and ensures that no quantum information passes between copies.
\end{proof}

For example, $F_2(x)=p_2+2xp_3+x^2p_4$ gives
\[
    p_3=2F_2(1/2)-\tfrac12F_2(1)-\tfrac32p_2,
    \qquad
    p_4=2F_2(1)-4F_2(1/2)+2p_2.
\]
Taking $n=\lceil d^{2/3}\rceil$ and $L=6$, purity estimation on each filtered state costs $\mathcal{O}(\sqrt n)$ successful measurements, hence $\mathcal{O}(d/\sqrt n)$ original copies.
Both this cost and the pilot cost are $\mathcal{O}(d^{2/3})$, while the normalization costs an additional $\mathcal{O}(\sqrt d)$ copies.
The two fixed interpolation formulas show directly why orders three and four have the same upper bound.

\subsection{Cyclic overlaps, weighted moments, partial transposition, and fixed-rank spectra}
\label{up:overlap-section}
\label{up:weighted-section}
\label{up:partial-transpose-section}
\label{up:spectrum-section}

We prove the four upper bounds in \Cref{cor:moment-consequences} by reductions to ordinary moments.
The first three recover the target quantities from finitely many moments of suitably constructed states.
The last uses uniform continuity of the inverse moment map at fixed rank.

\begin{theorem}[Upper bound for cyclic overlaps]
\label[theorem]{up:cyclic-overlap}
Fix an integer $t\geq2$ and $\epsilon,\delta\in(0,1)$, and let $h=\lceil\log_2t\rceil$.
For arbitrary unknown states $\rho_0,\ldots,\rho_{t-1}$ on $\mathbb C^d$, an adaptive single-copy protocol estimates the cyclic overlap
$z=\operatorname{Tr}(\rho_0\cdots\rho_{t-1})$
with complex-modulus error at most $\epsilon$ and failure probability at most $\delta$, using
$\mathcal{O}_{t,\epsilon,\delta}(d^{h/(h+1)})$
copies in total across all input states.
Each round measures one fresh copy from a classically selected input source, and only classical information is retained between rounds.
\end{theorem}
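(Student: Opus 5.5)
The plan follows the construction sketched in \Cref{sec:overview-extensions}. The case $t=2$ is handled separately. The identity
\[
    \operatorname{Tr}(\rho_0\rho_1)=2p_2\bigl(\tfrac{\rho_0+\rho_1}2\bigr)-\tfrac12 p_2(\rho_0)-\tfrac12 p_2(\rho_1)
\]
holds, and $\operatorname{Tr}(\rho_0\rho_1)$ is real. Three calls to \Cref{up:purity} therefore suffice, with a single copy of the mixture prepared by sampling the source uniformly.

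For $t\geq3$, take the auxiliary states $A_j=\ket{a_j}\bra{a_j}$ on $\mathbb C^t$, with $\ket{a_j}=(\ket j+e^{i\phi_j}\ket{j+1})/\sqrt2$. For each nonempty $S\subseteq\{0,\ldots,t-1\}$, form the $td$-dimensional state $\Omega_S=|S|^{-1}\sum_{j\in S}\rho_j\otimes A_j$. Expanding $\operatorname{Tr}[(\sum_{j\in S}\rho_j\otimes A_j)^t]$ gives a sum over words $(j_1,\ldots,j_t)\in S^t$ of $\operatorname{Tr}(\rho_{j_1}\cdots\rho_{j_t})\operatorname{Tr}(A_{j_1}\cdots A_{j_t})$. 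The inclusion--exclusion combination
\[
    \sum_S(-1)^{t-|S|}|S|^t p_t(\Omega_S)
\]
keeps exactly the surjective words, namely the permutations of $\{0,\ldots,t-1\}$. For a permutation word, $\operatorname{Tr}(A_{j_1}\cdots A_{j_t})=\prod_k\langle a_{j_k}|a_{j_{k+1}}\rangle$ with indices cyclic. The factor $\langle a_j|a_k\rangle$ vanishes unless $k\in\{j,j\pm1\}$ modulo $t$, and distinct letters exclude $k=j$. So only cyclic words in which every step is $\pm1$ survive. For $t\geq3$ these are the rotations of the forward and the reverse cyclic orders.

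Next I compute the surviving terms. Each forward step contributes $\langle a_j|a_{j+1}\rangle=e^{-i\phi_j}/2$, and each reverse step contributes the conjugate. The $t$ rotations in each direction give equal traces. The combination therefore equals $t2^{-t}[e^{-i\Phi}z+e^{i\Phi}\bar z]=t2^{1-t}\operatorname{Re}(e^{-i\Phi}z)$, where $\Phi=\sum_j\phi_j$ and $\operatorname{Tr}(\rho_{t-1}\cdots\rho_0)=\bar z$ because the $\rho_j$ are Hermitian. I will check the $t$ versus $2t$ counting carefully when writing the proof. Taking $\Phi=0$ and then $\Phi=\pi/2$ yields $\operatorname{Re}z$ and $\operatorname{Im}z$ as fixed linear combinations of at most $2(2^t-1)$ moments $p_t(\Omega_S)$, with coefficients depending only on $t$.

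It remains to estimate each $p_t(\Omega_S)$ by \Cref{up:main} in dimension $td$, with accuracy $\epsilon'=\epsilon\,2^{t-2}/(t\cdot 2^t t^t)$ (or any such $t$-dependent constant) and failure probability $\delta/(2^{t+1})$. A union bound then gives modulus error at most $\epsilon$. The cost is $\mathcal{O}_{t,\epsilon,\delta}((td)^{h/(h+1)})=\mathcal{O}(d^{h/(h+1)})$ copies.

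The one point needing care, and the main obstacle I expect, is that the protocol of \Cref{up:main} assumes physical copies of $\Omega_S$ while the auxiliary system must not actually be supplied. Each requested copy is simulated as follows. Sample $j\in S$ uniformly and classically, and measure a fresh copy of $\rho_j$ with effects $(I\otimes\bra{a_j})E_y(I\otimes\ket{a_j})$. These effects form a valid POVM, since $\sum_y$ gives $I$. The outcome probability is then $|S|^{-1}\sum_j\operatorname{Tr}[E_y(\rho_j\otimes A_j)]=\operatorname{Tr}(E_y\Omega_S)$ exactly. This also covers the filters inside \Cref{up:main}, because every round there is already a single POVM on one copy. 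Adaptivity is preserved because the effects depend only on the classical history. The number of source copies used equals the number of $\Omega_S$ copies, so the total budget across all input states is the stated one.
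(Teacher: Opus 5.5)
Your proposal matches the paper's proof essentially step for step: you treat $t=2$ separately via three purities, use the auxiliary states $\ket{a_j}$ with inclusion--exclusion over subsets $S$ to isolate $t2^{1-t}\operatorname{Re}(e^{-i\Phi}z)$, take the two phase settings, and simulate copies of $\Omega_S$ by the effects $(I\otimes\bra{a_j})E_y(I\otimes\ket{a_j})$ on a uniformly chosen source. The only blemish is your explicit $\epsilon'$, which has the recovery factor essentially inverted: the combination is multiplied by $2^{t-1}/t$, so you need something like $\epsilon'=\epsilon t/(2^{t}\sum_S|S|^t)$, which is the paper's $\epsilon/(2C_t)$; since you allow any $t$-dependent constant, this is harmless.
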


\begin{proof}
For $t=2$, the overlap is real and satisfies
\[
    \operatorname{Tr}(\rho_0\rho_1)
    =2p_2\!\left(\frac{\rho_0+\rho_1}{2}\right)
      -\frac{p_2(\rho_0)+p_2(\rho_1)}2.
\]
A copy of the mixture is obtained by choosing either input source with equal probability.
Estimating the three purities to error $\epsilon/3$ and failure probability $\delta/3$ therefore proves the claim using \Cref{up:purity}.

Now suppose $t\geq3$.
The target can be complex, and moments of ordinary mixtures generally combine different product orderings.
To separate the desired ordering and its real and imaginary parts, introduce known auxiliary states
\[
    A_j=\ket{a_j}\bra{a_j},
    \qquad
    \ket{a_j}=\frac{\ket{j}+e^{i\phi_j}\ket{j+1}}{\sqrt2},
    \qquad 0\leq j<t,
\]
where the auxiliary space has dimension $t$ and basis labels are interpreted modulo $t$.
These states are used to describe the reduction; below we absorb them into single-copy POVMs on the original inputs.

For each nonempty subset $S\subseteq\{0,\ldots,t-1\}$, consider the physical mixture
\[
    \Omega_S=\frac1{|S|}\sum_{j\in S}\rho_j\otimes A_j.
\]
Expanding its $t$th moment and applying inclusion--exclusion gives
\begin{equation}
\label{up:overlap-polarization}
    \sum_{\varnothing\ne S\subseteq\{0,\ldots,t-1\}}
       (-1)^{t-|S|}|S|^t p_t(\Omega_S)
    =t\,2^{1-t}\operatorname{Re}\!\left[
        e^{-i\sum_j\phi_j}z
      \right].
\end{equation}
To verify this identity, consider a product in the expansion whose set of input labels is $J$.
Its coefficient in the subset sum is
$\sum_{S\supseteq J}(-1)^{t-|S|}$,
which vanishes unless $J=\{0,\ldots,t-1\}$.
Since every product has length $t$, the surviving products contain each input label exactly once.

For an ordering $(i_1,\ldots,i_t)$, its auxiliary factor is
\[
    \operatorname{Tr}(A_{i_1}\cdots A_{i_t})
    =\langle a_{i_1}|a_{i_2}\rangle
       \cdots
       \langle a_{i_t}|a_{i_1}\rangle.
\]
Distinct auxiliary states have zero overlap unless their labels are adjacent on the cycle.
Consequently, only the forward and reverse cyclic orderings survive, each with $t$ cyclic rotations.
Their auxiliary factors are $2^{-t}e^{-i\sum_j\phi_j}$ and its complex conjugate, respectively.
Their state factors are $z$ and $\overline z$, because the input states are Hermitian.
Adding these contributions proves \eqref{up:overlap-polarization}.

Taking all phases zero recovers $\operatorname{Re}z$ after multiplying the left-hand side by $2^{t-1}/t$.
Taking $\phi_0=\pi/2$ and all other phases zero similarly recovers $\operatorname{Im}z$, since $\operatorname{Re}(-iz)=\operatorname{Im}z$.
For explicit error control, set
\[
    C_t=\frac{2^{t-1}}t
        \sum_{\varnothing\ne S\subseteq\{0,\ldots,t-1\}}|S|^t.
\]
Estimate every moment in the two phase settings to error $\epsilon/(2C_t)$ and failure probability $\delta/[2(2^t-1)]$.
On the joint success event, each recovered component has error at most $\epsilon/2$, so the resulting estimate has complex-modulus error at most $\epsilon$.
The union bound gives total failure probability at most $\delta$.
There are $2(2^t-1)$ moment estimates, each on a $td$-dimensional state.
Since $t$ is fixed, \Cref{up:main} gives total complexity
$\mathcal{O}_{t,\epsilon,\delta}(d^{h/(h+1)})$.

Finally, each requested measurement on $\Omega_S$ can be implemented directly on one original input copy.
Choose $j$ uniformly from $S$.
If the moment estimator requests a POVM $\{E_y\}$ on the enlarged space, measure $\rho_j$ with
\[
    F_{y\mid j}
       =(I\otimes\bra{a_j})E_y(I\otimes\ket{a_j}).
\]
These effects are positive and sum to $I$, and averaging over the chosen source gives
\[
    \frac1{|S|}\sum_{j\in S}
       \operatorname{Tr}(F_{y\mid j}\rho_j)
    =\operatorname{Tr}(E_y\Omega_S).
\]
Passing only the outcome $y$ to the inner estimator therefore reproduces its measurement statistics exactly.
This construction applies at every adaptive history, so no auxiliary state needs to be physically prepared and no quantum information is retained between copies.
Each simulated mixture sample consumes exactly one original input copy, establishing the claimed total copy budget.
\end{proof}

\begin{theorem}[Upper bound for weighted moments]
\label[theorem]{up:weighted}
For fixed $t\geq2$, known Hermitian $O$ with $\|O\|_{\mathrm{op}}\leq1$, and $\epsilon,\delta\in(0,1)$, estimating $\operatorname{Tr}(O\rho^t)$ to error $\epsilon$ and failure probability $\delta$ uses $\mathcal{O}_{t,\epsilon,\delta}(d^{h/(h+1)})$ single-copy measurements, where $h=\lceil\log_2t\rceil$.
The bound is uniform over $O$ and $\rho$.
\end{theorem}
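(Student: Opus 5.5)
The proof reduces $\operatorname{Tr}(O\rho^t)$ to ordinary moments of filtered states, following the sketch in \Cref{sec:overview-extensions}. For $x\in[-1/2,1/2]$ set $K_x=\sqrt{(I+xO)/2}$. Since $\|O\|_{\mathrm{op}}\leq1$, we have $0\preceq(I+xO)/2\preceq\tfrac34 I$, so $K_x$ is a valid Kraus operator. Its success probability is
\[
q_x=\operatorname{Tr}(K_x\rho K_x)=\tfrac12\bigl(1+x\operatorname{Tr}(O\rho)\bigr)\in[1/4,3/4],
\]
and the conditional state is $\sigma_x=K_x\rho K_x/q_x$.

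The key identity comes from $VV^\dagger$ versus $V^\dagger V$ with $V=K_x\sqrt\rho$:
\[
F(x):=(2q_x)^t p_t(\sigma_x)=\operatorname{Tr}\bigl[(\sqrt\rho(I+xO)\sqrt\rho)^t\bigr]=\operatorname{Tr}\bigl[(\rho+x\sqrt\rho O\sqrt\rho)^t\bigr].
\]
This is a polynomial in $x$ of degree at most $t$. Expanding and using cyclicity, its linear coefficient is $t\operatorname{Tr}(\sqrt\rho O\sqrt\rho\,\rho^{t-1})=t\operatorname{Tr}(O\rho^t)$.

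Fix $t+1$ distinct nodes $x_i\in[-1/2,1/2]$, say $x_i=-1/2+i/t$. At each node we estimate $q_{x_i}$ and $p_t(\sigma_{x_i})$ as follows.
\begin{itemize}[leftmargin=*]
\item $q_{x_i}$: we apply the filter to $N=\mathcal{O}(\log(1/\beta)/\eta^2)$ fresh copies and record successes. Hoeffding's inequality (\Cref{pre:binomial}) gives error $\eta$ with failure probability $\beta$.
\item $p_t(\sigma_{x_i})$: we run the estimator of \Cref{up:main} on $\sigma_{x_i}$, which is a $d$-dimensional state, to accuracy $\eta$ and failure probability $\beta$. It is simulated via \Cref{up:filter-simulation} with $s_0=1/4$, at constant overhead, so the cost is $\mathcal{O}(d^{h/(h+1)})$ copies.
\end{itemize}
We clip both estimates to $[0,1]$ and form $\widehat F_i=(2\widehat q_{x_i})^t\,\widehat p_t(\sigma_{x_i})$. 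Since $2q_x\leq 3/2$ and $p_t\leq1$, the mean-value theorem gives $|\widehat F_i-F(x_i)|\leq (3/2)^t\eta+t(3/2)^{t-1}\cdot2\eta=C_t\eta$.

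Then \Cref{pre:interpolation} recovers the linear coefficient to within $C'\eta$, where $C'$ depends only on $t$ and the fixed nodes. Dividing by $t$ and choosing $\eta=\epsilon t/(C'C_t)$ and $\beta=\delta/(2(t+1))$ gives error at most $\epsilon$. A union bound over the $2(t+1)$ subroutines gives failure probability at most $\delta$.

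The total budget is $(t+1)$ times $\mathcal{O}(d^{h/(h+1)})$ plus dimension-independent terms. Every constant depends only on $t,\epsilon,\delta$, because all uses of $O$ enter only through $\|O\|_{\mathrm{op}}\leq1$, so the bound is uniform over $(O,\rho)$.

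Each filter followed by a measurement forms a single POVM on one copy, with failure as an outcome, so only classical information is retained between copies. The step requiring most care is the spectral identity and the linear-coefficient extraction, including the observation that $\sigma_x$ need not be close to anything in a fixed basis: only its spectrum matters. Everything else is bookkeeping analogous to \Cref{up:doubling}.
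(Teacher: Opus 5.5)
Your proposal is correct and follows the paper's proof essentially step for step. It uses the same filter $K_x=\sqrt{(I+xO)/2}$, the same spectral identity $F(x)=(2q_x)^tp_t(\sigma_x)=\operatorname{Tr}[(\rho+x\sqrt\rho O\sqrt\rho)^t]$ with linear coefficient $t\operatorname{Tr}(O\rho^t)$, and the same node-wise estimation, simulation with $s_0=1/4$, and interpolation. Two small bookkeeping fixes are worth making. First, clip $\widehat q_x$ to $[1/4,3/4]$, as the paper does; your $(3/2)^t$ constants need this, since clipping to $[0,1]$ only gives $2\widehat q_x\leq 3/2+2\eta$. Second, budget the simulation failure probability of \Cref{up:filter-simulation} separately, for example by splitting $\delta$ into $3(t+1)$ parts rather than $2(t+1)$.
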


\begin{proof}
For $x\in[-1/2,1/2]$, the known filter $K_x=\sqrt{(I+xO)/2}$ succeeds with probability $q_x\in[1/4,3/4]$ and produces $\sigma_x=K_x\rho K_x/q_x$.
Equality of the nonzero spectra of $AA^\dagger$ and $A^\dagger A$ gives the degree-at-most-$t$ polynomial
\begin{equation}
\label{up:weighted-polynomial}
    F(x)=(2q_x)^tp_t(\sigma_x)
         =\operatorname{Tr}[(\rho+x\sqrt\rho O\sqrt\rho)^t].
\end{equation}
Cyclicity of the trace, without any commutation assumption on $O$, gives
\[
    [x]F(x)=\sum_{j=0}^{t-1}
       \operatorname{Tr}(\rho^j\sqrt\rho O\sqrt\rho\rho^{t-1-j})
       =t\operatorname{Tr}(O\rho^t).
\]
Choose $t+1$ distinct fixed nodes in the filter interval.
At each node estimate $q_x$ by Bernoulli trials to error $\zeta$, clipping to $[1/4,3/4]$, and estimate $p_t(\sigma_x)$ to error $\tau$, clipping to $[0,1]$.
The latter uses \Cref{up:main} and \Cref{up:filter-simulation} with $s_0=1/4$, so filtering has constant copy overhead.
The estimate $\widehat F(x)=(2\widehat q_x)^t\widehat p_t(\sigma_x)$ satisfies
\[
    |\widehat F(x)-F(x)|
       \leq(3/2)^t\tau+2t(3/2)^{t-1}\zeta
\]
whenever its two component estimates are accurate.
\Cref{pre:interpolation} thus permits fixed $\tau,\zeta>0$, depending only on $t,\epsilon$, that recover $[x]F/t$ to error $\epsilon$.
Assign failure probability $\delta/[3(t+1)]$ to each node's probability estimate, moment estimate, and simulation.
The union bound and the constant number of nodes prove the claim.
\end{proof}

\begin{theorem}[Upper bound for partial-transpose moments]
\label[theorem]{up:partial-transpose}
Fix $t\geq2$, $q\geq1$, and $\epsilon,\delta\in(0,1)$.
If $\rho_{AB}$ has total dimension $d$ and either subsystem has dimension $q$, then $\operatorname{Tr}[(\rho_{AB}^{\mathrm{T}_A})^t]$ can be estimated to error $\epsilon$ and failure probability $\delta$ using $\mathcal{O}_{t,q,\epsilon,\delta}(d^{h/(h+1)})$ single-copy measurements, where $h=\lceil\log_2t\rceil$.
\end{theorem}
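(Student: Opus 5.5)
The plan is the reduction sketched in \Cref{sec:overview-extensions}. First handle the case $\dim(A)=q$; the case $\dim(B)=q$ follows by exchanging the roles of the subsystems. This works because $\rho_{AB}^{\mathrm{T}_A}$ and $\rho_{AB}^{\mathrm{T}_B}=(\rho_{AB}^{\mathrm{T}_A})^{\mathrm{T}}$ have identical spectra, so their $t$th moments coincide.

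We use the decomposition $\mathrm{T}_A=(q+1)\Phi_A-q\mathcal D_A$. Here $\Phi_A(X)=(X^{\mathrm T}+\operatorname{Tr}(X)I_A)/(q+1)$ and $\mathcal D_A(X)=\operatorname{Tr}(X)I_A/q$. Both maps are trace preserving. $\mathcal D_A$ is clearly completely positive. For $\Phi_A$, its Choi matrix is $(\mathrm{SWAP}+I)/(q+1)$, which is positive semidefinite (a multiple of the symmetric projector), so $\Phi_A$ is a channel as well. The identity itself is a direct computation: $(q+1)\Phi_A(X)-q\mathcal D_A(X)=X^{\mathrm T}$.

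Set $\omega_1=(\Phi_A\otimes\mathrm{id}_B)(\rho_{AB})$, $\omega_0=(\mathcal D_A\otimes\mathrm{id}_B)(\rho_{AB})$, and $\omega_x=(1-x)\omega_0+x\omega_1$. Then $G(x)=\operatorname{Tr}(\omega_x^t)$ is a polynomial in $x$ of degree at most $t$. Affine extrapolation gives $\omega_{q+1}=(q+1)\omega_1-q\omega_0=\rho_{AB}^{\mathrm{T}_A}$, hence $G(q+1)$ is the target.

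Next comes the estimation. Fix nodes $x_i=i/t$ for $0\le i\le t$. For each node, run the estimator of \Cref{up:main} on the $d$-dimensional state $\omega_{x_i}$, with accuracy $\tau$ and failure probability $\delta/(t+1)$. Each requested POVM $\{E_y\}$ on $\omega_{x_i}$ is implemented on one original copy of $\rho_{AB}$. With probability $x_i$ we measure $(\Phi_A^\dagger\otimes\mathrm{id})(E_y)$, and otherwise $(\mathcal D_A^\dagger\otimes\mathrm{id})(E_y)$. The adjoint of a unital-preserving channel maps POVMs to POVMs, so each trial is a single-copy POVM and the outcome statistics are exactly those of $\omega_{x_i}$. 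Only classical information is kept between copies, and each simulated sample costs exactly one copy. Therefore each node costs $\mathcal{O}_{t,\tau,\delta}(d^{h/(h+1)})$ copies.

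Finally, \Cref{pre:interpolation} applied at the evaluation point $z=q+1$ gives error at most $C_{t,q}\tau$, where $C_{t,q}=\sum_i|L_i(q+1)|$ depends only on $t$ and $q$. Choosing $\tau=\epsilon/C_{t,q}$ and taking a union bound over the $t+1$ nodes completes the argument, with total budget $\mathcal{O}_{t,q,\epsilon,\delta}(d^{h/(h+1)})$.

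The only delicate point is that the extrapolation constant $C_{t,q}$ grows with $q$. This is exactly why $q$ must be fixed; otherwise nothing here is hard. The remaining checks (positivity of $\Phi_A$ and exactness of the POVM simulation) are routine.
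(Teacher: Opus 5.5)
Your proof is correct, but it takes a different route from the paper's written proof. Your route is the one sketched in the paper's overview (\Cref{sec:overview-extensions}).

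The paper does not extrapolate in $x$. It expands $((q+1)\omega_1-q\omega_0)^t$, keeping the order of the factors, into the $2^t$ ordered products $\operatorname{Tr}(\omega_{s_1}\cdots\omega_{s_t})$, $\boldsymbol{s}\in\{0,1\}^t$, with coefficients $(q+1)^{|\boldsymbol{s}|}(-q)^{t-|\boldsymbol{s}|}$. It estimates each product with the ordered-overlap protocol of \Cref{up:cyclic-overlap}, treating $\omega_0,\omega_1$ as input sources with repetitions allowed. The error amplification is then $(2q+1)^t$.

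Your argument needs only \Cref{up:main} on $t+1$ physical $d$-dimensional mixtures, plus \Cref{pre:interpolation} at the extrapolation point $z=q+1$. It therefore avoids the machinery that the overlap protocol brings in:
\begin{itemize}
\item the auxiliary phase states;
\item the inclusion--exclusion over $2^t-1$ subsets;
\item the $td$-dimensional enlarged states;
\item roughly $4^t$ separate invocations of the moment estimator, where you need $t+1$.
\end{itemize}
Your route is the more economical and self-contained one.

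The paper's route instead reuses the overlap theorem as a black box and keeps the coefficient bookkeeping completely transparent. Both approaches incur only overheads depending on $t$ and $q$, so the exponent $h/(h+1)$ is unchanged.

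One small wording fix: what you need is that the adjoint of a CPTP map is completely positive and unital, and hence maps POVMs to POVMs.
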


\begin{proof}
Suppose first that $\dim A=q$.
Define the maps
\[
    \Phi(X)=\frac{X^\mathrm{T}+\operatorname{Tr}(X)I_A}{q+1},
    \qquad
    \mathcal D(X)=\frac{\operatorname{Tr}(X)}q I_A.
\]
Their unnormalized Choi matrices are $(F+I)/(q+1)$ and $I/q$, respectively, where $F$ is the swap.
Both matrices are positive semidefinite and have output partial trace $I$, so both maps are quantum channels.
Moreover, $X^\mathrm{T}=(q+1)\Phi(X)-q\mathcal D(X)$.

Let $\omega_1=(\Phi\otimes\operatorname{id}_B)(\rho_{AB})$ and $\omega_0=(\mathcal D\otimes\operatorname{id}_B)(\rho_{AB})$.
Then $\rho_{AB}^{\mathrm{T}_A}=(q+1)\omega_1-q\omega_0$.
Expanding its $t$th power while preserving the order of the factors gives
\[
    \operatorname{Tr}[(\rho_{AB}^{\mathrm{T}_A})^t]
    =
    \sum_{\boldsymbol{s}\in\{0,1\}^t}
       (q+1)^{|\boldsymbol{s}|}(-q)^{t-|\boldsymbol{s}|}
       \operatorname{Tr}(\omega_{s_1}\cdots\omega_{s_t}),
\]
where $|\boldsymbol{s}|=\sum_{j=1}^t s_j$.
Thus the target is a linear combination of $2^t$ ordered overlaps, each covered by \Cref{up:cyclic-overlap}; repeated input states are allowed.

The sum of the absolute coefficients is $(2q+1)^t$.
Estimate each overlap to complex-modulus error $\epsilon/(2q+1)^t$ and failure probability $\delta/2^t$, and output the real part of their weighted sum.
The triangle inequality and union bound give additive error at most $\epsilon$ with probability at least $1-\delta$.
Since $t$ and $q$ are fixed, \Cref{up:cyclic-overlap} gives a total copy budget of
$\mathcal{O}_{t,q,\epsilon,\delta}(d^{h/(h+1)})$, where $h=\lceil\log_2t\rceil$.

Each requested copy of $\omega_0$ or $\omega_1$ uses one fresh copy of $\rho_{AB}$.
The corresponding channel followed by a measurement can be combined into a POVM on that original copy by applying the channel adjoint to the measurement effects.
This remains valid for adaptive measurement choices, so the reduction preserves single-copy access.

If instead $\dim B=q$, apply the same construction to the transpose on $B$.
The identity $(\rho_{AB}^{\mathrm{T}_A})^\mathrm{T}=\rho_{AB}^{\mathrm{T}_B}$ implies that the two partial transposes have the same spectrum and hence the same moments.
\end{proof}

\begin{theorem}[Upper bound for fixed-rank spectrum estimation]
\label[theorem]{up:spectrum}
For fixed $r\geq2$ and $\epsilon,\delta\in(0,1)$, the sorted spectrum of any $d$-dimensional state of rank at most $r$ can be estimated to $\ell_1$ error $\epsilon$, with failure probability at most $\delta$, using $\mathcal{O}_{r,\epsilon,\delta}(d^{h/(h+1)})$ single-copy measurements, where $h=\lceil\log_2r\rceil$.
No eigenvalue-gap assumption is required.
\end{theorem}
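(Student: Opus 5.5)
The plan is to reduce spectrum recovery to moment estimation through \Cref{pre:spectrum-continuity}, with all moments of order at most $r$ estimated by \Cref{up:main}. First, since $\operatorname{rank}(\rho)\leq r$, padding the nonzero eigenvalues of $\rho$ with zeros gives a sorted probability vector $\lambda\in\Delta_r^\downarrow$. It satisfies $p_j(\lambda)=p_j(\rho)$ for every $j\geq1$, because zero eigenvalues contribute nothing to power sums. Given the target $\ell_1$ error $\epsilon$, \Cref{pre:spectrum-continuity} supplies a $\delta_0=\delta_0(r,\epsilon)>0$ with the following property: any $\mu\in\Delta_r^\downarrow$ whose power sums $p_2,\ldots,p_r$ each lie within $\delta_0$ of those of $\lambda$ satisfies $\|\lambda-\mu\|_1<\epsilon$.

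Second, for each $j=2,\ldots,r$, I will run the protocol of \Cref{up:main} to estimate $p_j(\rho)$ to additive error $\delta_0/2$ with failure probability $\delta/(r-1)$. By the union bound, all estimates $\widehat p_j$ are simultaneously accurate with probability at least $1-\delta$. For $j\leq r$ we have $\lceil\log_2 j\rceil\leq h=\lceil\log_2 r\rceil$, so each call costs $\mathcal{O}_{r,\epsilon,\delta}(d^{h/(h+1)})$ copies, and there are only $r-1$ calls. Alternatively, the doubling lemma \Cref{up:doubling} already returns all orders up to $2^h$ simultaneously at this cost. Either way, the total budget stays within $\mathcal{O}_{r,\epsilon,\delta}(d^{h/(h+1)})$, and only classical information passes between copies.

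Third, the output step. I will output any $\widehat\mu\in\Delta_r^\downarrow$ minimizing $\max_{2\leq j\leq r}|p_j(\mu)-\widehat p_j|$; a minimizer exists because this objective is continuous on a compact set. On the success event, $\lambda$ itself attains objective value at most $\delta_0/2$, so the minimizer satisfies $|p_j(\widehat\mu)-\widehat p_j|\leq\delta_0/2$. The triangle inequality then gives $|p_j(\widehat\mu)-p_j(\lambda)|\leq\delta_0$, and the continuity lemma yields $\|\widehat\mu-\lambda\|_1<\epsilon$. The reported spectrum is $\widehat\mu$, padded with zeros if a length-$d$ vector is desired. Nothing here depends on eigenvalue gaps: the continuity lemma covers zero and repeated eigenvalues, and the minimizer does not need to be unique.

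I expect no serious obstacle in this argument; it mainly assembles earlier results. The only point needing care is strictness at the boundary: \Cref{pre:spectrum-continuity} is stated with a strict inequality $<\delta$. Taking each estimation error to be $\delta_0/3$ rather than $\delta_0/2$ makes the deviation of every $p_j$ at most $2\delta_0/3<\delta_0$, which removes the issue.
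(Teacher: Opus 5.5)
Your proposal is correct and follows essentially the same route as the paper. Both estimate $p_2,\ldots,p_r$ via \Cref{up:main} with a union bound, output a sorted probability vector whose moments best fit the estimates, and invoke \Cref{pre:spectrum-continuity}; your $\delta_0/3$ adjustment correctly handles the strict inequality there. The differences are cosmetic: the paper minimizes over a finite $\ell_1$-net rather than all of $\Delta_r^\downarrow$, so the postprocessing is an explicit finite computation. It also works with length $s=\min\{r,d\}$, so dimensions $d<r$ need no truncation of the output.
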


\begin{proof}
Let $s=\min\{r,d\}$ and represent the unknown spectrum by a sorted probability vector $\lambda$ of length $s$, padding with zeros when the rank is smaller than $s$.
If $s=1$, the spectrum is known exactly, so assume $s\geq2$.

We first choose the accuracy required for moment estimation.
For each integer $2\leq k\leq r$, \Cref{pre:spectrum-continuity}, after shrinking its threshold if necessary, supplies $\kappa_k>0$ such that any two sorted probability vectors of length $k$ satisfy
\[
    \max_{2\leq j\leq k}|p_j(\lambda)-p_j(\mu)|
    \leq\kappa_k
    \quad\Longrightarrow\quad
    \|\lambda-\mu\|_1\leq\epsilon.
\]
Set $\kappa=\min_{2\leq k\leq r}\kappa_k>0$.
This choice depends only on $r,\epsilon$, not on $d$.
Using \Cref{up:main}, estimate each moment $p_j(\rho)$, $2\leq j\leq s$, to error $\kappa/4$ and failure probability $\delta/(s-1)$.
Write the resulting estimates as $\widehat p_j$.
By the union bound, all estimates are simultaneously accurate with probability at least $1-\delta$.

We reconstruct the spectrum by comparing these estimates with the moments of finitely many candidate spectra.
Choose a finite $\ell_1$-net $\mathcal N$ of radius $\kappa/(4s)$ in the compact set of sorted probability vectors of length $s$.
Thus every possible spectrum lies within this distance of some candidate in $\mathcal N$.
Output a candidate $\widehat\lambda$ minimizing
\[
    \max_{2\leq j\leq s}|p_j(\mu)-\widehat p_j|
    \qquad\text{over }\mu\in\mathcal N.
\]
This minimization involves only classical calculations.

To verify the reconstruction, condition on all moment estimates being accurate.
There exists $\lambda^*\in\mathcal N$ with
$\|\lambda^*-\lambda\|_1\leq\kappa/(4s)$.
Since $|x^j-y^j|\leq j|x-y|$ for $x,y\in[0,1]$, we have
\[
    |p_j(\lambda^*)-p_j(\lambda)|
    \leq j\|\lambda^*-\lambda\|_1
    \leq\kappa/4,
    \qquad 2\leq j\leq s.
\]
Together with the estimation error, this shows that $\lambda^*$ differs from every estimated moment by at most $\kappa/2$.
The minimizing candidate $\widehat\lambda$ has no larger maximum discrepancy.
Consequently,
\[
    |p_j(\widehat\lambda)-p_j(\lambda)|
    \leq |p_j(\widehat\lambda)-\widehat p_j|
       +|\widehat p_j-p_j(\lambda)|
    \leq\frac{3\kappa}{4}
    \leq\kappa_s,
    \qquad 2\leq j\leq s.
\]
The defining property of $\kappa_s$ therefore gives
$\|\widehat\lambda-\lambda\|_1\leq\epsilon$.
Appending $d-s$ zeros produces an estimate of the full spectrum without changing this error.

Finally, there are at most $r-1$ moment estimators.
Their accuracies and failure probabilities depend only on $r,\epsilon,\delta$, and their dimension exponents are at most $h/(h+1)$, where $h=\lceil\log_2r\rceil$.
Their total copy budget is therefore
$\mathcal{O}_{r,\epsilon,\delta}(d^{h/(h+1)})$.
The argument uses moment continuity on the entire sorted probability simplex, including repeated and zero eigenvalues, so no eigenvalue-gap assumption is needed.
\end{proof}

\section{Lower bounds}
\label{sec:proof-lower}

We compare two ensembles constructed from a smooth spectral path whose weights have matching low-order power sums.
We split the endpoint total variation distance according to whether the accumulated projector sum is bounded.
On records satisfying this bound, we control the probability derivative using moment matching and martingale estimates for matrix chains and cycles.
The remaining contribution is bounded directly by the probabilities of exceptional records at the two endpoints.

\begin{theorem}[Lower bound for adaptive single-copy moment estimation]
\label[theorem]{lo:main}
For every integer $t\geq2$, there are constants $\epsilon_t,c_t>0$ and $d_t$ such that the following holds for all $d\geq d_t$.
Any adaptive single-copy protocol that, for every state $\rho$ on $\mathbb C^d$, estimates $\operatorname{Tr}(\rho^t)$ to additive error at most $\epsilon_t$ with probability at least $2/3$ has worst-case sample complexity at least
\[
    c_t d^{\lceil\log_2t\rceil/(1+\lceil\log_2t\rceil)}.
\]
The same conclusion holds for every smaller error tolerance.
The protocol may use arbitrary single-copy POVMs and arbitrary classical adaptation, but no quantum memory between copies.
\end{theorem}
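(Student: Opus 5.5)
The proof follows the two-point strategy of \Cref{sec:overview-lower}. Fix $t$ and set $h=\lceil\log_2t\rceil$. For $t=2$ the bound $\Omega(\sqrt d)$ is the known Haar-versus-maximally-mixed argument, so the main case is $h\geq2$. Put $r=2^{h-1}+1\leq t$ and use the spectral path
\[
\lambda_j(\theta)=\frac{1+\cos(2\pi j/r+\theta)}{r},\qquad 0\leq j<r,\quad 0\leq\theta\leq\pi/r.
\]
Expanding $\cos^k$ in Fourier modes shows that $p_k(\lambda(\theta))$ is constant for $k<r$, because equally spaced phases kill every frequency below $r$. We then check that $p_t(\lambda(0))\neq p_t(\lambda(\pi/r))$, either by direct computation or by noting that the frequency-$r$ component of $p_t$ is a nonzero multiple of $\cos(r\theta)$ for $t\geq r$. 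Define the Gaussian ensembles $\rho_\theta$ as in \eqref{eq:overview-gaussian-prior} with $r$ vectors and a smooth cutoff $\chi(R_\theta)$. By \Cref{pre:gaussian}, $p_t(\rho_\theta)\to\alpha^tp_t(\lambda(\theta))$ in probability, so the endpoint ensembles have a constant $t$-th-moment gap $\Delta_t\alpha^t$. Taking $\epsilon_t$ a small fraction of this gap, \Cref{pre:testing} requires $\operatorname{TV}(P_0,P_{\pi/r})\geq1/3-o(1)$ at budget $T$. It therefore suffices to show that this distance is at most $C_hc^{(h+1)/2}+o(1)$ when $T=\lfloor cd^{h/(h+1)}\rfloor$.

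To bound the distance, refine each POVM into rank-one effects and take densities relative to the transcript law under $I/d$. Split $\operatorname{TV}\leq\frac12\int_G|p_{\pi/r}-p_0|\,d\nu+\frac12[P_0(G^c)+P_{\pi/r}(G^c)]$, where $G=\{\|\sum_iP_i\|_{\mathrm{op}}<\kappa\}$. On $G$, apply \Cref{pre:tv-path} restricted to $G$. Expand $\mathbb E_G[\chi\prod_if_i]$ in Gaussian pairings and cumulant blocks: writing $f_i=1+d\operatorname{Tr}((P_i-I/d)(\rho_\theta-I/d))$ produces cycle sums $\mathcal C_{\ell,T}$, and $\partial_\theta$ acting on a block of size $\ell$ yields $\alpha^\ell p_\ell'(\lambda)/\ell$. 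This gives \eqref{eq:overview-marked-cycle-derivative}, with blocks $\ell<r$ vanishing and cutoff errors exponentially small. By conditional Jensen, $\int_G|\partial_\theta p_\theta|\,d\nu\leq\sum_\ell\frac{\alpha^\ell|p'_\ell|}{\ell}(\mathbb E_\theta|\mathcal C_{\ell,T}|^2)^{1/2}+o(1)$, with the truncated $Y_i$ used in place of $\widehat Y_i$ on $G$.

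The core estimate is $\mathbb E_\theta|\mathcal C_{\ell,T}|^2=\mathcal O_\ell(T^{h'+1}/d^{h'})$ for $2^{h'-1}<\ell\leq2^{h'}$. The increment $\ell\operatorname{Tr}(Y_n\mathcal A_{\ell-1,n-1})$ is a martingale difference by \Cref{pre:orthogonality}. The one-step bound $\mathbb E[\|Y_n\|^2$-type$\mid\mathcal F_{n-1}]\lesssim 1/d$ comes from the posterior bound $\mathbb E[P_n\mid\text{past}]\preceq2I/d$, which in turn follows from \Cref{pre:bras-lieb} applied to the cutoff-weighted Gaussian posterior. Together these reduce the problem to bounding $\mathbb E\|\mathcal A_{s,n}\|_{\mathrm{HS}}^2$. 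For chains, write $\mathcal A_{s,n}$ as an average over groupings into consecutive blocks of rounds, where each product of block sums is itself a martingale. Then apply the isometry and keep a subproduct of length $\lfloor s/2\rfloor$ in Hilbert--Schmidt norm, while bounding the remaining factors in operator norm. Those operator norms are at most constants, since on the truncated process $\|\sum_{j<i}P_j\|<\kappa$ and $\|Y_i\|_{\mathrm{op}}\lesssim d\|P_i\|$ only through the past-bounded sums. Induction on $\lfloor\log_2s\rfloor$ gives $\mathcal O(n^{j+1}/d^{j})$ for $2^j\leq s<2^{j+1}$.

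The leading blocks $r\leq\ell\leq2^h$ then contribute $\mathcal O_h(T^{(h+1)/2}/d^{h/2})$. The blocks $2^h<\ell\leq\ell_0$ carry additional factors of $T/d=o(1)$. Blocks with $\ell>\ell_0$ are handled by a deterministic bound $(C\kappa)^\ell$ on $G$, which is summable against $\alpha^\ell$ for small $\alpha$. For $G^c$, apply \Cref{pre:matrix-chernoff} with $\mu=2/d$ and $\kappa=2(h+1)$ to get $d(2eT/(\kappa d))^\kappa=o(1)$. Integrating over $\theta\in[0,\pi/r]$ completes the argument.

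The main obstacle is the chain estimate for adaptive strategies. Conditioning on the latent vectors must keep the $Y_i$ centered while still giving uniform operator-norm control, and the posterior covariance bound of \Cref{pre:bras-lieb} must hold under adaptive conditioning with the cutoff in place. The first approach to try is to treat the latent Gaussians as part of $\mathcal F_0$ and carry out all martingale estimates conditionally on them, so that the posterior bound is needed only for the concentration step on $G^c$.
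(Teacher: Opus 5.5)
Your architecture is the paper's: the same cosine path with $r=2^{h-1}+1$, the cutoff-tilted Gaussian prior, the split of total variation at $G$, the cumulant expansion of $\partial_\theta\log p_\theta$ into cycles, coloring-based martingale chain bounds, and Brascamp--Lieb plus matrix Chernoff for $G^c$. However, the sum over the cycle degree does not close as you state it.

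\textbf{The degree tail.} The pointwise bound you can prove for a long cycle is not a dimension-free $(C\kappa)^\ell$. The finite-difference extraction of the top coefficient gives $\|\mathcal A_{\ell,T}\|_{\mathrm{op}}\le(6K)^\ell$, but $\mathcal C_{\ell,T}$ is a trace over $\mathbb C^d$, so you only get $|\mathcal C_{\ell,T}|\le d(6K)^\ell$.
\begin{itemize}
\item The long tail is then of order $d\,(12K\alpha/r)^{\ell_0}$. This is small only if $\ell_0\gtrsim\log d$, so the intermediate range $2^h<\ell\le\ell_0$ is not a fixed finite set of degrees.
\item In that range your $L^2$ cycle bound carries the superexponential prefactor $e^{A\ell\log(\ell+1)}$, coming from the $s^{2s}$ loss in the coloring identity. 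Neither $\alpha^\ell$ nor a per-degree gain of $T/d$ absorbs it once $\ell\sim\log d$. If $\ell_0$ is a large multiple of $\log d$, the intermediate sum in fact diverges.
\item The missing step is the paper's balancing. Relative to the leading block, the gain is $(T/d)^{(\lceil\log_2\ell\rceil-h)/2}\le\exp[-\log\ell\,\log d/(4(h+1)\log2)]$ for $\ell\ge2^{2h+2}$, and this improves with $\log\ell$. Taking $\ell_0=\lfloor\zeta\log d\rfloor$ with $\zeta=1/(32A(h+1)\log2)$ makes the whole intermediate sum $o(1)$.
\item Only afterwards is $\alpha$ fixed, as $\alpha\le re^{-4/\zeta}/(12K)$, so that the deterministic tail beyond $\zeta\log d$ is $\mathcal O(d^{-3})$. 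The order of choices matters: $K$ and $h$ determine $\zeta$, which determines $\alpha$.
\end{itemize}

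\textbf{Smaller corrections.}
\begin{itemize}
\item \emph{Coloring.} The grouping must be an arbitrary random coloring of rounds, not consecutive blocks. Averaging over consecutive cuts weights a distinct tuple by how often cuts separate its entries, which depends on the gaps. For $s=2$, $N=3$, the pair $\{1,3\}$ is separated by both cuts but $\{1,2\}$ and $\{2,3\}$ by only one. Independent uniform colors give the exact identity $\mathcal A_{s,N}=s^s\mathbb E_{\mathrm{colors}}[B_{1,N}\cdots B_{s,N}]$.
\item \emph{Operator-norm input.} Because color classes are arbitrary subsets, you need a bound on every subset sum: $\|\sum_{i\in J}Y_i\|_{\mathrm{op}}\le2(\kappa+1)$. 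This follows from $0\preceq\sum_{i\in J}(a_i/f_i)P_i\preceq2\widetilde S_T$ and subtracting the average eigenvalue. It is not a per-matrix bound like your $\|Y_i\|_{\mathrm{op}}\lesssim d\|P_i\|$.
\item \emph{Source of the one-step bound.} With the latent vectors in $\mathcal F_0$, as your final paragraph proposes and the paper does, $\mathbb E[Y_n^2\mid\mathcal F_{n-1}]\preceq2I/d$ follows from the physical floor $f_y\ge1/2$ alone, not from the posterior bound.
\item \emph{Where the posterior bound is needed.} It is needed only in the outcome filtration for $P_\theta(G^c)$. Conditionally on the latents, the only uniform bound is $\mathbb E[P_i\mid g,\text{past}]\preceq\|\rho_\theta\|_{\mathrm{op}}I$, which is of order $\alpha$ rather than $2/d$.
\item \emph{The obstacle you flag.} It is resolved by a direct strong-convexity estimate. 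Using $f_i\ge1/2$ and $X_i\preceq P_i$, the quadratic remainder of $\log f_i$ is at most $2\alpha\sum_j\lambda_ju_j^\dagger P_iu_j$. So on histories with $S_n\preceq\kappa I$, the negative log-posterior is $(1-2\alpha\kappa)$-strongly convex. This is why one needs $\alpha\le1/(8\kappa)$, and why the Chernoff step is run on the truncated summands $a_iP_i$.
\end{itemize}
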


Fix $t$ throughout this section.
For the adaptive lower bound, use the default parameters
\begin{equation}
\label{lo:parameters}
    h=\lceil\log_2t\rceil,
    \qquad r=2^{h-1}+1,
    \qquad \kappa=2(h+1).
\end{equation}
In particular, $r\leq t\leq2r-2$.
The likelihood and record bounds below allow any fixed $r$ and $\kappa$; \Cref{sec:proof-nonadaptive} instead uses $r=t$ and $\kappa=2t$.
Constants may depend on $t$, but not on the dimension, strategy, or interpolation parameter.
Unsubscripted logarithms are natural.

\subsection{Hard ensembles and the testing reduction}
\label{lo:spectra-subsection}
\label{lo:priors-subsection}

Write $p_k(\lambda)=\sum_j\lambda_j^k$.
Matching these moments along the entire path eliminates their contributions to the likelihood derivative.

\begin{lemma}[Moment-matching spectral path]
\label[lemma]{lo:spectral-path}
Let $r\geq2$ be fixed and assume $r\leq t\leq2r-2$.
For $0\leq\theta\leq\pi/r$, define
\begin{equation}
\label{lo:lambda}
    \lambda_j(\theta)
    =\frac{1+\cos(2\pi j/r+\theta)}r,
    \qquad 0\leq j<r.
\end{equation}
These vectors are probability vectors, and $p_k(\lambda(\theta))$ is constant in $\theta$ for $1\leq k\leq r-1$.
Their endpoint $t$th moments differ by
\begin{equation}
\label{lo:gap}
    \Delta_t
    :=p_t(\lambda(0))-p_t(\lambda(\pi/r))
    =2^{2-t}r^{1-t}\binom{2t}{t-r}>0.
\end{equation}
Moreover,
\begin{equation}
\label{lo:spectral-derivative}
    0\leq\lambda_j\leq2/r,
    \qquad \sum_j|\lambda_j'|\leq1,
    \qquad \frac{|p_\ell'(\lambda(\theta))|}{\ell}
    \leq(2/r)^{\ell-1}
    \quad(\ell\geq1).
\end{equation}
\end{lemma}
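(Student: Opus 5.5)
The plan is to expand powers of cosines through exponentials and then use the orthogonality of the $r$th roots of unity. First, the probability-vector claims. Each $\lambda_j(\theta)\geq0$, since $1+\cos\geq0$, and each is at most $2/r$. The sum $\sum_{j<r}\cos(2\pi j/r+\theta)$ is the real part of $e^{i\theta}\sum_j\omega^j$ with $\omega=e^{2\pi i/r}$. This sum vanishes for $r\geq2$, so $\sum_j\lambda_j=1$.

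For the moments, write $1+\cos\phi=2\cos^2(\phi/2)$, or equivalently
\[
    (1+\cos\phi)^k=2^{-k}\bigl(e^{i\phi/2}+e^{-i\phi/2}\bigr)^{2k}
    =2^{-k}\sum_{a=0}^{2k}\binom{2k}{a}e^{i(a-k)\phi}.
\]
Setting $\phi=2\pi j/r+\theta$ and summing over $j$ annihilates every frequency $a-k$ that is not a multiple of $r$. What remains is
\[
    p_k(\lambda(\theta))=r^{1-k}2^{-k}\sum_{\substack{|s|\leq k\\ r\mid s}}\binom{2k}{k+s}e^{is\theta}.
\]
For $k\leq r-1$ only $s=0$ survives, so $p_k$ is constant. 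For $r\leq t\leq2r-2<2r$, only $s\in\{0,\pm r\}$ survive. This gives $p_t(\lambda(\theta))=r^{1-t}2^{-t}\bigl[\binom{2t}{t}+2\binom{2t}{t-r}\cos(r\theta)\bigr]$. Evaluating at $\theta=0$ and $\theta=\pi/r$, where $\cos(r\theta)=\pm1$, gives $\Delta_t=r^{1-t}2^{2-t}\binom{2t}{t-r}$. This is positive because $0\leq t-r\leq 2t$.

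For the derivative bounds, $\lambda_j'=-\sin(2\pi j/r+\theta)/r$, so $\sum_j|\lambda_j'|\leq r\cdot(1/r)=1$. Then
\[
    \frac{p_\ell'}{\ell}=\sum_j\lambda_j^{\ell-1}\lambda_j',
\]
and bounding $\lambda_j^{\ell-1}\leq(2/r)^{\ell-1}$ gives $|p_\ell'|/\ell\leq(2/r)^{\ell-1}\sum_j|\lambda_j'|\leq(2/r)^{\ell-1}$. The case $\ell=1$ is trivial since $p_1\equiv1$.

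The only delicate step is the Fourier bookkeeping. One must check that the half-angle expansion produces integer frequencies $s=a-k$ only, so the phase $e^{is\cdot2\pi j/r}$ is a genuine root of unity. One must also check that the condition $t\leq2r-2$ excludes the frequencies $\pm2r$. Once those two points are verified, every remaining step is a one-line estimate.
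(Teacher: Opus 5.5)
Your proof is correct and follows essentially the same route as the paper. Both arguments use the half-angle binomial expansion of $(1+\cos u)^k$, root-of-unity filtering that leaves only the frequencies $0,\pm r$ when $r\le t\le 2r-2$, endpoint evaluation of $\cos(r\theta)=\pm1$, and the direct estimate $|p_\ell'|\le\ell\sum_j\lambda_j^{\ell-1}|\lambda_j'|\le\ell(2/r)^{\ell-1}$, and all your constants check out.
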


\begin{proof}
Nonnegativity is immediate, and summing the equally spaced phases gives $\sum_j\lambda_j=1$.
The binomial identity
\[
    (1+\cos u)^k
    =2^{-k}\sum_{a=-k}^{k}\binom{2k}{k-a}e^{iau}
\]
follows by writing $1+\cos u=(e^{iu/2}+e^{-iu/2})^2/2$.
Summing over the $r$ phases annihilates every frequency not divisible by $r$.
For $k<r$, only the zero frequency remains, proving moment matching through order $r-1$.
For $r\leq t\leq2r-2$, only the frequencies $0,r,-r$ remain, so
\[
    p_t(\lambda(\theta))
    =2^{-t}r^{1-t}
      \left[\binom{2t}{t}
      +2\binom{2t}{t-r}\cos(r\theta)\right].
\]
Evaluating at the endpoints proves \eqref{lo:gap}.
Finally, $|\lambda_j'|\leq1/r$, and
\[
    |p_\ell'|
    \leq\ell\sum_j\lambda_j^{\ell-1}|\lambda_j'|
    \leq\ell(2/r)^{\ell-1}.
\]
\end{proof}

For $t=3$ or $4$, the same path has $r=3$ and endpoints $(2/3,1/6,1/6)$ and $(1/2,0,1/2)$, up to permutation.
The purity equals $1/2$ throughout the path, whereas $\Delta_3=1/18$ and $\Delta_4=2/27$.
Thus these two orders can be treated using the same pair of hard ensembles.

\paragraph{Physical Gaussian ensembles.}
Let $g_0,\ldots,g_{r-1}$ be independent standard complex Gaussian vectors in $\mathbb C^d$.
We write $\mathbb E_G$ for expectation under this product Gaussian law.
Choose a positive constant $\alpha\leq1/100$, to be made smaller later, and define
\begin{equation}
\label{lo:gaussian-state}
\begin{split}
    R_\theta(g)&=\frac\alpha d\sum_j\lambda_j(\theta)\|g_j\|^2,\\
    \rho_\theta(g)&=
    \frac{1-R_\theta(g)}d I
    +\frac\alpha d\sum_j\lambda_j(\theta)g_jg_j^\dagger.
\end{split}
\end{equation}
To ensure positivity, we truncate the Gaussian prior with a continuously differentiable cutoff.
Define
\begin{equation}
\label{lo:barrier}
    \phi(s)=
    \begin{cases}
       0,&s\leq1/3,\\
       (s-1/3)^2/(1/2-s),&1/3<s<1/2,\\
       +\infty,&s\geq1/2,
    \end{cases}
    \qquad \chi(s)=e^{-\phi(s)}.
\end{equation}
The prior at parameter $\theta$ has density $\chi(R_\theta)/B_\theta$ with respect to the Gaussian law, where $B_\theta=\mathbb E_G\chi(R_\theta)$.
The function $\phi$ is convex, nondecreasing, and $C^1$ on $(-\infty,1/2)$, though only piecewise $C^2$ at $1/3$.
Both $\chi$ and $\chi'$ are bounded and vanish at $1/2$.
Every state in this prior satisfies
\begin{equation}
\label{lo:physical-floor}
    \rho_\theta(g)\succeq\frac{I}{2d}.
\end{equation}
The prior is unitarily invariant.

\begin{lemma}[Concentration of the hard ensembles]
\label[lemma]{lo:prior-concentration}
Uniformly over the interpolation parameter, $B_\theta=1-\mathcal{O}(e^{-c d})$ for an absolute constant $c>0$.
For every fixed integer $k\geq2$ and every fixed parameter $\theta$, the random state drawn from this prior satisfies
\begin{equation}
\label{lo:moment-limit}
    \operatorname{Tr}(\rho_\theta^k)
    \longrightarrow\alpha^k p_k(\lambda(\theta))
    \quad\text{in probability as }d\longrightarrow\infty.
\end{equation}
\end{lemma}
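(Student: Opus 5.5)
The plan has two parts: concentration of $R_\theta$, which gives the normalization, and a Gram-matrix reduction, which gives the moment limit.

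\emph{Normalization.} Since $\chi\equiv1$ on $(-\infty,1/3]$ and $0\le\chi\le1$, we have
\[
1-\Pr_G(R_\theta>1/3)\le B_\theta\le1 .
\]
By \eqref{lo:spectral-derivative}, $\lambda_j\le2/r$, so
\[
R_\theta\le\frac{2\alpha}{rd}\,U,\qquad U=\sum_j\|g_j\|^2,
\]
and $U$ is Gamma distributed with shape $rd$ by \Cref{pre:gaussian}. With $\alpha\le1/100$, the event $R_\theta>1/3$ forces $U\ge rd\cdot 50/3$, far above its mean $rd$. A Chernoff bound for the Gamma law then gives probability at most $e^{-c d}$, uniformly in $\theta$. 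Concretely, apply \Cref{pre:gaussian} with $M=sI$ and $s=1/2$ and use Markov's inequality. This proves $B_\theta=1-\mathcal{O}(e^{-cd})$.

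\emph{Moment limit.} Because the prior density differs from the Gaussian law only on an event of probability $\mathcal{O}(e^{-cd})$, and total variation between the truncated prior and the Gaussian law is $\mathcal{O}(e^{-cd})$, convergence in probability may be checked under the untruncated Gaussian law. There we proceed in three steps.

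\begin{enumerate}
\item Write $\rho_\theta=\frac{1-R_\theta}{d}I+A$ with $A=\frac{\alpha}{d}GDG^\dagger$, where $G=[g_0,\dots,g_{r-1}]$ and $D=\operatorname{diag}(\lambda_j(\theta))$.
\item The nonzero spectrum of $A$ equals that of $\alpha D^{1/2}(G^\dagger G/d)D^{1/2}$, an $r\times r$ matrix. By \Cref{pre:gaussian}, $G^\dagger G/d\to I_r$ in probability. Continuity of $X\mapsto\operatorname{Tr}X^k$ on fixed-size matrices then gives $\operatorname{Tr}A^k\to\alpha^kp_k(\lambda(\theta))$. Likewise $R_\theta\to\alpha$ in probability, since $\|g_j\|^2/d\to1$.
\item Expand $\operatorname{Tr}(cI/d+A)^k$ binomially, with $c=1-R_\theta\in[0,1]$. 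A term with $i\ge1$ factors of $cI/d$ and $k-i$ factors of $A$ contributes $(c/d)^i\operatorname{Tr}A^{k-i}$ for $k-i\ge1$, and contributes $d(c/d)^k$ for $i=k$. Since $\operatorname{Tr}A^{j}\le(\operatorname{Tr}A)^j=\mathcal{O}_P(1)$ and $k\ge2$, every cross term is $\mathcal{O}_P(1/d)$, and the pure identity term is $c^kd^{1-k}=\mathcal{O}(1/d)$. Hence $\operatorname{Tr}\rho_\theta^k\to\alpha^kp_k(\lambda(\theta))$ in probability.
\end{enumerate}

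The only mildly delicate point is transferring convergence from the Gaussian law to the truncated prior. I would handle it by bounding the probability of any event under the prior by its Gaussian probability divided by $B_\theta\ge1/2$, which needs nothing beyond the normalization bound above. I expect no real obstacle beyond the Chernoff tail computation.
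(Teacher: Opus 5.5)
Your proposal is correct and follows essentially the paper's route: you bound $\Pr_G(R_\theta>1/3)$ by a Gaussian exponential moment plus Markov (bounding $R_\theta\le 2\alpha U/(rd)$ first, where the paper computes $\mathbb E_G e^{dR_\theta}$ directly), then use the $r\times r$ Gram-matrix reduction with an $\mathcal{O}(1/d)$ identity shift, transferring to the tilted prior via the density bound $\chi(R_\theta)/B_\theta\le 2$ rather than the paper's total-variation comparison. The only slip is that under the untruncated Gaussian law $1-R_\theta$ need not lie in $[0,1]$. It does lie in $[2/3,1]$ off an exponentially rare event, and is $\mathcal{O}_P(1)$ in any case, which is all your binomial estimate needs.
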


\begin{proof}
The Gaussian exponential-moment formula in \Cref{pre:gaussian} gives
\[
    \mathbb E_G e^{dR_\theta}
    =\prod_j(1-\alpha\lambda_j)^{-d}
    \leq\exp\!\left(\frac{\alpha d}{1-\alpha}\right).
\]
Here $-\log(1-x)\leq x/(1-\alpha)$ for $0\leq x\leq\alpha$, and $\sum_j\lambda_j=1$.
Markov's inequality therefore bounds $\mathbb P_G(R_\theta>1/3)$ by $e^{-c d}$, uniformly in $\theta$.
Since $\chi=1$ on $(-\infty,1/3]$ and $0\leq\chi\leq1$, this proves the assertion about $B_\theta$.
The total variation distance between the tilted and Gaussian laws is at most $(1-B_\theta)/B_\theta=\mathcal{O}(e^{-c d})$, by integrating $|\chi(R_\theta)/B_\theta-1|$.

Since $r$ is fixed, the normalized Gram matrix $(g_j^\dagger g_k/d)_{j,k}$ converges in probability to $I_r$.
The low-rank term in \eqref{lo:gaussian-state} and the following $r\times r$ matrix have the same nonzero eigenvalues:
\[
    \left(\frac{\alpha\sqrt{\lambda_j(\theta)\lambda_k(\theta)}}d
                  g_j^\dagger g_k\right)_{j,k}.
\]
The eigenvalues of this matrix converge to $\alpha\lambda_j(\theta)$.
The scalar shift in \eqref{lo:gaussian-state} is $\mathcal{O}(1/d)$ on the tilted support, and the remaining eigenvalues contribute at most $d^{1-k}$.
The total variation comparison transfers Gaussian convergence to the tilted prior, proving \eqref{lo:moment-limit}.
\end{proof}

Draw one state from the prior and give the protocol copies of that same state, without resampling the Gaussian vectors.
Let $P_\theta^{(T)}$ be its transcript law, including any classical random seed, as in the learning model of \Cref{pre:learning-model}.

Choose
\begin{equation}
\label{lo:epsilon}
    \epsilon_t=\frac{\alpha^t\Delta_t}{8}.
\end{equation}
The endpoint limits differ by $8\epsilon_t$.
On an ensemble event of probability $1-o(1)$, any successful estimate of accuracy $\epsilon_t$ is within $2\epsilon_t$ of its endpoint limit.
Thresholding halfway between the limits therefore distinguishes the priors with success at least $2/3-o(1)$.
Le Cam's testing identity~\cite{LeCam1986}, in the form of \Cref{pre:testing}, therefore requires
\begin{equation}
\label{lo:necessary-tv}
    \operatorname{TV}(P_0^{(T)},P_{\pi/r}^{(T)})
    \geq\frac13-o(1).
\end{equation}
We will contradict this inequality when $T$ is a sufficiently small constant multiple of $d^{h/(h+1)}$.

\subsection{Original record probabilities and the two-part bound}
\label{lo:records-subsection}

As in the learning-tree representation of~\cite{Chen2022QuantumMemory}, refine every POVM into rank-one effects $w_yP_y$, where $P_y=v_yv_y^\dagger$, $\|v_y\|=1$, and $\sum_yw_yP_y=I$.
Retaining refinement labels can only improve distinguishability, so a lower bound for the refined transcript suffices.
Sums become integrals for continuous outcomes.

Once a history $H$ is frozen, its observed projectors $P_i$ and weights $w_i$ are fixed, even though the protocol chose them adaptively.
For each round, put
\begin{equation}
\label{lo:likelihood-factors}
    X_i=P_i-I/d,
    \qquad
    f_i=1+\alpha\sum_j\lambda_j g_j^\dagger X_i g_j
       =d\operatorname{Tr}(P_i\rho_\theta).
\end{equation}
On the support of the tilted prior,
\begin{equation}
\label{lo:f-bounds}
    1/2\leq f_i\leq1+d/2.
\end{equation}
The factors $w_i/d$ are independent of $\theta$; the protocol run on $I/d$ therefore gives a common probability measure $\nu$ on records, including the random seed.

Define the physical and auxiliary normalizers by
\begin{equation}
\label{lo:normalizers}
    Z_\theta(H)
    =\mathbb E_G\!\left[\chi(R_\theta)\prod_{i\in H}f_i\right],
    \qquad
    \widetilde Z_\theta(H)=\mathbb E_G\prod_{i\in H}f_i.
\end{equation}
Only $Z_\theta$ describes a physical prior; $\widetilde Z_\theta$ is an auxiliary polynomial expectation.
The notation $H\setminus J$ deletes likelihood factors while keeping the original frozen projectors, without rerunning the protocol.

Thus the density of the original record law relative to $\nu$ is
\begin{equation}
\label{lo:original-density}
    p_\theta(H)=\frac{Z_\theta(H)}{B_\theta}.
\end{equation}
For discrete records, $p_\theta(H)=P_\theta(H)/\nu(H)$ whenever $\nu(H)>0$.
For a fixed state the record probability is a product of single-copy probabilities, and its derivative differentiates one factor at a time.
The average over the unknown state is taken after forming this product, since all copies come from the same sampled state.
We compare the endpoint densities through the identity
\begin{equation}
\label{lo:derivative-target}
    p_{\pi/r}(H)-p_0(H)
       =\int_0^{\pi/r}\partial_\theta p_\theta(H)\,d\theta.
\end{equation}
Below we use this identity on good records and bound the remaining contribution to total variation directly by its probability.
For fixed $d,T$, the bounds $|f_i|,|\partial_\theta f_i|\leq1+\alpha\sum_j\|g_j\|^2$, the bounded cutoff and its derivative, and Gaussian moments justify differentiation under the integral.
They also give continuous differentiability in $L^1(\nu)$, as required by \Cref{pre:tv-path}.
The passage from local derivatives to endpoint distinguishability is the same smooth-path principle underlying Fisher-information bounds such as~\cite[App.~G]{Zhou2026AnsatzFreeHamiltonian}; the estimates below exploit the specific moment-matching structure of our ensembles.

\paragraph{Splitting the measurement records.}
All measurements below belong to the original experiment, which runs for all $T$ rounds.
For its observed projectors, define
\begin{equation}
\label{lo:good-event}
    S_n=\sum_{i=1}^nP_i,
    \qquad S_0=0,
    \qquad
    G_\kappa
    =\{H:\|S_T\|_{\mathrm{op}}<\kappa\},
\end{equation}
where $\kappa\geq1$ is a fixed threshold to be chosen later.
To analyze the records, introduce the following indicators, each determined by the preceding record, and auxiliary matrix sums
\begin{equation}
\label{lo:retained-sum}
    a_i=\boldsymbol 1_{\{\|S_{i-1}\|_{\mathrm{op}}<\kappa\}},
    \qquad
    \widetilde S_n=\sum_{i=1}^na_iP_i.
\end{equation}
Because $S_n$ is increasing in the positive-semidefinite order, the indicators remain zero after the threshold is reached.
The auxiliary sum retains the threshold-crossing observation and therefore satisfies $\|\widetilde S_n\|_{\mathrm{op}}\leq\kappa+1$ for every record.
On $G_\kappa$, all indicators equal one, and $\widetilde S_n=S_n$.
These indicators are only analytical tools: the actual protocol, its later outcomes, and the likelihood factors $X_i=P_i-I/d$ and $f_i$ are unchanged.

\begin{lemma}[Posterior mean and probability of exceptional records]
\label[lemma]{lo:posterior-records}
If $\alpha\leq1/(8\kappa)$, then every history $H_n$ satisfying $\|S_n\|_{\mathrm{op}}<\kappa$ obeys
\begin{equation}
\label{lo:posterior-mean}
    \frac{I}{2d}\preceq
    \mathbb E_\theta[\rho_\theta\mid H_n]
    \preceq\frac{2I}{d}.
\end{equation}
For every $T\leq d$ with $2T<\kappa d$, uniformly over the interpolation parameter and the adaptive measurement strategy,
\begin{equation}
\label{lo:rare-records}
    P_\theta^{(T)}(G_\kappa^c)
    \leq d\left(\frac{2eT}{\kappa d}\right)^\kappa.
\end{equation}
\end{lemma}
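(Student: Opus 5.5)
The plan has two parts: a posterior computation for \eqref{lo:posterior-mean}, and an application of \Cref{pre:matrix-chernoff} to the retained sum $\widetilde S_T$ for \eqref{lo:rare-records}.

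\emph{Posterior mean.} Conditional on a history $H_n$ with frozen projectors $P_1,\ldots,P_n$, the posterior of the Gaussian vectors $g=(g_0,\ldots,g_{r-1})$ has density proportional to
\[
\exp\Bigl(-\sum_j\|g_j\|^2-\phi(R_\theta)+\sum_{i\le n}\log f_i\Bigr)
\]
with respect to Lebesgue measure on $\mathbb C^{rd}\cong\mathbb R^{2rd}$.

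The cutoff term $\phi(R_\theta)$ is convex in $g$, since $R_\theta$ is a convex quadratic and $\phi$ is convex and nondecreasing. The likelihood terms are the dangerous part: $\log f_i$ is concave in $f_i$, and $f_i$ is a quadratic form $1+\alpha\sum_j\lambda_j g_j^\dagger X_ig_j$ of indefinite sign. I will bound their Hessian directly:
\[
-\nabla^2\log f_i=\frac{(\nabla f_i)(\nabla f_i)^{\mathsf T}}{f_i^2}-\frac{\nabla^2 f_i}{f_i}.
\]
The first term is positive semidefinite and only helps convexity. The second is at most $2\alpha\lambda_j\|X_i\|/f_i$ in operator norm, and summing over $i$ naively loses a factor of $n$.

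The better route is to write $\sum_i\log f_i\le\sum_i(f_i-1)$, but this is only an upper bound, not a Hessian bound. So instead I will exploit $f_i\ge1/2$ and the bound $\|S_n\|<\kappa$. On the nonpositive side,
\[
-\frac{\nabla^2 f_i}{f_i}\succeq-2\alpha\,(P_i\otimes\Lambda)
\]
in block form, up to the $I/d$ terms, which have the favorable sign: $-X_i=-P_i+I/d$ contributes $+I/d$ to convexity. Summing over $i$ gives at worst $-2\alpha\kappa$ times the identity blocks.

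The total potential therefore has Hessian $\succeq 2(1-2\alpha\kappa)I\succeq I$ when $\alpha\le1/(8\kappa)$, so it is strongly log-concave with $a=1$ on the convex support $\{R_\theta<1/2\}$. Since the posterior is invariant under the phases $g_j\mapsto e^{i\varphi_j}g_j$, the mean of $g_j$ is zero. \Cref{pre:bras-lieb} then gives $\mathbb E[g_jg_j^\dagger\mid H_n]\preceq 2I$. Plugging this into \eqref{lo:gaussian-state}:
\[
\mathbb E[\rho_\theta\mid H_n]\preceq\frac{I}{d}+\frac{\alpha}{d}\sum_j\lambda_j\,2I\preceq\frac{2I}{d}.
\]
The lower bound $\frac{I}{2d}\preceq\mathbb E[\rho_\theta\mid H_n]$ follows from \eqref{lo:physical-floor}, since every state in the prior support satisfies it.

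\emph{Rare records.} Take the filtration generated by the seed and the outcomes, and set $Z_i=a_iP_i$, so that $\widetilde S_T=\sum_iZ_i$. Given the past, the next outcome of a rank-one refined POVM has density $w_y\operatorname{Tr}(P_y\bar\rho)$, where $\bar\rho=\mathbb E[\rho_\theta\mid H_{i-1}]$. Hence
\[
\mathbb E[P_i\mid\mathcal F_{i-1}]=\sum_y w_yP_y\operatorname{Tr}(P_y\bar\rho)\preceq\|\bar\rho\|\sum_y w_yP_y=\|\bar\rho\|\,I.
\]
When $a_i=1$ we have $\|S_{i-1}\|<\kappa$, so the posterior bound gives $\|\bar\rho\|\le2/d$; when $a_i=0$, $Z_i=0$. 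Thus $\mathbb E[Z_i\mid\mathcal F_{i-1}]\preceq\frac{2}{d}I$ and $0\preceq Z_i\preceq I$.

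On $G_\kappa^c$ there is a first index with $\|S_i\|\ge\kappa$, and every earlier indicator is one, so $\|\widetilde S_T\|\ge\kappa$. Applying \Cref{pre:matrix-chernoff} with $\mu=2/d$ and $L=\kappa>2T/d$ gives \eqref{lo:rare-records}.

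The main obstacle is the Hessian step. The concavity of $\log f_i$ and the possibly negative sign of $g^\dagger X_ig$ must be controlled using only $\|S_n\|<\kappa$ and $f_i\ge1/2$; if the naive bound fails, I would carry the rank-one gradient terms along rather than discarding them.
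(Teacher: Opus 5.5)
Your proposal is correct and follows the paper's proof. The paper uses the same strong log-concavity estimate ($X_i\preceq P_i$, $f_i\ge 1/2$, $S_n\preceq\kappa I$, giving curvature at least $1-2\alpha\kappa$ times the Gaussian one), phase invariance with \Cref{pre:bras-lieb} for the posterior mean, and \Cref{pre:matrix-chernoff} applied to $\widetilde S_T$ with $\mu=2/d$. The only difference is cosmetic: the paper writes the convexity step as a first-order inequality via $\log z\le z-1$ rather than as a Hessian bound, and your closing hedge is unnecessary because the Hessian bound you wrote already closes the argument.
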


\begin{proof}
Fix a history $H_n$ with $\|S_n\|_{\mathrm{op}}<\kappa$ and regard all its projectors as fixed.
By Bayes' rule, its posterior density with respect to the product Gaussian law is $\chi(R_\theta)\prod_{i=1}^nf_i/Z_\theta(H_n)$.
The prior normalizer $B_\theta$ and the factors $w_i/d$ cancel in this conditional density.
Writing the Gaussian coordinates as one real vector $x$, the negative log posterior is therefore, up to an additive constant,
\[
    V(x)=\|x\|^2+\phi(R_\theta(x))
                      -\sum_{i=1}^n\log f_i(x).
\]

We establish its strong convexity directly from the quadratic likelihood factors.
For $x$ and $x+u$ in the convex support $\{x:R_\theta(x)<1/2\}$, write $u=(u_0,\ldots,u_{r-1})$ in complex coordinates.
The exact quadratic expansion and the inequality $\log z\leq z-1$ give
\begin{align*}
    f_i(x+u)
       &=f_i(x)+\mathrm D f_i(x)[u]
          +\alpha\sum_j\lambda_j u_j^\dagger X_i u_j,\\
    \log f_i(x+u)
       &\leq\log f_i(x)+\mathrm D\log f_i(x)[u]
          +2\alpha\sum_j\lambda_j u_j^\dagger P_i u_j.
\end{align*}
Here $\mathrm D$ denotes the real directional derivative.
For the second inequality, first use $X_i\preceq P_i$ and then $f_i(x)\geq1/2$; the latter comparison is applied only to the nonnegative quadratic form involving $P_i$.
Summing over $i$, and using $S_n\preceq\kappa I$ and $\lambda_j\leq1$, bounds the total quadratic remainder by $2\alpha\kappa\|u\|^2$.
The barrier $\phi\circ R_\theta$ is convex because $R_\theta$ is a nonnegative quadratic form and $\phi$ is convex and nondecreasing.
Consequently,
\begin{equation}
\label{lo:posterior-strong-convexity}
    V(x+u)\geq V(x)+\mathrm D V(x)[u]
             +\left(1-2\alpha\kappa\right)\|u\|^2
    \geq V(x)+\mathrm D V(x)[u]+\frac12\|u\|^2.
\end{equation}
The stated bound on $\alpha$ implies the last inequality.
Thus $V(x)-\|x\|^2/2$ is convex on this support.
The barrier diverges at the boundary of this convex support, so extension by $+\infty$ preserves convexity on the full real space.

The posterior is invariant under multiplying any individual $g_j$ by a complex phase, so $\mathbb E_\theta[g_j\mid H_n]=0$; this does not assert independence of the posterior vectors.
We use the covariance bound from the Brascamp--Lieb inequality~\cite{Carlen2013BrascampLieb}, stated in \Cref{pre:bras-lieb}.
Applying it separately to the real and imaginary parts of $v^\dagger g_j$ yields
\[
    \mathbb E_\theta[|v^\dagger g_j|^2\mid H_n]
       \leq2\|v\|^2,
    \qquad
    \mathbb E_\theta[g_jg_j^\dagger\mid H_n]\preceq2I.
\]
Dropping the negative scalar term in \eqref{lo:gaussian-state} gives
\[
    \mathbb E_\theta[\rho_\theta\mid H_n]
       \preceq\frac{1+2\alpha}{d}I
       \preceq\frac{2I}{d}.
\]
The lower bound follows pointwise from \eqref{lo:physical-floor}.

It remains to control the probability of exceptional records under the original experiment.
If $a_i=1$, the posterior bound implies that the next outcome $y$ has conditional probability at most $2w_y/d$.
POVM completeness therefore gives
\[
    \mathbb E_\theta[a_iP_i\mid H_{i-1}]
       \preceq\frac2d\sum_yw_yP_y=\frac{2I}{d}.
\]
If $a_i=0$, this conditional expectation is zero, so the inequality still holds without any posterior bound for that history.
We apply the positive-matrix concentration bound in \Cref{pre:matrix-chernoff}, whose proof uses the Golden--Thompson inequality~\cite{Forrester2014GoldenThompson}.
For $0\preceq a_iP_i\preceq I$, it gives
\[
    \mathbb P_\theta(\|\widetilde S_T\|_{\mathrm{op}}\geq\kappa)
       \leq d\left(\frac{2eT}{\kappa d}\right)^\kappa.
\]
The event on the left equals $G_\kappa^c$: whenever the raw sum first reaches the threshold, the corresponding observation is included in $\widetilde S_T$, and conversely $\widetilde S_T\preceq S_T$.
This proves \eqref{lo:rare-records}.
\end{proof}

\paragraph{Separating good and rare records.}
The event $G_\kappa$ depends only on the record and the fixed measurement strategy, so it is the same event for every $\theta$.
We split the endpoint total variation distance using this event.

\begin{lemma}[Total variation from good records]
\label[lemma]{lo:tv-split}
Under the assumptions of \Cref{lo:posterior-records},
\begin{equation}
\label{lo:tv-good-rare}
    \operatorname{TV}(P_0^{(T)},P_{\pi/r}^{(T)})
    \leq\frac12\int_0^{\pi/r}
          \int_{G_\kappa}|\partial_\theta p_\theta(H)|
                    \,d\nu(H)\,d\theta
       +d\left(\frac{2eT}{\kappa d}\right)^\kappa.
\end{equation}
\end{lemma}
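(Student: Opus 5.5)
The plan is to split the total variation integral over the common event $G_\kappa$ and its complement, which is the same event for every $\theta$. Writing $\operatorname{TV}(P_0^{(T)},P_{\pi/r}^{(T)})=\tfrac12\int|p_{\pi/r}-p_0|\,d\nu$, decompose it as
\[
 \frac12\int_{G_\kappa}|p_{\pi/r}(H)-p_0(H)|\,d\nu(H)
 +\frac12\int_{G_\kappa^c}|p_{\pi/r}(H)-p_0(H)|\,d\nu(H).
\]

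\textbf{Good records.} On $G_\kappa$, insert the fundamental-theorem identity \eqref{lo:derivative-target}. Absolute continuity in $\theta$ and the $L^1(\nu)$ differentiability were justified just before the splitting paragraph, using the Gaussian moment bounds on $f_i$ and $\partial_\theta f_i$ and the boundedness of $\chi,\chi'$. Bounding $|\int_0^{\pi/r}\partial_\theta p_\theta\,d\theta|$ by $\int_0^{\pi/r}|\partial_\theta p_\theta|\,d\theta$ and applying Tonelli to the nonnegative integrand then gives exactly the first term of \eqref{lo:tv-good-rare}. This is the argument of \Cref{pre:tv-path}, restricted to the indicator of $G_\kappa$. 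The only point to check is joint measurability of $(\theta,H)\mapsto\partial_\theta p_\theta(H)\mathbf 1_{G_\kappa}(H)$, which follows because $G_\kappa$ is a measurable function of the frozen record.

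\textbf{Rare records.} On $G_\kappa^c$, use the triangle inequality $|p_{\pi/r}-p_0|\leq p_{\pi/r}+p_0$, which gives
\[
 \tfrac12\bigl[P_{\pi/r}^{(T)}(G_\kappa^c)+P_0^{(T)}(G_\kappa^c)\bigr].
\]
Applying \eqref{lo:rare-records} from \Cref{lo:posterior-records} at each endpoint, which is valid uniformly in $\theta$ under the stated hypotheses $\alpha\leq1/(8\kappa)$, $T\leq d$, and $2T<\kappa d$, bounds each probability by $d(2eT/(\kappa d))^\kappa$. Their average is the second term.

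\textbf{Main obstacle.} The only delicate point is making sure that restricting the path bound to $G_\kappa$ does not require any $\theta$-dependence of the event. It does not, since $G_\kappa$ is defined purely from the observed projectors. It is also worth noting that the full $L^1$ integrability established earlier is more than enough here.
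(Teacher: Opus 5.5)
Your proposal is correct and follows the paper's proof. Split the total variation over the $\theta$-independent event $G_\kappa$. On $G_\kappa$, apply the fundamental theorem of calculus and the triangle inequality. On $G_\kappa^c$, bound the contribution by the average of the two endpoint probabilities, each of which is controlled by \eqref{lo:rare-records}.
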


\begin{proof}
By the definition of total variation,
\[
\begin{aligned}
    \operatorname{TV}(P_0^{(T)},P_{\pi/r}^{(T)})
    ={}&\frac12\int_{G_\kappa}|p_{\pi/r}-p_0|\,d\nu
       +\frac12\int_{G_\kappa^c}|p_{\pi/r}-p_0|\,d\nu.
\end{aligned}
\]
On $G_\kappa$, the fundamental theorem of calculus and the triangle inequality give
\[
    \frac12\int_{G_\kappa}|p_{\pi/r}-p_0|\,d\nu
    \leq\frac12\int_0^{\pi/r}
          \int_{G_\kappa}|\partial_\theta p_\theta|\,d\nu\,d\theta.
\]
On $G_\kappa^c$, nonnegativity of the densities gives
\[
\begin{aligned}
    \frac12\int_{G_\kappa^c}|p_{\pi/r}-p_0|\,d\nu
    &\leq\frac12\left[
         P_0^{(T)}(G_\kappa^c)+P_{\pi/r}^{(T)}(G_\kappa^c)
         \right]\\
    &\leq d\left(\frac{2eT}{\kappa d}\right)^\kappa,
\end{aligned}
\]
where the last step uses \Cref{lo:posterior-records} at the two endpoints.
Adding the bounds proves \eqref{lo:tv-good-rare}.
\end{proof}

Thus it remains to bound the probability derivative only on $G_\kappa$.
All probabilities refer to the original experiment, which still runs for all $T$ rounds.

\subsection{Matrix martingales and cycle estimates}
\label{lo:centered-subsection}
\label{lo:cycles-subsection}

All expectations below remain under the original joint law of the latent vectors and the complete measurement record.
For fixed $\theta$, let $\mathcal F_i=\sigma(g,z,y_1,\ldots,y_i)$ contain the latent Gaussian vectors, the random seed, and the first $i$ outcomes.
The latent vectors are included only in the analysis and are not given to the learner.
The indicators $a_i$ from \eqref{lo:retained-sum} are predictable: they are determined before outcome $i$ is observed.
Define
\begin{equation}
\label{lo:Y}
    \widehat Y_i=X_i/f_i,\qquad Y_i=a_i\widehat Y_i.
\end{equation}
The hatted matrices use every observation, whereas the unhatted matrices retain observations only through the first threshold crossing.
They coincide throughout every record in $G_\kappa$.
This modification concerns only the auxiliary matrices; the original experiment still runs for all $T$ rounds.

\begin{lemma}[Martingale differences and uniform matrix bounds]
\label[lemma]{lo:Y-properties}
Uniformly over $\theta$ and all adaptive strategies, both $\widehat Y_i$ and $Y_i$ satisfy the following conditional bounds (written for $Y_i$):
\begin{align}
    \mathbb E_\theta[Y_i\mid\mathcal F_{i-1}]&=0,
       \label{lo:Y-center}\\
    \mathbb E_\theta[Y_i^2\mid\mathcal F_{i-1}]&\preceq\frac2d I,
       \label{lo:Y-matrix-variance}\\
    \mathbb E_\theta[|\operatorname{Tr}(MY_i)|^2
                        \mid\mathcal F_{i-1}]
       &\leq\frac2d\|M\|_{\mathrm{HS}}^2
       \label{lo:Y-scalar-variance}
\end{align}
for every complex matrix $M$ determined by $\mathcal F_{i-1}$.
For every copy budget $T$, the auxiliary matrices $Y_i$ additionally satisfy the deterministic bound
\begin{equation}
\label{lo:Y-subsets}
    \left\|\sum_{i\in J}Y_i\right\|_{\mathrm{op}}
    \leq K:=2(\kappa+2)
    \qquad\text{for every }J\subseteq\{1,\ldots,T\}.
\end{equation}
These constants are independent of the eventual small choice of $\alpha$.
\end{lemma}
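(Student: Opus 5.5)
The plan is to prove the four claims by conditioning on $\mathcal F_{i-1}$. Under this conditioning the latent vectors $g$, the seed and the past outcomes are fixed. The only randomness left is the outcome $y_i$, which has law $w_y\operatorname{Tr}(P_y\rho_\theta)=w_yf_y/d$, where $f_y$ is given by \eqref{lo:likelihood-factors} with $P_y$ in place of $P_i$. The multiplier $a_i$ is predictable, so it factors out of every conditional expectation. Since $a_i\in\{0,1\}$, every bound proved for $\widehat Y_i$ then transfers to $Y_i$. For $\widehat Y_i$ I would compute directly. The first identity is
\[
\mathbb E_\theta[\widehat Y_i\mid\mathcal F_{i-1}]=\sum_y\frac{w_yf_y}{d}\,\frac{P_y-I/d}{f_y}=\frac1d\sum_yw_y(P_y-I/d)=0,
\]
which uses POVM completeness and $\sum_yw_y=d$ (take the trace of $\sum_yw_yP_y=I$). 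This gives \eqref{lo:Y-center}.

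For the matrix variance \eqref{lo:Y-matrix-variance}, the same cancellation leaves
\[
\mathbb E_\theta[\widehat Y_i^2\mid\mathcal F_{i-1}]=\frac1d\sum_y\frac{w_y}{f_y}(P_y-I/d)^2 .
\]
Here I use $f_y\ge 1/2$ from \eqref{lo:f-bounds}, valid on the tilted support, together with $(P_y-I/d)^2=(1-2/d)P_y+I/d^2\preceq P_y+I/d^2$. Summing with completeness bounds the sum by $2(I+I/d)\cdot\frac1d$. This falls slightly short of $\frac2dI$, so the expansion needs care. Expanding exactly, $\sum_y w_y(P_y-I/d)^2=(1-2/d)I+I/d=(1-1/d)I$. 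I therefore apply the factor $2$ from $1/f_y\le2$ only to the positive operator $(P_y-I/d)^2$, which gives $\preceq\frac2d(1-1/d)I\preceq\frac2dI$.

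The scalar variance \eqref{lo:Y-scalar-variance} follows the same route. I get $\frac1d\sum_y\frac{w_y}{f_y}|\operatorname{Tr}(M(P_y-I/d))|^2\le\frac2d\sum_yw_y|v_y^\dagger Mv_y-\operatorname{Tr}M/d|^2$. Expanding the square, completeness gives $\sum_yw_y|v_y^\dagger Mv_y|^2-|\operatorname{Tr}M|^2/d$. Next I need $|v^\dagger Mv|^2\le v^\dagger MM^\dagger v$, which is Cauchy--Schwarz with unit $v$. With completeness this yields $\sum_yw_y|v_y^\dagger Mv_y|^2\le\operatorname{Tr}(MM^\dagger)=\|M\|_{\mathrm{HS}}^2$, and the claim follows.

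The deterministic bound \eqref{lo:Y-subsets} is where the truncation matters, and I expect it to be the main obstacle, because the $1/f_i$ weights are signed relative to the $I/d$ shift. I would split $\sum_{i\in J}Y_i=\sum_{i\in J}a_iP_i/f_i-\frac1d\sum_{i\in J}a_i I/f_i$. The first term is positive semidefinite. Since $f_i\ge1/2$ and $a_iP_i\succeq0$, it is at most $2\sum_{i\le T}a_iP_i=2\widetilde S_T$, whose norm is at most $2(\kappa+1)$ by the remark after \eqref{lo:retained-sum}. The second term is a positive multiple of the identity with norm at most $\frac2d\sum_ia_i$. To bound the number of retained rounds, I take the trace of $\widetilde S_T$: each retained round contributes trace one, so $\sum_ia_i=\operatorname{Tr}\widetilde S_T\le d\|\widetilde S_T\|_{\mathrm{op}}\le d(\kappa+1)$. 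The second term thus has norm at most $2(\kappa+1)$.

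A difference of two positive operators has norm at most the larger of the two norms. This gives $\|\sum_{i\in J}Y_i\|_{\mathrm{op}}\le2(\kappa+1)\le K$ for every $J$ and every record. No constant depends on $\alpha$; the only input is $f_i\ge1/2$, which holds pointwise on the prior support by \eqref{lo:physical-floor}. Integrability, needed for the conditional expectations, is immediate because $\|\widehat Y_i\|_{\mathrm{op}}\le2$.
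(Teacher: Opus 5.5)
Your proposal is correct and follows essentially the same route as the paper's proof. Both arguments use the same conditional computation with outcome law $w_yf_y/d$, the same step of applying $1/f_y\le 2$ only to the positive operator $(P_y-I/d)^2$, the same Cauchy--Schwarz step for the scalar variance, and the same split of $\sum_{i\in J}Y_i$ into a positive part dominated by $2\widetilde S_T$ minus a scalar shift. The one difference is minor: you bound the shift through $\operatorname{Tr}\widetilde S_T\le d\|\widetilde S_T\|_{\mathrm{op}}$, whereas the paper observes that $\operatorname{Tr}M_J/d$ is the average eigenvalue of $M_J$.
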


\begin{proof}
Conditional on the latent vectors and the past, outcome $y$ has probability $w_yf_y/d$.
First consider the original matrix $\widehat Y_i$.
POVM completeness gives
\[
    \mathbb E_\theta[\widehat Y_i\mid\mathcal F_{i-1}]
    =\frac1d\sum_yw_y(P_y-I/d)=0,
\]
where $\sum_yw_y=d$ follows by taking traces in the POVM normalization.
For the matrix second moment, $\rho_\theta\succeq I/(2d)$ implies $f_y\geq1/2$, and hence
\[
    \mathbb E_\theta[\widehat Y_i^2\mid\mathcal F_{i-1}]
    =\frac1d\sum_y\frac{w_y}{f_y}(P_y-I/d)^2
    \preceq\frac2d\sum_yw_y(P_y-I/d)^2
    =\frac2d(1-1/d)I.
\]
Indeed, $P_y^2=P_y$, $\sum_yw_yP_y=I$, and $\sum_yw_y=d$ give
$\sum_yw_y(P_y-I/d)^2=(1-2/d)I+I/d=(1-1/d)I$.
Similarly,
\[
    \mathbb E_\theta[|\operatorname{Tr}(M\widehat Y_i)|^2
                         \mid\mathcal F_{i-1}]
    \leq\frac2d\sum_yw_y
        |\operatorname{Tr}(M(P_y-I/d))|^2.
\]
The sum on the right satisfies
\begin{align*}
    \sum_yw_y|\operatorname{Tr}(M(P_y-I/d))|^2
      &=\sum_yw_y|v_y^\dagger Mv_y|^2
               -\frac{|\operatorname{Tr}M|^2}{d}\\
      &\leq\sum_yw_y v_y^\dagger M^\dagger Mv_y
       =\|M\|_{\mathrm{HS}}^2.
\end{align*}
Here $|v_y^\dagger Mv_y|^2\leq v_y^\dagger M^\dagger Mv_y$ follows from Cauchy--Schwarz and $\|v_y\|=1$.
This proves the scalar bound, including for non-Hermitian $M$.
Since $a_i\in\{0,1\}$ is predictable, multiplication by $a_i$ preserves the martingale-difference property by \Cref{pre:orthogonality} and can only decrease the two conditional second moments.

It remains to bound every subset sum deterministically.
Fix $J\subseteq\{1,\ldots,T\}$ and collect the positive projector terms into
\[
    M_J=\sum_{i\in J}\frac{a_i}{f_i}P_i.
\]
Since $0\leq a_i/f_i\leq2a_i$ and every $P_i$ is positive semidefinite,
\[
    0\preceq M_J
       \preceq2\sum_{i\in J}a_iP_i
       \preceq2\sum_{i=1}^T a_iP_i
       =2\widetilde S_T.
\]
Recall why $\|\widetilde S_T\|_{\mathrm{op}}\leq\kappa+1$.
Immediately before the first threshold crossing, the accumulated projector sum has norm less than $\kappa$.
The crossing observation adds one projector of norm one, and all subsequent auxiliary terms vanish.
If the threshold is never reached, the norm remains below $\kappa$.
Thus, in either case, $\|M_J\|_{\mathrm{op}}\leq2(\kappa+1)$.

To recover the centered matrices $Y_i$, use $\operatorname{Tr}P_i=1$:
\[
    \operatorname{Tr}M_J=\sum_{i\in J}\frac{a_i}{f_i},
    \qquad
    \sum_{i\in J}Y_i
       =M_J-\frac{\operatorname{Tr}M_J}{d}I.
\]
The scalar $\operatorname{Tr}M_J/d$ is the average eigenvalue of $M_J$.
Because $M_J\succeq0$, both this average and every eigenvalue of $M_J$ lie in $[0,\|M_J\|_{\mathrm{op}}]$.
Subtracting the average therefore leaves every eigenvalue with absolute value at most $\|M_J\|_{\mathrm{op}}$.
Consequently,
\[
    \left\|\sum_{i\in J}Y_i\right\|_{\mathrm{op}}
       =\left\|M_J-\frac{\operatorname{Tr}M_J}{d}I\right\|_{\mathrm{op}}
       \leq\|M_J\|_{\mathrm{op}}
       \leq2(\kappa+1)\leq K.
\]
This holds for every record and every subset $J$, without a restriction on $T$.
\end{proof}

In particular, the partial sums of either $\widehat Y_i$ or $Y_i$ are matrix martingales under the original joint law.
The following lemma controls an increment multiplied on both sides by matrices determined by the past.

\begin{lemma}[Second moment of a matrix product]
\label[lemma]{lo:sandwich-lemma}
Under the assumptions of \Cref{lo:Y-properties}, let $L,R\in\mathbb C^{d\times d}$ be matrices determined by $\mathcal F_{i-1}$.
Then
\begin{equation}
\label{lo:sandwich}
\begin{aligned}
    &\mathbb E_\theta\!\left[
        \|LY_iR\|_{\mathrm{HS}}^2
        \,\middle|\,\mathcal F_{i-1}\right]\\
    &\qquad\leq\frac2d\min\left\{
        \|L\|_{\mathrm{op}}^2\|R\|_{\mathrm{HS}}^2,\,
        \|L\|_{\mathrm{HS}}^2\|R\|_{\mathrm{op}}^2
    \right\}.
\end{aligned}
\end{equation}
Moreover,
$\mathbb E_\theta[\|Y_i\|_{\mathrm{HS}}^2
\mid\mathcal F_{i-1}]\leq2$.
\end{lemma}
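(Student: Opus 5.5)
The plan is to reduce the bound to the scalar variance estimate \eqref{lo:Y-scalar-variance} of \Cref{lo:Y-properties}, by expanding the Hilbert--Schmidt norm in a basis. Because $L,R$ are determined by $\mathcal F_{i-1}$, they may be treated as constants under the conditional expectation. It suffices to prove the first branch of the minimum, $\|L\|_{\mathrm{op}}^2\|R\|_{\mathrm{HS}}^2$. The second branch then follows by applying the first to $(LY_iR)^\dagger=R^\dagger Y_iL^\dagger$, since $Y_i$ is Hermitian and Hilbert--Schmidt and operator norms are invariant under adjoints. The roles of $L$ and $R$ swap, and the same conditional bounds apply because $L^\dagger,R^\dagger$ are also predictable.

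For the first branch, write $Y_i=a_i(P_i-I/d)/f_i$ and compute directly rather than through the scalar bound. First use
\[
    \|LY_iR\|_{\mathrm{HS}}^2=\operatorname{Tr}(R^\dagger Y_iL^\dagger LY_iR)\leq\|L\|_{\mathrm{op}}^2\operatorname{Tr}(R^\dagger Y_i^2R),
\]
which holds since $Y_iL^\dagger LY_i\preceq\|L\|_{\mathrm{op}}^2Y_i^2$. Taking conditional expectation and using linearity, we get
\[
    \mathbb E_\theta[\operatorname{Tr}(R^\dagger Y_i^2R)\mid\mathcal F_{i-1}]=\operatorname{Tr}\bigl(R^\dagger\,\mathbb E_\theta[Y_i^2\mid\mathcal F_{i-1}]\,R\bigr).
\]
By the matrix variance bound \eqref{lo:Y-matrix-variance}, this is at most $(2/d)\operatorname{Tr}(R^\dagger R)=(2/d)\|R\|_{\mathrm{HS}}^2$. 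Positivity of $R(\cdot)R^\dagger$ under congruence justifies transporting the operator inequality inside the trace. This gives the first branch, and the minimum follows.

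For the final assertion, take $L=R=I$, so that $\|I\|_{\mathrm{HS}}^2=d$. The bound above then gives $\mathbb E_\theta[\|Y_i\|_{\mathrm{HS}}^2\mid\mathcal F_{i-1}]\leq (2/d)\cdot d=2$. Alternatively, one can take the trace of \eqref{lo:Y-matrix-variance} directly.

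The only delicate point is the measure-theoretic justification that predictable, possibly unbounded, $L,R$ can be pulled out of the conditional expectation. This can be handled as in \Cref{pre:orthogonality}: restrict to past-measurable events where $\|L\|,\|R\|$ are bounded and pass to the limit by monotone convergence. Monotone convergence applies because the integrands are nonnegative.
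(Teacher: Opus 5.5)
Your argument is correct and is essentially the paper's proof: you bound $\|LY_iR\|_{\mathrm{HS}}^2\le\|L\|_{\mathrm{op}}^2\operatorname{Tr}(R^\dagger Y_i^2R)$, pull the predictable $R$ outside the conditional expectation, apply the matrix variance bound \eqref{lo:Y-matrix-variance}, get the other branch from the adjoint $R^\dagger Y_iL^\dagger$, and obtain the last claim from the trace of \eqref{lo:Y-matrix-variance}. Your opening sentence about reducing to the scalar bound \eqref{lo:Y-scalar-variance} via a basis expansion does not describe what you actually do, since you use only the matrix bound, so simply drop it.
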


\begin{proof}
Conditional on $\mathcal F_{i-1}$, the matrices $L,R$ and their norms are fixed.
We first prove the bound involving
$\|L\|_{\mathrm{op}}^2\|R\|_{\mathrm{HS}}^2$.

By the definition of the operator norm,
\[
    0\preceq L^\dagger L
      \preceq\|L\|_{\mathrm{op}}^2I.
\]
Multiplying this inequality on the left by $(Y_iR)^\dagger$
and on the right by $Y_iR$ preserves the positive-semidefinite order.
Since $Y_i$ is Hermitian, this gives
\[
    R^\dagger Y_iL^\dagger LY_iR
    \preceq
    \|L\|_{\mathrm{op}}^2R^\dagger Y_i^2R.
\]
Taking traces and using the definition of the Hilbert--Schmidt norm, we obtain
\[
\begin{aligned}
    \|LY_iR\|_{\mathrm{HS}}^2
    &=\operatorname{Tr}
        \bigl((LY_iR)^\dagger(LY_iR)\bigr)\\
    &=\operatorname{Tr}
        (R^\dagger Y_iL^\dagger LY_iR)\\
    &\leq\|L\|_{\mathrm{op}}^2
        \operatorname{Tr}(R^\dagger Y_i^2R).
\end{aligned}
\]

We now take conditional expectation.
Because $L$ and $R$ are determined by the past, only $Y_i^2$ remains inside the conditional expectation.
The bound
$\mathbb E_\theta[Y_i^2\mid\mathcal F_{i-1}]
\preceq2I/d$
from \eqref{lo:Y-matrix-variance} therefore yields
\[
\begin{aligned}
    &\mathbb E_\theta\!\left[
        \|LY_iR\|_{\mathrm{HS}}^2
        \,\middle|\,\mathcal F_{i-1}\right]\\
    &\quad\leq
        \|L\|_{\mathrm{op}}^2
        \operatorname{Tr}\!\left(
            R^\dagger
            \mathbb E_\theta[Y_i^2\mid\mathcal F_{i-1}]
            R\right)\\
    &\quad\leq
        \frac2d\|L\|_{\mathrm{op}}^2
        \operatorname{Tr}(R^\dagger R)\\
    &\quad=
        \frac2d\|L\|_{\mathrm{op}}^2
        \|R\|_{\mathrm{HS}}^2.
\end{aligned}
\]

To obtain the other bound, use invariance of the Hilbert--Schmidt norm under conjugate transpose:
\[
    \|LY_iR\|_{\mathrm{HS}}
    =\|(LY_iR)^\dagger\|_{\mathrm{HS}}
    =\|R^\dagger Y_iL^\dagger\|_{\mathrm{HS}}.
\]
Apply the bound just proved with $L$ replaced by $R^\dagger$
and $R$ replaced by $L^\dagger$.
Since conjugate transpose also preserves the operator norm,
\[
\begin{aligned}
    \mathbb E_\theta\!\left[
        \|LY_iR\|_{\mathrm{HS}}^2
        \,\middle|\,\mathcal F_{i-1}\right]
    &\leq\frac2d
        \|R^\dagger\|_{\mathrm{op}}^2
        \|L^\dagger\|_{\mathrm{HS}}^2\\
    &=\frac2d
        \|L\|_{\mathrm{HS}}^2
        \|R\|_{\mathrm{op}}^2.
\end{aligned}
\]
Both bounds hold simultaneously, so we may take their minimum.

Finally, Hermiticity of $Y_i$ gives
$\|Y_i\|_{\mathrm{HS}}^2=\operatorname{Tr}(Y_i^2)$.
Taking a trace in \eqref{lo:Y-matrix-variance}, we find
\[
    \mathbb E_\theta\!\left[
        \|Y_i\|_{\mathrm{HS}}^2
        \,\middle|\,\mathcal F_{i-1}\right]
    =
    \operatorname{Tr}\!\left(
        \mathbb E_\theta[Y_i^2\mid\mathcal F_{i-1}]
    \right)
    \leq\operatorname{Tr}(2I/d)=2.
\]
\end{proof}

\paragraph{The dyadic hierarchy.}

For $s\geq1$ and $N\leq T$, define the ordered distinct-index chain
\begin{equation}
\label{lo:chain-definition}
    \mathcal A_{s,N}
    =\sum_{\substack{i_1,\ldots,i_s\leq N\\\text{all distinct}}}
        Y_{i_1}\cdots Y_{i_s},
    \qquad
    \mathcal C_{s,N}=\operatorname{Tr}(\mathcal A_{s,N}).
\end{equation}
The sum is zero for $s>N$; products retain their order.
Write $\widehat{\mathcal C}_{s,N}$ for the same cycle sum formed from $\widehat Y_i$.
On $G_\kappa$, the original and auxiliary sums agree for every choice of the latent vectors.

\begin{lemma}[All-order chain and cycle bounds]
\label[lemma]{lo:cycle-bounds}
Set $H(s)=1+\lfloor\log_2s\rfloor$.
Thus $H(s)-1$ counts the number of halvings, rounding down at each step, needed to reduce $s$ to one.
The matrices of \Cref{lo:Y-properties} satisfy, for all $s\geq1$ and $N\leq T$,
\begin{equation}
\label{lo:chain-bound}
    \mathbb E_\theta\|\mathcal A_{s,N}\|_{\mathrm{HS}}^2
    \leq s^{2s}(2K^2)^s
        \frac{N^{H(s)}}{d^{H(s)-1}}.
\end{equation}
For every $\ell\geq2$,
\begin{equation}
\label{lo:cycle-bound-explicit}
    \mathbb E_\theta|\mathcal C_{\ell,N}|^2
    \leq
    2\ell^2(\ell-1)^{2\ell-2}(2K^2)^{\ell-1}
    \frac{N^{\lceil\log_2\ell\rceil+1}}
         {d^{\lceil\log_2\ell\rceil}}.
\end{equation}
Consequently, there is a constant $A=A(K)\geq1$, independent of the order, such that
\begin{equation}
\label{lo:cycle-bound-quantitative}
    \|\mathcal C_{\ell,N}\|_{L^2}
    \leq
    \exp[A\ell\log(\ell+1)]
    \frac{N^{(\lceil\log_2\ell\rceil+1)/2}}
         {d^{\lceil\log_2\ell\rceil/2}}.
\end{equation}
Here
$\|\mathcal C_{\ell,N}\|_{L^2}
=(\mathbb E_\theta|\mathcal C_{\ell,N}|^2)^{1/2}$
denotes the root-mean-square magnitude of the random scalar
$\mathcal C_{\ell,N}$.
The expectation is taken under the original joint law of the latent vectors, the random seed, and the measurement outcomes.
\end{lemma}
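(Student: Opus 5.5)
The plan is strong induction on $s$ for the chain bound \eqref{lo:chain-bound}, organized around the group decomposition sketched in the overview, followed by a one-step martingale argument for cycles.

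\textbf{Base case.} For $s=1$, $\mathcal A_{1,N}=\sum_{i\le N}Y_i$ is a martingale. The isometry of \Cref{pre:orthogonality} together with $\mathbb E_\theta[\|Y_i\|_{\mathrm{HS}}^2\mid\mathcal F_{i-1}]\le2$ from \Cref{lo:sandwich-lemma} gives $\mathbb E\|\mathcal A_{1,N}\|_{\mathrm{HS}}^2\le 2N$. This is consistent with $H(1)=1$.

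\textbf{Reduction to grouped products.} For $s\ge2$, I would represent the distinct-index ordered sum as an average over group assignments. Assign each round $i\le N$ independently and uniformly a label in $\{1,\dots,s\}$, and let $Q_c=\sum_{i:\,\mathrm{label}(i)=c}Y_i$. Then
\[
\mathbb E_{\mathrm{labels}}\bigl[Q_1Q_2\cdots Q_s\bigr]
\]
picks out each product $Y_{i_1}\cdots Y_{i_s}$ with distinct indices carrying labels $1,\dots,s$ with probability $s^{-s}$, while products with repeated indices cannot have all labels distinct. Hence $\mathcal A_{s,N}=s^s\,\mathbb E_{\mathrm{labels}}[Q_1\cdots Q_s]$. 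By Jensen, it suffices to bound $\mathbb E\|Q_1\cdots Q_s\|_{\mathrm{HS}}^2$ uniformly over fixed labelings and then multiply by $s^{2s}$. This factor $s^{2s}$ is exactly the one in \eqref{lo:chain-bound}, so the labels are chosen independently of the data, and hence predictably.

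\textbf{Martingale step for a fixed labeling.} Let $\Pi_n$ denote the product with each $Q_c$ truncated to rounds $\le n$. Only one factor changes at round $n$, so the increment is $L\,Y_n\,R$, where $L$ and $R$ are products of the other truncated factors and are predictable. By \Cref{pre:orthogonality} and \Cref{lo:sandwich-lemma},
\[
\mathbb E\|\Pi_N\|_{\mathrm{HS}}^2\le\frac2d\sum_{n\le N}\mathbb E\,\min\bigl\{\|L\|_{\mathrm{op}}^2\|R\|_{\mathrm{HS}}^2,\ \|L\|_{\mathrm{HS}}^2\|R\|_{\mathrm{op}}^2\bigr\}.
\]
Each partial group sum has operator norm at most $K$ by \eqref{lo:Y-subsets}. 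I keep in Hilbert--Schmidt norm the longer of $L$ and $R$; its length is at least $\lfloor (s-1)/2\rfloor$, which I need to be at least $\lfloor s/2\rfloor$ in the $H$-count. I bound every other factor by $K$.

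The main obstacle is that the retained piece is a product of truncated grouped sums, not itself a chain $\mathcal A$. So I would formulate the induction hypothesis directly for grouped products: for fixed labels, any product of $s'$ truncated group sums has
\[
\mathbb E\|\cdot\|_{\mathrm{HS}}^2\le(2K^2)^{s'}\,\frac{N^{H(s')}}{d^{H(s')-1}}.
\]
Care is needed in the parity bookkeeping. The retained side has length between $\lfloor (s-1)/2\rfloor$ and $s-1$, and I choose whichever side yields $H(\cdot)\ge H(s)-1$; then $H$ drops by at most one per step. Alternatively, I bound the full length-$(s-1)$ side via monotonicity, using $N/d\le1$ since $T\le d$. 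One factor $N\cdot 2/d$ per level then gives $N^{H(s)}/d^{H(s)-1}$.

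\textbf{Cycles.} I use the increment identity $\mathcal C_{\ell,n}-\mathcal C_{\ell,n-1}=\ell\operatorname{Tr}(Y_n\mathcal A_{\ell-1,n-1})$, which holds by cyclicity. Here $\mathcal A_{\ell-1,n-1}$ is predictable, so \eqref{lo:Y-scalar-variance} and the isometry give
\[
\mathbb E|\mathcal C_{\ell,N}|^2\le\frac{2\ell^2}{d}\sum_{n\le N}\mathbb E\|\mathcal A_{\ell-1,n-1}\|_{\mathrm{HS}}^2.
\]
Inserting \eqref{lo:chain-bound} with $s=\ell-1$ and noting $H(\ell-1)=\lceil\log_2\ell\rceil$ for $\ell\ge2$ yields \eqref{lo:cycle-bound-explicit}. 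Finally, \eqref{lo:cycle-bound-quantitative} follows by taking square roots and absorbing $\ell^2(\ell-1)^{2\ell}(2K^2)^{\ell}$ into $\exp[A\ell\log(\ell+1)]$.
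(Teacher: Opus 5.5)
Your argument is the paper's own proof. It uses the random-coloring identity with Jensen, induction on the length of products of truncated group sums (martingales because only one factor moves per round), the sandwich bound with one side in Hilbert--Schmidt norm and all other factors bounded by $K$, and the cycle increment $\ell\operatorname{Tr}(Y_n\mathcal A_{\ell-1,n-1})$ with $H(\ell-1)=\lceil\log_2\ell\rceil$.

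The one step you leave unproved is the parity bookkeeping, and there is in fact no obstruction there. $L$ and $R$ together contain $s-1$ old factors, so the longer side contains at least $\lceil (s-1)/2\rceil=\lfloor s/2\rfloor$ of them. Your bound $\lfloor (s-1)/2\rfloor$ undercounts for even $s$. Taken at face value it would break the induction: at $s=4$ it gives a retained length $1$ with $H(1)=H(4)-2$. This pigeonhole count is exactly what justifies your ``choose the side with $H(\cdot)\geq H(s)-1$,'' so state it.

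Then retain a contiguous block of exactly $s_0=\lfloor s/2\rfloor$ factors on that side and bound everything else by $K$. Since $H(s_0)=H(s)-1$, the estimate $\sum_{n\leq N}(n-1)^{H(s_0)}\leq N^{H(s)}$ closes the induction with constant $2^{s_0+1}K^{2s-2}\leq(2K^2)^s$, with no appeal to $N\leq d$.

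Your fallback via $T\leq d$ is not among the lemma's hypotheses; the lemma is stated for every copy budget $T$, and the paper's proof never uses it. It would genuinely be needed if you kept the whole longer side untrimmed. For example, at $s=3$ with $Y_n$ at an end, that side has length $2$ and $H(2)=H(3)$. The recursion then yields $N^3/d^2$ rather than $N^2/d$, which is a valid bound only when $N\leq d$.
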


\begin{proof}
We first bound the matrix chains and then pass to their traces.
To handle the distinct-index restriction, independently assign each round a uniformly random color from $\{1,\ldots,s\}$ and define
\[
    B_{a,n}
    =\sum_{\substack{i\leq n\\\mathrm{color}(i)=a}}Y_i.
\]
Expanding $B_{1,N}\cdots B_{s,N}$ selects one index from each color class, so no index can occur twice.
Conversely, each ordered tuple of $s$ distinct indices receives the prescribed colors with probability $s^{-s}$.
Therefore, for every fixed realization of the matrices,
\begin{equation}
\label{lo:coloring-identity}
    \mathcal A_{s,N}
    =s^s\mathbb E_{\mathrm{colors}}
        [B_{1,N}\cdots B_{s,N}].
\end{equation}
Jensen's inequality gives
\[
    \mathbb E_\theta\|\mathcal A_{s,N}\|_{\mathrm{HS}}^2
    \leq
    s^{2s}\mathbb E_{\mathrm{colors}}
        \mathbb E_\theta
        \|B_{1,N}\cdots B_{s,N}\|_{\mathrm{HS}}^2.
\]
It thus suffices to bound these products uniformly over all fixed colorings.

Fix a coloring.
For an ordered list $w=(a_1,\ldots,a_u)$ of distinct colors, write
$M_{w,n}=B_{a_1,n}\cdots B_{a_u,n}$.
Every factor has operator norm at most $K$ by \eqref{lo:Y-subsets}.
At round $n$, only the factor corresponding to the color of $n$ can change.
Hence
$\Delta M_{w,n}:=M_{w,n}-M_{w,n-1}$
is either zero or $LY_nR$, where $L,R$ are products of factors evaluated at time $n-1$; empty products mean $I$.
These factors are determined by the past, so
$\mathbb E_\theta[Y_n\mid\mathcal F_{n-1}]=0$
implies
$\mathbb E_\theta[\Delta M_{w,n}\mid\mathcal F_{n-1}]=0$.
Thus $M_{w,n}$ is a matrix martingale starting at zero.
This uses the fact that only one factor changes in each round; no independence between color classes is needed.

We prove by induction on $u$, uniformly over colorings and ordered lists, that
\begin{equation}
\label{lo:word-bound}
    \mathbb E_\theta\|M_{w,N}\|_{\mathrm{HS}}^2
    \leq
    (2K^2)^u\frac{N^{H(u)}}{d^{H(u)-1}}.
\end{equation}
For $u=1$, the martingale isometry and
$\mathbb E_\theta\|Y_n\|_{\mathrm{HS}}^2\leq2$
give
$\mathbb E_\theta\|M_{w,N}\|_{\mathrm{HS}}^2
\leq2N\leq2K^2N$.

For $u\geq2$, put $u_0=\lfloor u/2\rfloor$.
In a nonzero increment $LY_nR$, the two sides $L,R$ contain $u-1$ old factors in total, so at least one side contains $u_0$ factors.
Retain a contiguous block of $u_0$ factors on that side, denoted by $M_{w',n-1}$.
In \eqref{lo:sandwich}, use the Hilbert--Schmidt norm for this side and the operator norm for the other.
Bounding all factors outside the retained block by $K$ gives
\[
    \mathbb E_\theta
        \|\Delta M_{w,n}\|_{\mathrm{HS}}^2
    \leq
    \frac{2K^{2(u-1-u_0)}}d
    \mathbb E_\theta
        \|M_{w',n-1}\|_{\mathrm{HS}}^2.
\]
The retained block may depend on the fixed color at round $n$, which is harmless because the induction bound is uniform over all such blocks.
If the color does not occur in $w$, the increment is zero.

The martingale isometry~\cite[Sec.~12.1]{Williams1991ProbabilityMartingales}, in the form of \Cref{pre:orthogonality}, now yields
\begin{align*}
    \mathbb E_\theta\|M_{w,N}\|_{\mathrm{HS}}^2
    &=\sum_{n=1}^N
        \mathbb E_\theta
        \|\Delta M_{w,n}\|_{\mathrm{HS}}^2\\
    &\leq
        \frac{2K^{2(u-1-u_0)}(2K^2)^{u_0}}
             {d^{H(u_0)}}
        \sum_{n=1}^N(n-1)^{H(u_0)}\\
    &\leq
        (2K^2)^u
        \frac{N^{H(u)}}{d^{H(u)-1}}.
\end{align*}
For the last step, use
$\sum_{n=1}^N(n-1)^{H(u_0)}
\leq N^{H(u_0)+1}$
and $H(u)=H(u_0)+1$.
The constants satisfy
$2K^{2(u-1-u_0)}(2K^2)^{u_0}
=2^{u_0+1}K^{2u-2}\leq(2K^2)^u$,
since $u_0+1\leq u$ and $K\geq1$.
This proves \eqref{lo:word-bound}.
Applying it with $u=s$ in the coloring estimate proves \eqref{lo:chain-bound}.
The recursion shows explicitly why each halving of the product length introduces one additional factor of $N/d$.

We next bound the cycles.
The difference
$\mathcal C_{\ell,n}-\mathcal C_{\ell,n-1}$
contains exactly the terms involving the new index $n$.
Separate these terms according to the $\ell$ possible positions of $Y_n$.
Cyclically moving $Y_n$ to the front and reindexing the remaining ordered tuples gives the same sum for every position, so
\begin{equation}
\label{lo:cycle-increment}
    \mathcal C_{\ell,n}-\mathcal C_{\ell,n-1}
    =\ell\operatorname{Tr}
        (Y_n\mathcal A_{\ell-1,n-1}).
\end{equation}
Because $\mathcal A_{\ell-1,n-1}$ is determined by the past, these increments have conditional mean zero.
Thus $\mathcal C_{\ell,n}$ is a scalar martingale starting at zero.
Its isometry, the conditional trace bound \eqref{lo:Y-scalar-variance}, and \eqref{lo:chain-bound} give
\begin{align*}
    \mathbb E_\theta|\mathcal C_{\ell,N}|^2
    &=\sum_{n=1}^N
        \mathbb E_\theta
        |\mathcal C_{\ell,n}-\mathcal C_{\ell,n-1}|^2\\
    &\leq
        \frac{2\ell^2}d
        \sum_{n=1}^N
        \mathbb E_\theta
        \|\mathcal A_{\ell-1,n-1}\|_{\mathrm{HS}}^2\\
    &\leq
        2\ell^2(\ell-1)^{2\ell-2}(2K^2)^{\ell-1}
        \frac{N^{H(\ell-1)+1}}{d^{H(\ell-1)}}.
\end{align*}
Since $H(\ell-1)=\lceil\log_2\ell\rceil$, this proves \eqref{lo:cycle-bound-explicit}.

Finally, taking square roots gives the prefactor
$\sqrt2\,\ell(\ell-1)^{\ell-1}(2K^2)^{(\ell-1)/2}$.
Its logarithm is at most $A\ell\log(\ell+1)$ for a sufficiently large constant $A$ depending only on $K$.
Together with the definition of the scalar $L^2$ norm, this proves \eqref{lo:cycle-bound-quantitative}.
\end{proof}

Each halving of the product length contributes one further factor of $N/d$ to the chain estimate.
Its powers of $N,d$ are therefore constant throughout $2^j\leq s<2^{j+1}$ for each integer $j\geq0$.
The cycle estimate uses a chain of length $\ell-1$, shifting these blocks to $2^j+1\leq\ell\leq2^{j+1}$; the constants may still depend on the order.
In particular, cycle variances are $\mathcal{O}(T^3/d^2)$ for lengths $3,4$ and $\mathcal{O}_\ell(T^4/d^3)$ for lengths $5,\ldots,8$.
For the eventual tail sum we also need an exponential, rather than superexponential, bound in the degree.

\begin{lemma}[Deterministic bound for long chains]
\label[lemma]{lo:long-chains}
For every $\ell\geq1$ and every realization satisfying \eqref{lo:Y-subsets},
\begin{equation}
\label{lo:long-chain-bound}
    \|\mathcal A_{\ell,N}\|_{\mathrm{op}}\leq(6K)^\ell,
    \qquad |\mathcal C_{\ell,N}|\leq d(6K)^\ell.
\end{equation}
\end{lemma}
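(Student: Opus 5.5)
The plan is to remove the distinct-index restriction with an inclusion--exclusion formula, expressing $\mathcal A_{\ell,N}$ through ordered products of subset sums of the $Y_i$. Each such sum has operator norm at most $K$ by \eqref{lo:Y-subsets}. The coloring identity \eqref{lo:coloring-identity} already writes $\mathcal A_{\ell,N}$ as an average of products $B_{1,N}\cdots B_{\ell,N}$ of subset sums, each of norm at most $K$. However, its prefactor $\ell^\ell$ is superexponential, which is exactly what we must avoid. So I would use a sharper expansion with only exponentially many terms of bounded norm.

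First, write the distinctness constraint through the partition lattice. Summing freely over $i_1,\ldots,i_\ell\leq N$ and applying Möbius inversion on set partitions of the positions expresses the distinct sum as a signed combination over partitions, with Möbius coefficients. This has $\ell!$-type growth, so it is still too large, and I would not use it directly.

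Instead, I would use a random-coloring argument with only a constant number of colors applied recursively, or equivalently a dyadic splitting. Let $\mathcal A_{\ell,N}$ have ordered positions $1,\ldots,\ell$. Assign each round independently to one of two halves, ``left'' or ``right'', each with probability $1/2$. Summing the product of a left-restricted distinct chain of length $\lfloor \ell/2\rfloor$ and a right-restricted distinct chain of length $\lceil\ell/2\rceil$ counts each distinct tuple with probability $2^{-\ell}$, provided the first block's indices are colored left and the second block's right. Thus
\[
\mathcal A_{\ell,N}=2^{\ell}\,\mathbb E_{\text{split}}\bigl[\mathcal A^{\mathrm{L}}_{\lfloor\ell/2\rfloor}\mathcal A^{\mathrm{R}}_{\lceil\ell/2\rceil}\bigr],
\]
where the superscripts restrict the index sets. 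Distinctness across the two blocks is automatic, and distinctness within a block is handled recursively. The restricted chains are again chains built from the $Y_i$ with $i$ in a subset, and \eqref{lo:Y-subsets} still controls every subset sum of them. Iterating gives a recursion $c_\ell\leq 2^\ell c_{\lfloor\ell/2\rfloor}c_{\lceil\ell/2\rceil}$ with $c_1\leq K$. Unrolling over the $\log_2\ell$ levels, each level contributes a total factor of $2^{\ell}$, which naively gives $2^{\ell\log\ell}$, again superexponential.

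To avoid this I would split off one position at a time, an approach I expect to give genuinely exponential growth. The identity
\[
\mathcal A_{\ell,N}=\mathcal A_{1,N}\mathcal A_{\ell-1,N}-\sum_{k=1}^{\ell-1}(\text{terms where } i_1=i_{k+1})
\]
yields a recursion in which the correction terms are chains of length $\ell-1$ with one entry $Y_i^2$ merged in. Since $\|Y_i\|_{\mathrm{op}}\leq K$ and merged factors are themselves controlled by subset-sum-type bounds, a generalized induction over ``words'' with merged letters should give a bound $C^\ell$ with $C=6K$. Concretely, I would define the family of generalized chains in which each slot carries a power $Y_i^{m}$ of bounded total degree and prove by induction on total degree that every such object has norm at most $(6K)^{\text{degree}}$. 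The merging step needs sums like $\sum_i Y_i^{m}$ times subset sums, bounded using $\|Y_i\|\leq K$ and $\sum_i|Y_i|\preceq$ a constant times $K$; the latter follows from $M_J\preceq 2\widetilde S_T$ and the identity part being small.

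The cycle bound then follows from $|\operatorname{Tr}A|\leq d\|A\|_{\mathrm{op}}$. The main obstacle is organizing the combinatorics so the count of correction terms is exponential rather than factorial, and I am least sure this merged-letter induction closes with constant $6$.
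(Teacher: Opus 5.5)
Your diagnosis of the coloring identity, M\"obius inversion, and dyadic splitting is correct, but the route you finally commit to does not close. Splitting off one position gives $\mathcal A_{1,N}\mathcal A_{\ell-1,N}-\mathcal A_{\ell,N}=\sum_{k=1}^{\ell-1}\sum Y_{i_{k+1}}Y_{i_2}\cdots Y_{i_\ell}$, where the inner sum runs over distinct $(i_2,\ldots,i_\ell)$. Only the $k=1$ term is a chain with a merged letter $Y_{i_2}^2$. For $k\geq2$ the repeated matrix sits at positions $1$ and $k+1$ with other factors between them, and since the $Y_i$ do not commute it cannot be merged. So the family of ``chains with powers in slots'' is not closed under your recursion. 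Iterating forces you to track sums over all equality patterns of the $\ell$ positions, i.e.\ set partitions, whose number grows superexponentially; this is the same partition-lattice obstruction you identified for M\"obius inversion.

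There is a second problem. A pattern sum such as $\sum_j Y_j(\cdots)Y_j(\cdots)$, whose inner factors depend on $j$ through distinctness, is not a subset sum, so \eqref{lo:Y-subsets} gives no control of it. The substitute you suggest, $\|\sum_i|Y_i|\|_{\mathrm{op}}=\mathcal{O}(1)$, is not implied by the lemma's hypothesis. For $Y_i=KN^{-1/2}(e_0e_i^\dagger+e_ie_0^\dagger)$, every subset sum has norm at most $K$, yet $\|\sum_i|Y_i|\|_{\mathrm{op}}=K\sqrt N$. In the application this bound also needs $T/d$ bounded. Finally, nothing in the proposal actually produces the constant $6K$.

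The missing idea is to stay with subset sums, where \eqref{lo:Y-subsets} applies directly, but extract the distinct-index part as a polynomial coefficient in a single thinning parameter, instead of forcing $\ell$ color classes. For independent Bernoulli variables $\xi_i$ of mean $z$, put $F(z)=\mathbb E_\xi(\sum_{i\leq N}\xi_iY_i)^\ell$. Each realization is the $\ell$th power of a subset sum, so $\|F(z)\|_{\mathrm{op}}\leq K^\ell$ on $[0,1]$. Since $\xi_i^m=\xi_i$, a word using $s$ distinct indices contributes $z^s$ regardless of order or repetitions. Hence $F$ has degree at most $\ell$, and its $z^\ell$ coefficient is exactly $\mathcal A_{\ell,N}$. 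The $\ell$th finite difference at the nodes $j/\ell$ isolates it: $\mathcal A_{\ell,N}=\frac{\ell^\ell}{\ell!}\sum_{j=0}^\ell(-1)^{\ell-j}\binom{\ell}{j}F(j/\ell)$. This gives $\|\mathcal A_{\ell,N}\|_{\mathrm{op}}\leq(2\ell)^\ell K^\ell/\ell!\leq(2eK)^\ell\leq(6K)^\ell$. The factor $\ell^\ell$ that sank your coloring identity is now compensated by the $1/\ell!$ from the leading coefficient of the finite-difference operator. Your last step, $|\mathcal C_{\ell,N}|\leq d\|\mathcal A_{\ell,N}\|_{\mathrm{op}}$, is fine.
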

\begin{proof}
Fix a realization of $Y_1,\ldots,Y_N$.
For independent Bernoulli variables $\xi_i$ of mean $z\in[0,1]$, define
\[
    F(z)=\mathbb E_\xi\left(\sum_{i=1}^N\xi_iY_i\right)^\ell.
\]
Each selected subset sum has operator norm at most $K$, so $\|F(z)\|_{\mathrm{op}}\leq K^\ell$ on $[0,1]$.
In the expansion, $\xi_i^m=\xi_i$ for every positive integer $m$.
A product involving $s$ distinct indices therefore contributes the factor $z^s$, regardless of repetitions or matrix order.
Thus $F$ has degree at most $\ell$, and its coefficient of $z^\ell$ is exactly the distinct-index chain $\mathcal A_{\ell,N}$.
The exact finite-difference formula for this coefficient is
\[
    \mathcal A_{\ell,N}
       =\frac{\ell^\ell}{\ell!}
          \sum_{j=0}^\ell(-1)^{\ell-j}\binom\ell j F(j/\ell).
\]
Indeed, the alternating sum annihilates powers below $\ell$ and equals $\ell!/\ell^\ell$ on $z^\ell$.
Thus $\|\mathcal A_{\ell,N}\|_{\mathrm{op}}
\leq(2\ell)^\ell K^\ell/\ell!\leq(2eK)^\ell\leq(6K)^\ell$.
Taking a trace proves the second bound.
\end{proof}

\subsection{Expanding and bounding the probability derivative}
\label{lo:score-subsection}
\label{lo:completion-subsection}

The cycle estimates become useful after averaging the product likelihood over the latent Gaussian vectors.
We first justify replacing the smooth cutoff by the untruncated Gaussian law in this algebraic calculation; the resulting error is exponentially small even after normalization.

\begin{lemma}[Uniform barrier replacement]
\label[lemma]{lo:barrier-replacement}
For the spectral path and cutoff defined above, fix $r$ and the sufficiently small constant $\alpha$ used in the prior.
Suppose that the copy budget satisfies $T\log(d+T+1)=o(d)$.
Then there is $c_r>0$ such that, for all sufficiently large $d$, the following bounds hold uniformly in $\theta$, every frozen history $H$ of length at most $T$, and every subset $J\subseteq H$:
\begin{align}
    |Z_\theta(H)-\widetilde Z_\theta(H)|
      +|\partial_\theta Z_\theta(H)
            -\partial_\theta\widetilde Z_\theta(H)|
      +|B_\theta'|
      &\leq e^{-c_r d},
      \label{lo:barrier-absolute}\\
    |\partial_\theta\log Z_\theta(H)
            -\partial_\theta\log\widetilde Z_\theta(H)|
      &\leq e^{-c_r d},
      \label{lo:barrier-log}\\
    \left|
      \frac{\widetilde Z_\theta(H\setminus J)}
           {\widetilde Z_\theta(H)}
      -\frac{Z_\theta(H\setminus J)}
            {Z_\theta(H)}
    \right|
      &\leq e^{-c_r d}.
      \label{lo:barrier-ratio}
\end{align}
In particular, every auxiliary denominator in these formulas is positive.
\end{lemma}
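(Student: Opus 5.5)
The plan is to write $Z_\theta(H)-\widetilde Z_\theta(H)=-\mathbb E_G[(1-\chi(R_\theta))\prod_{i\in H}f_i]$. The factor $1-\chi(R_\theta)$ vanishes unless $R_\theta>1/3$, and by \Cref{lo:prior-concentration} this tail event has Gaussian probability at most $e^{-cd}$. So it suffices to control the polynomial likelihood on that event.

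\textbf{Step 1: absolute bound.} By \eqref{lo:likelihood-factors}, $|f_i|\le 1+\alpha\sum_j\|g_j\|^2\le 1+\alpha U$ with $U=\sum_j\|g_j\|^2$. Cauchy--Schwarz gives
\[
|Z_\theta-\widetilde Z_\theta|\le \mathbb P_G(R_\theta>1/3)^{1/2}\,\bigl(\mathbb E_G(1+\alpha U)^{2T}\bigr)^{1/2}.
\]
By \Cref{pre:gaussian}, $\mathbb E U^{m}\le(rd+m)^m$, so the second factor is at most $(C(d+T))^{T}=\exp[O(T\log(d+T+1))]=e^{o(d)}$, using the hypothesis on $T$. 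Hence the difference is at most $e^{-cd/2+o(d)}$.

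\textbf{Step 2: derivatives.} Differentiate under the integral (justified as before \eqref{lo:derivative-target}). For $\partial_\theta\widetilde Z_\theta$ and $\partial_\theta Z_\theta$, the product rule produces at most $T$ terms, each with $|\partial_\theta f_i|\le\alpha U$ by \eqref{lo:spectral-derivative}. It also produces the term $\chi'(R_\theta)\partial_\theta R_\theta\prod_i f_i$, which is supported on $\{R_\theta>1/3\}$, with $\chi'$ bounded and $|\partial_\theta R_\theta|\le\alpha U/d$. The same Cauchy--Schwarz argument, at the cost of an extra polynomial factor $T$, bounds every difference. For $B_\theta'=\mathbb E_G[\chi'(R_\theta)\partial_\theta R_\theta]$, the integrand is again supported on the tail event, so the same bound applies. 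Shrinking the constant absorbs the polynomial factors and gives \eqref{lo:barrier-absolute}.

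\textbf{Step 3: normalized quantities and the main obstacle.} Converting \eqref{lo:barrier-absolute} into the log-derivative and ratio statements requires a lower bound on the denominators. The main obstacle is that a crude absolute error is useless if $\widetilde Z_\theta(H)$ is itself tiny. I would first prove
\[
Z_\theta(H)\ge\tfrac12\,\mathbb P_G(R_\theta\le1/3)\ge\tfrac14 .
\]
This holds because on the support $f_i\ge1/2$ is positive, and by Jensen applied to $\log$ under the tilted law, $\log Z_\theta(H)\ge\log B_\theta+\sum_i\mathbb E_{\mathrm{prior}}\log f_i$. Since $\mathbb E_{\mathrm{prior}}f_i\approx1$, this is at least $e^{-O(T)}$ using $f_i\ge1/2$ and concentration. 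It is not quite enough, because $e^{-O(T)}$ versus $e^{-cd}$ requires $T=o(d)$; this is available, so $Z_\theta(H)\ge e^{-o(d)}$. The same bound holds for $Z_\theta(H\setminus J)$, since deleted factors still satisfy $f_i\ge1/2$. Then $\widetilde Z_\theta\ge Z_\theta-e^{-c d}>0$.

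For the derivative, expand
\[
\frac{\partial Z}{Z}-\frac{\partial\widetilde Z}{\widetilde Z}=\frac{\partial Z-\partial\widetilde Z}{Z}+\partial\widetilde Z\,\frac{\widetilde Z-Z}{Z\widetilde Z}.
\]
Bound $|\partial\widetilde Z|\le e^{o(d)}$ by Step 2 and Gaussian moments. This yields \eqref{lo:barrier-log} and, analogously, \eqref{lo:barrier-ratio} with a slightly smaller $c_r$, uniformly in $\theta$, $H$ and $J$.

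If the Jensen lower bound proves awkward, the fallback is the deterministic bound $Z_\theta(H)\ge B_\theta 2^{-T}$. This suffices whenever $T=o(d)$, which is implied by $T\log(d+T+1)=o(d)$.
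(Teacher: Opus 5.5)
Your proof is correct and follows the paper's route: Cauchy--Schwarz against the Gaussian tail $\{R_\theta\geq1/3\}$ with the Gamma moments of $U$, the same two algebraic decompositions for the logarithmic derivative and the ratio, and the denominator bound $Z_\theta(H)\geq B_\theta2^{-T}\geq2^{-T-1}$ from $f_i\geq1/2$, which is exactly your fallback. Discard the claimed bound $Z_\theta(H)\geq1/4$, which is false in general: since $Z_\theta(H)/B_\theta=p_\theta(H)$ is the record density relative to the $I/d$ law, it must be small on most $\nu$-typical records once $T\gg\sqrt d$, where a purity-estimation protocol separates the ensemble from $I/d$. Your Jensen step only ever yields $e^{-O(T)}$, and that is all the argument needs.
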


\begin{proof}
We first control the absolute errors and then show that division by the normalizers preserves their exponential decay.
Throughout the proof, the projectors in $H$ are fixed.
The notation $H\setminus J$ deletes only the indicated likelihood factors, without changing the remaining projectors.

Let $U=\sum_j\|g_j\|^2$.
Since $\|X_i\|_{\mathrm{op}}\leq1$,
$\sum_j\lambda_j=1$, and
$\sum_j|\lambda_j'|\leq1$, uniformly over all frozen projectors,
\[
    |f_i|\leq1+\alpha U,
    \qquad
    |\partial_\theta f_i|\leq\alpha U,
    \qquad
    |R_\theta'|\leq\alpha U/d.
\]
The cutoff and its derivative are bounded, so these polynomial bounds and Gaussian integrability justify differentiation under the expectation.

The difference between the physical and auxiliary integrals is
\[
    Z_\theta(H)-\widetilde Z_\theta(H)
    =
    \mathbb E_G\!\left[
        (\chi(R_\theta)-1)\prod_{i\in H}f_i
    \right].
\]
Differentiating gives
\begin{align*}
    &\partial_\theta Z_\theta(H)
      -\partial_\theta\widetilde Z_\theta(H)
    =
    \mathbb E_G\!\left[
        \chi'(R_\theta)R_\theta'
        \prod_{i\in H}f_i
    \right]
    +
    \mathbb E_G\!\left[
        (\chi(R_\theta)-1)
        \sum_{i\in H}
        (\partial_\theta f_i)
        \prod_{\substack{k\in H\\k\ne i}}f_k
    \right].
\end{align*}
Moreover, $B_\theta'=\mathbb E_G[\chi'(R_\theta)R_\theta']$.
Both $\chi(R_\theta)-1$ and $\chi'(R_\theta)$ vanish when $R_\theta<1/3$.
Thus all three errors are supported on the same Gaussian tail.
Using the preceding bounds, their sum is at most
\[
    C(T+1)\,
    \mathbb E_G\!\left[
        \boldsymbol 1_{\{R_\theta\geq1/3\}}
        (1+\alpha U)^{T+1}
    \right].
\]
The factor $T+1$ accounts for differentiating the cutoff or one of the at most $T$ likelihood factors.

To bound this weighted tail, Cauchy--Schwarz gives
\begin{align*}
    \mathbb E_G\!\left[
        \boldsymbol 1_{\{R_\theta\geq1/3\}}
        (1+\alpha U)^{T+1}
    \right]\leq
    \mathbb P_G(R_\theta\geq1/3)^{1/2}
    \left(\mathbb E_G(1+\alpha U)^{2T+2}\right)^{1/2}.
\end{align*}
The first factor is exponentially small by the uniform tail estimate in \Cref{lo:prior-concentration}.
For the second factor, $U$ has the Gamma distribution with shape $rd$, and hence
\[
    \mathbb E_G U^m
    =\prod_{j=0}^{m-1}(rd+j)
    \leq(rd+m)^m.
\]
Using $(1+x)^m\leq2^{m-1}(1+x^m)$ for $x\geq0$, we obtain
\[
    \mathbb E_G(1+\alpha U)^m
    \leq
    2^{m-1}\left[1+\alpha^m(rd+m)^m\right]
    \leq
    \exp[\mathcal{O}_r(m\log(d+m+1))].
\]
Taking $m=2T+2$ therefore bounds the sum of the absolute errors by
\begin{equation}
\label{lo:barrier-error-scale}
    \exp\!\left[
        -cd+\mathcal{O}_r((T+1)\log(d+T+1))
    \right].
\end{equation}
The budget assumption makes the positive term in the exponent $o(d)$, proving \eqref{lo:barrier-absolute} after choosing a smaller exponential constant.
The same moment estimates, without the tail indicator, also give
\[
    |\partial_\theta Z_\theta(H)|
    +|\partial_\theta\widetilde Z_\theta(H)|
    \leq
    \exp[\mathcal{O}_r((T+1)\log(d+T+1))]
    =\exp[o(d)].
\]

We next control the denominators.
On the support of the cutoff,
$1/2\leq f_i\leq1+d/2$.
Since $B_\theta\geq1/2$ for sufficiently large $d$,
\[
    2^{-T-1}
    \leq2^{-|H|}B_\theta
    \leq Z_\theta(H)
    \leq(1+d/2)^T.
\]
The budget assumption implies $T=o(d)$, so the exponentially small absolute error in \eqref{lo:barrier-absolute} is eventually less than $2^{-T-2}$.
Consequently,
\[
    \widetilde Z_\theta(H)
    \geq
    Z_\theta(H)
    -|Z_\theta(H)-\widetilde Z_\theta(H)|
    \geq2^{-T-2}>0.
\]
In particular, both reciprocal denominators are at most
$2^{T+2}=\exp[o(d)]$.
The same bounds hold after deleting any subset of likelihood factors.

For the logarithmic derivatives, suppressing the arguments $\theta,H$ temporarily, write
\[
    \frac{Z'}Z-\frac{\widetilde Z'}{\widetilde Z}
    =
    \frac{Z'-\widetilde Z'}Z
    +
    \frac{\widetilde Z'(\widetilde Z-Z)}
         {Z\widetilde Z}.
\]
Each numerator difference is exponentially small by
\eqref{lo:barrier-absolute}.
The remaining derivative and reciprocal-denominator factors are all bounded by $\exp[o(d)]$.
Thus the right-hand side remains exponentially small, proving \eqref{lo:barrier-log} after decreasing $c_r$.

Finally, the ratio difference has the exact decomposition
\begin{align*}
    \frac{\widetilde Z_\theta(H\setminus J)}
          {\widetilde Z_\theta(H)}
     -\frac{Z_\theta(H\setminus J)}
           {Z_\theta(H)}=
    \frac{\widetilde Z_\theta(H\setminus J)
          -Z_\theta(H\setminus J)}
         {\widetilde Z_\theta(H)}+
    \frac{Z_\theta(H\setminus J)
          \bigl(Z_\theta(H)-\widetilde Z_\theta(H)\bigr)}
         {Z_\theta(H)\widetilde Z_\theta(H)}.
\end{align*}
Again, both numerator differences are exponentially small.
The additional numerator satisfies
$Z_\theta(H\setminus J)\leq(1+d/2)^T=\exp[o(d)]$,
and the reciprocal denominators have the same subexponential bounds as above.
Hence this difference is also bounded by $e^{-c_r d}$ after a final decrease of $c_r$, proving \eqref{lo:barrier-ratio}.
All estimates are uniform over the frozen histories and deleted subsets.
\end{proof}

Since $p_\theta=Z_\theta/B_\theta$ is positive relative to $\nu$, its logarithmic derivative is well defined and satisfies
\[
    \partial_\theta\log p_\theta(H)
       =\partial_\theta\log Z_\theta(H)-B_\theta'/B_\theta.
\]

\begin{lemma}[Expansion of the logarithmic derivative]
\label[lemma]{lo:score-identity}
Consider the spectral path above with a fixed number $r$ of spectral weights and the fixed constant $\alpha$ chosen in the prior construction.
Suppose that $T\log(d+T+1)=o(d)$.
For all sufficiently large $d$, uniformly over every original record $H$ of length $T$ and every $\theta$,
\begin{equation}
\label{lo:score-expansion}
    \partial_\theta\log p_\theta(H)
    =\mathbb E_\theta\!\left[
        \left.
        \sum_{\ell=r}^T
        \frac{\alpha^\ell p_\ell'(\lambda(\theta))}{\ell}
            \widehat{\mathcal C}_{\ell,T}
        \,\right|\,H\right]+e_\theta(H),
    \qquad |e_\theta(H)|\leq e^{-c_r d}.
\end{equation}
Here $p_\ell'(\lambda(\theta))$ denotes the derivative of
$\sum_j\lambda_j(\theta)^\ell$ with respect to $\theta$.
The conditional expectation is over the latent Gaussian vectors given the complete original record $H$.
The cycles $\widehat{\mathcal C}_{\ell,T}$ are formed from the original matrices $\widehat Y_i=X_i/f_i$, without auxiliary truncation.
\end{lemma}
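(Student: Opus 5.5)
First, I use \Cref{lo:barrier-replacement} to move from the physical normalizer to the auxiliary polynomial one. Since $\partial_\theta\log p_\theta=\partial_\theta\log Z_\theta-B_\theta'/B_\theta$, the bounds \eqref{lo:barrier-absolute} and \eqref{lo:barrier-log} together with $B_\theta\geq1/2$ give
\[
\partial_\theta\log p_\theta(H)=\partial_\theta\log\widetilde Z_\theta(H)+\mathcal{O}(e^{-c_rd}).
\]
What remains is an exact algebraic statement about $\widetilde Z_\theta(H)=\mathbb E_G\prod_i f_i$. Note that $\widetilde Z_\theta(H)$ is a polynomial in $\alpha$.

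\textbf{Cluster expansion of $\log\widetilde Z_\theta$.} Write
\[
f_i=1+\alpha\sum_j\lambda_j g_j^\dagger X_ig_j,\qquad \prod_i f_i=\sum_{J\subseteq H}\prod_{i\in J}\Bigl(\alpha\sum_j\lambda_jg_j^\dagger X_ig_j\Bigr).
\]
By Wick's theorem for independent complex Gaussians, $\mathbb E_G\prod_{i\in J}(\sum_j\lambda_j g_j^\dagger X_ig_j)$ is a sum over permutations $\pi$ of $J$. Each cycle $(i_1\cdots i_\ell)$ of $\pi$ contributes $\sum_j\lambda_j^\ell\operatorname{Tr}(X_{i_1}\cdots X_{i_\ell})$, because each cycle must use a single Gaussian index $j$. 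Hence the expectation equals $\sum_\pi\prod_{\text{cycles}}p_\ell(\lambda)\operatorname{Tr}(\cdots)$.

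Fixed points vanish since $\operatorname{Tr}X_i=0$. Summing over $J$ and $\pi$ gives the exponential formula: $\widetilde Z_\theta(H)$ is the exponential generating sum of disjoint collections of cycles. To handle the logarithm together with the conditioning cleanly, I differentiate directly rather than taking logs. Differentiating one cycle of length $\ell$ replaces $p_\ell(\lambda)$ by $p_\ell'(\lambda)$, which vanishes for $\ell<r$ by \Cref{lo:spectral-path}. This gives
\[
\partial_\theta\widetilde Z_\theta(H)=\sum_{\ell\geq r}\alpha^\ell p_\ell'(\lambda)\sum_{K\subseteq H,\ |K|=\ell}\ \sum_{\text{cyclic orders }c\text{ of }K}\operatorname{Tr}\Bigl(\prod_{c}X_i\Bigr)\,\widetilde Z_\theta(H\setminus K).
\]
The remaining cycles reassemble into $\widetilde Z_\theta(H\setminus K)$, with the complement's likelihood factors kept. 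Counting cyclic orders gives the factor $1/\ell$ relative to ordered distinct tuples, so the inner double sum is $\ell^{-1}\sum_{\text{distinct }i_1..i_\ell}\operatorname{Tr}(X_{i_1}\cdots X_{i_\ell})\widetilde Z_\theta(H\setminus\{i_1,\dots,i_\ell\})$.

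\textbf{Identification with the posterior expectation.} Dividing by $\widetilde Z_\theta(H)$ and using $\widetilde Z_\theta(H\setminus K)/\widetilde Z_\theta(H)$, I compare with the conditional expectation. The posterior density is $\prod_if_i/\widetilde Z_\theta(H)$, up to the cutoff. Since $\widehat Y_i=X_i/f_i$,
\[
\mathbb E_\theta\Bigl[\prod_{i\in K}f_i^{-1}\operatorname{Tr}(X_{i_1}\cdots X_{i_\ell})\,\Big|\,H\Bigr]=\frac{\mathbb E_G[\prod_{i\notin K}f_i\operatorname{Tr}(\cdots)]}{Z_\theta(H)},
\]
where the trace factor is deterministic given $H$. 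The remaining expectation equals $Z_\theta(H\setminus K)\operatorname{Tr}(\cdots)/Z_\theta(H)$. By \eqref{lo:barrier-ratio}, this matches the auxiliary ratio up to $e^{-c_rd}$. Therefore each term equals $\mathbb E_\theta[\alpha^\ell p_\ell'\ell^{-1}\widehat{\mathcal C}_{\ell,T}\mid H]$ up to an exponentially small error per ordered tuple.

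\textbf{Main obstacle: summing the errors.} The difficulty is that there are up to $T^\ell$ tuples for each $\ell\leq T$, so the accumulated error must be controlled. The total multiplier is
\[
\sum_\ell\alpha^\ell(2/r)^{\ell-1}T^\ell|\operatorname{Tr}(\cdots)|\leq d\sum_\ell(\alpha T)^\ell\leq e^{\mathcal{O}(T\log(d+T))},
\]
using $|\operatorname{Tr}(X\cdots X)|\leq d$. Under $T\log(d+T+1)=o(d)$, this is $e^{o(d)}$, so after shrinking $c_r$ the total error remains $e^{-c_rd}$. To avoid delicate convergence issues, I keep the finite sum over $\ell\leq T$ (cycles longer than $T$ cannot occur) and apply the per-term replacement bound uniformly in $J$. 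Uniformity over records and $\theta$ comes from \Cref{lo:barrier-replacement}.
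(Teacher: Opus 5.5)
Your proposal is correct and follows essentially the paper's route. The Wick permutation (cycle) expansion you use is the same block expansion the paper obtains from the log-determinant cumulant generating function and the moment--cumulant relation, and both arguments then differentiate a single block, reassemble $\widetilde Z_\theta(H\setminus K)$, swap ratios via \eqref{lo:barrier-ratio}, and absorb the $e^{\mathcal{O}(T\log(d+T))}$ tuple count using $T\log(d+T+1)=o(d)$. The only slip is in your posterior display: its numerator should carry the cutoff, namely $\mathbb E_G[\chi(R_\theta)\prod_{i\in H\setminus K}f_i]$, which is also what guarantees $f_i\geq 1/2$ and hence that $f_i^{-1}$ is well defined; with that correction it gives the ratio $Z_\theta(H\setminus K)/Z_\theta(H)$ that you then correctly state.
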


\begin{proof}
Fix a record $H$ and regard its observed projectors, and hence the matrices $X_i=P_i-I/d$, as fixed.
Subsets of $H$ below refer to subsets of its round indices, even if some observed projectors coincide.
We first work with the unbarriered expectation
$\widetilde Z_\theta(H)=\mathbb E_G\prod_{i\in H}f_i$.

Introduce auxiliary variables $z_i$ to keep track of the factors associated with different rounds.
Since
$f_i-1=\alpha\sum_j\lambda_j g_j^\dagger X_i g_j$,
independence of the Gaussian vectors and the Gaussian determinant formula give, in a neighborhood of $z=0$,
\begin{align}
\label{lo:cumulant-generating}
    \log\mathbb E_G\exp\!\left(\sum_i z_i(f_i-1)\right)
    &=-\sum_j\log\det\!\left(I-\alpha\lambda_j\sum_i z_iX_i\right)
       \notag\\
    &=\sum_{\ell\geq1}\frac{\alpha^\ell p_\ell(\lambda)}{\ell}
          \operatorname{Tr}\!\left(\sum_i z_iX_i\right)^\ell.
\end{align}
The second equality uses the local expansion
$-\log\det(I-M)=\sum_{\ell\geq1}\operatorname{Tr}(M^\ell)/\ell$.
We use this identity only to extract finitely many coefficients near the origin.
Moreover, $\operatorname{Tr}X_i=0$ implies
$\mathbb E_G(f_i-1)=0$, so the linear terms vanish.

For each nonempty subset $J$, let $b_J$ be the coefficient of
$\prod_{i\in J}z_i$ in the logarithm in \eqref{lo:cumulant-generating}.
If $|J|=\ell$, this coefficient is
\[
    b_J=\frac{\alpha^\ell p_\ell(\lambda)}{\ell}
       \sum_{(i_1,\ldots,i_\ell)\text{ ordering }J}
          \operatorname{Tr}(X_{i_1}\cdots X_{i_\ell}).
\]
Indeed, obtaining this monomial requires selecting each index in $J$ exactly once from the $\ell$ matrix factors.
All possible orders are included in the displayed sum.

Exponentiating \eqref{lo:cumulant-generating} gives the standard moment--cumulant relation~\cite[Ch.~2]{McCullagh1987TensorMethods}:
\[
    \mathbb E_G\prod_{i\in S}(f_i-1)
    =
    \sum_{\pi\text{ a partition of }S}
        \prod_{J\in\pi}b_J.
\]
To obtain a monomial in which every variable appears once, the selected blocks $J$ must be disjoint and have union $S$.
The factorial from the exponential series cancels the number of orders in which these blocks can be selected.
Thus no additional factorial appears.

Expanding each $f_i=1+(f_i-1)$ now gives the finite expression
\[
    \widetilde Z_\theta(H)
    =
    \sum_{S\subseteq H}
    \sum_{\pi\text{ a partition of }S}
        \prod_{J\in\pi}b_J,
\]
where the contribution of the empty set is one.
Differentiating a product in this expression selects one block $J$ and replaces $b_J$ by $b_J'$.
Once this differentiated block is fixed, the remaining blocks may partition any subset of $H\setminus J$.
Their total contribution is therefore exactly
$\widetilde Z_\theta(H\setminus J)$, yielding
\[
    \partial_\theta\widetilde Z_\theta(H)
       =\sum_{\varnothing\ne J\subseteq H}
             b_J'\,\widetilde Z_\theta(H\setminus J).
\]
Here deleting $J$ removes only its likelihood factors; all other frozen projectors remain unchanged.

The matrices $X_i$ do not depend on $\theta$ for this fixed record, so differentiating $b_J$ differentiates only its spectral coefficient.
Dividing by $\widetilde Z_\theta(H)$, which is positive for sufficiently large $d$ by \Cref{lo:barrier-replacement}, gives
\begin{equation}
\label{lo:marked-blocks}
\begin{split}
    \frac{\partial_\theta\widetilde Z_\theta(H)}
         {\widetilde Z_\theta(H)}
    =\sum_{\ell=2}^{|H|}
      \frac{\alpha^\ell p_\ell'(\lambda)}{\ell}
      \sum_{\substack{i_1,\ldots,i_\ell\in H\\\text{all distinct}}}
      \operatorname{Tr}(X_{i_1}\cdots X_{i_\ell}){}\times
      \frac{\widetilde Z_\theta(H\setminus\{i_1,\ldots,i_\ell\})}
           {\widetilde Z_\theta(H)}.
\end{split}
\end{equation}
The factor $1/\ell$ comes from the logarithmic series and compensates for the $\ell$ cyclic rotations in the ordered trace sum.
Since the spectral moments of orders below $r$ are constant along the path, all terms with $\ell<r$ vanish.

We next return to the physical prior.
\Cref{lo:barrier-replacement} replaces the ratios and logarithmic derivative in \eqref{lo:marked-blocks} by their counterparts involving $Z_\theta$, with exponentially small error.
The number of ordered tuples is at most $\exp[\mathcal{O}(T\log(T+1))]$, each trace has absolute value at most $d$, and each scalar coefficient has absolute value at most one.
The assumed bound on $T$ therefore ensures that the sum of all replacement errors remains exponentially small.
Finally,
\[
    \partial_\theta\log p_\theta(H)
    =
    \frac{\partial_\theta Z_\theta(H)}{Z_\theta(H)}
    -\frac{B_\theta'}{B_\theta},
\]
and the last term is also exponentially small because
$B_\theta\geq1/2$ and \Cref{lo:barrier-replacement} bounds $B_\theta'$.

It remains to identify the physical ratios as conditional expectations.
By Bayes' formula, the posterior density of the latent vectors given $H$, relative to the original Gaussian law, is
$\chi(R_\theta)\prod_{i\in H}f_i/Z_\theta(H)$.
Consequently,
\begin{equation}
\label{lo:cavity-posterior}
\begin{aligned}
    \mathbb E_\theta\!\left[
       \left.\prod_{i\in J}f_i^{-1}\,\right|\,H\right]
    =
    \frac{
      \mathbb E_G\!\left[
        \chi(R_\theta)\prod_{i\in H}f_i
                         \prod_{i\in J}f_i^{-1}
      \right]}
      {Z_\theta(H)}=\frac{Z_\theta(H\setminus J)}{Z_\theta(H)}.
\end{aligned}
\end{equation}
All inverse factors are well defined because $f_i\geq1/2$ on the physical prior's support.

Substituting \eqref{lo:cavity-posterior} into the physical version of \eqref{lo:marked-blocks} converts each trace term into a posterior average of
\[
    \operatorname{Tr}(X_{i_1}\cdots X_{i_\ell})
       \prod_{a=1}^{\ell}f_{i_a}^{-1}
    =
    \operatorname{Tr}(
       \widehat Y_{i_1}\cdots\widehat Y_{i_\ell}).
\]
Summing over the ordered distinct indices gives
$\widehat{\mathcal C}_{\ell,T}$ and proves \eqref{lo:score-expansion}, after decreasing $c_r$ if necessary.
The use of the original matrices makes this identity valid for every record; on good records, these cycles also agree with the auxiliary cycles used in the martingale bounds.
\end{proof}

\paragraph{The contribution of records in $G_\kappa$.}
For a record in $G_\kappa$, every $a_i$ equals one, so all original and auxiliary cycles coincide.
Since this event depends only on $H$, the identity \eqref{lo:score-expansion} gives
\[
    \boldsymbol1_{G_\kappa}\partial_\theta\log p_\theta(H)
      =\boldsymbol1_{G_\kappa}
         \mathbb E_\theta\!\left[
           \left.\sum_{\ell=r}^T
              \frac{\alpha^\ell p_\ell'}{\ell}\mathcal C_{\ell,T}
           \right|H\right]
        +\boldsymbol1_{G_\kappa}e_\theta(H).
\]
Conditional Jensen and Cauchy--Schwarz therefore imply
\begin{equation}
\label{lo:good-derivative-sum}
    \int_{G_\kappa}|\partial_\theta p_\theta|\,d\nu
      =\mathbb E_\theta[\boldsymbol1_{G_\kappa}|\partial_\theta\log p_\theta|]
      \leq\sum_{\ell=r}^T
          \frac{\alpha^\ell|p_\ell'|}{\ell}
                \|\mathcal C_{\ell,T}\|_{L^2}
          +e^{-c_r d}.
\end{equation}
All norms on the right are taken under the original, unconditional joint law.
In particular, the martingale estimates are applied before restricting to $G_\kappa$; conditioning the experiment on this future-dependent event would not preserve the martingale-difference property.

We now sum every degree in \eqref{lo:good-derivative-sum}.
The constants in the variance bounds grow with the degree, so a fixed-degree estimate alone is insufficient for the entire tail.
With $A$ from \Cref{lo:cycle-bounds}, choose the dimension-independent constants
\begin{equation}
\label{lo:cutoff-choices}
    \zeta=\frac1{32A(h+1)\log2},
    \qquad
    0<\alpha\leq
       \min\left\{\frac1{100},\frac1{8\kappa},
                   \frac{r e^{-4/\zeta}}{12K}\right\}.
\end{equation}

\begin{proposition}[Derivative bound on good records]
\label[proposition]{lo:derivative-bound}
For $T=\lfloor c_0d^{h/(h+1)}\rfloor$, with fixed $0<c_0\leq1$, uniformly over $\theta$ and all adaptive single-copy protocols,
\begin{equation}
\label{lo:good-derivative}
    \int_{G_\kappa}|\partial_\theta p_\theta|\,d\nu
       \leq C_h\frac{T^{(h+1)/2}}{d^{h/2}}+o(1)
       \leq C_hc_0^{(h+1)/2}+o(1).
\end{equation}
The error tends to zero for fixed $t,\alpha,c_0$ as $d\to\infty$.
\end{proposition}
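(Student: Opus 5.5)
The plan is to start from \eqref{lo:good-derivative-sum}. The budget $T=\lfloor c_0d^{h/(h+1)}\rfloor$ satisfies $T\log(d+T+1)=o(d)$, so \Cref{lo:score-identity} applies and the exponential error $e^{-c_rd}$ is $o(1)$. It then remains to bound
\[
\sum_{\ell=r}^T\frac{\alpha^\ell|p_\ell'|}{\ell}\|\mathcal C_{\ell,T}\|_{L^2}.
\]
Throughout we use $|p_\ell'|/\ell\leq(2/r)^{\ell-1}$ from \eqref{lo:spectral-derivative}. Split the sum into three ranges: the leading range $r\leq\ell\leq2^h$, an intermediate range $2^h<\ell\leq\ell_*$, and a long tail $\ell>\ell_*$. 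The cutoff will be $\ell_*=\lfloor\zeta\log d\rfloor$, using the $\zeta$ from \eqref{lo:cutoff-choices}.

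\emph{Leading range.} Here $\lceil\log_2\ell\rceil=h$, so \eqref{lo:cycle-bound-quantitative} gives $\|\mathcal C_{\ell,T}\|_{L^2}\leq C_\ell T^{(h+1)/2}/d^{h/2}$. There are finitely many such $\ell$, depending only on $t$, so their total is at most $C_hT^{(h+1)/2}/d^{h/2}$. Substituting $T\leq c_0d^{h/(h+1)}$ yields $C_hc_0^{(h+1)/2}$. This is the main term in \eqref{lo:good-derivative}.

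\emph{Intermediate range.} Write $k=\lceil\log_2\ell\rceil\geq h+1$. Then \eqref{lo:cycle-bound-quantitative} gives
\[
\|\mathcal C_{\ell,T}\|_{L^2}\leq e^{A\ell\log(\ell+1)}\frac{T^{(k+1)/2}}{d^{k/2}}.
\]
Since $T\leq d^{h/(h+1)}$, we have
\[
\frac{T^{(k+1)/2}}{d^{k/2}}\leq d^{\frac{h(k+1)}{2(h+1)}-\frac k2}=d^{-\frac{k-h}{2(h+1)}}\leq d^{-1/(2(h+1))}.
\]
This decay must absorb the superexponential prefactor $e^{A\ell\log(\ell+1)}$ when $\ell\leq\zeta\log d$. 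Because $\log(\ell+1)\lesssim\log\log d$, we get $e^{A\ell\log(\ell+1)}\leq d^{A\zeta\log(\ell+1)}$, which is only $d^{o(1)}$ per term against the fixed decay. I expect this step to be the main obstacle, since the prefactor is not uniformly controlled.

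A cleaner route is to exploit the growth of $k$: for $\ell\sim2^{k}$ the decay is $d^{-(k-h)/(2(h+1))}$, while $A\ell\log(\ell+1)\lesssim A2^kk$. With $\ell\leq\zeta\log d$, the choice of $\zeta$ in \eqref{lo:cutoff-choices} (its factor $32A(h+1)\log2$) is designed so that
\[
A\ell\log(\ell+1)\leq\tfrac14\cdot\tfrac{k-h}{2(h+1)}\log d+O(1)
\]
holds, possibly after crude bounds such as $\log(\ell+1)\leq\ell$ replaced by $\log(\ell+1)\leq k\log2+1$. Each term is then $O(d^{-c(k-h)})\alpha^\ell(2/r)^{\ell-1}$. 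The geometric factor $\alpha^\ell$ makes the sum $o(1)$ for any leftover constant.

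\emph{Long tail.} For $\ell>\ell_*$, apply the deterministic bound of \Cref{lo:long-chains}: $\|\mathcal C_{\ell,T}\|_{L^2}\leq d(6K)^\ell$. The term is then at most $d\,(12K\alpha/r)^\ell$, up to a factor $r/2$. By \eqref{lo:cutoff-choices}, $12K\alpha/r\leq e^{-4/\zeta}$, so the tail is bounded by
\[
O\bigl(d\,e^{-4\ell_*/\zeta}\bigr)=O(d\cdot d^{-4+o(1)})=o(1).
\]

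Adding the three ranges and $e^{-c_rd}$ gives \eqref{lo:good-derivative}. All constants depend only on $t$ (through $h$, $r$, $K$, $A$, $\alpha$), and the bounds are uniform over $\theta$ and the strategy, since the cycle estimates of \Cref{lo:cycle-bounds,lo:long-chains} are.
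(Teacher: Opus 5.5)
Your proposal is correct and follows the paper's proof: the same three-range split at $2^h$ and $\lfloor\zeta\log d\rfloor$, the quantitative cycle bound on the first two ranges, and the deterministic bound $|\mathcal C_{\ell,T}|\leq d(6K)^\ell$ with $12K\alpha/r\leq e^{-4/\zeta}$ on the long tail. In the intermediate range, drop your first attempt, since the bound $d^{A\zeta\log(\ell+1)}$ reaches $d^{\Theta(\log\log d)}$ near $\ell\approx\zeta\log d$ and is not $d^{o(1)}$. Keep your second route instead: its inequality does hold with the paper's $\zeta$, because $\ell\leq\zeta\log d$ and $\log(\ell+1)\leq k\log2+1$ reduce it to $3k\geq4h+1/\log2$, and smaller $k$ has bounded $\ell$ absorbed into the $O_h(1)$. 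Summing the resulting terms $O_h(d^{-3/(8(h+1))})\,\alpha^\ell$ is a harmless variant of the paper's split at $2^{2h+2}$, which instead counts $O(\log d)$ terms each at most $d^{-1/4}$.
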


\begin{proof}
Put $L_d=\lfloor\zeta\log d\rfloor$.
For sufficiently large $d$, we split the sum in \eqref{lo:good-derivative-sum} into three ranges: the first surviving block $r\leq\ell\leq2^h$, the intermediate degrees $2^h<\ell\leq L_d$, and the long cycles $L_d<\ell\leq T$.
The first range gives the main contribution.
We control the second range using the additional dimension decay in the cycle second-moment bound, and the third using the deterministic cycle bound together with the small fixed value of $\alpha$.

\emph{The first surviving block.}
Recall that $r=2^{h-1}+1$, so $\lceil\log_2\ell\rceil=h$ throughout $r\leq\ell\leq2^h$.
The quantitative cycle bound \eqref{lo:cycle-bound-quantitative} therefore gives
\[
    \|\mathcal C_{\ell,T}\|_{L^2}
    \leq \exp[A\ell\log(\ell+1)]
          \frac{T^{(h+1)/2}}{d^{h/2}}.
\]
Thus every degree in this block has the same dependence on $T$ and $d$.
Since $h$ is fixed, the block contains only finitely many degrees independent of $d$.
Using $\alpha^\ell|p_\ell'|/\ell\leq1$ and absorbing their degree-dependent constants into $C_h$, their total contribution is at most $C_hT^{(h+1)/2}/d^{h/2}$.

\emph{The intermediate degrees.}
For $\ell>2^h$, the exponent $\lceil\log_2\ell\rceil$ is strictly larger than $h$.
Relative to the first block, the cycle bound therefore gains the factor
$(T/d)^{(\lceil\log_2\ell\rceil-h)/2}$.
Dividing the intermediate contribution by $T^{(h+1)/2}/d^{h/2}$ and again using $\alpha^\ell|p_\ell'|/\ell\leq1$, we obtain the upper bound
\begin{equation}
\label{lo:intermediate-tail}
    \sum_{\ell=2^h+1}^{L_d}
       \exp[A\ell\log(\ell+1)]
       \left(\frac Td\right)^{(\lceil\log_2\ell\rceil-h)/2}.
\end{equation}
Although $T/d\leq d^{-1/(h+1)}$ tends to zero, the prefactor grows with $\ell$.
We must therefore control this entire sum uniformly up to $L_d$, rather than only checking each fixed degree.

First consider $2^h<\ell<2^{2h+2}$.
There are only finitely many such degrees, depending on $h$, and each satisfies $\lceil\log_2\ell\rceil-h\geq1$.
Their total contribution to \eqref{lo:intermediate-tail} is consequently $\mathcal{O}_h(d^{-1/[2(h+1)]})=o(1)$.

For the remaining degrees $\ell\geq2^{2h+2}$, we have
$\lceil\log_2\ell\rceil-h\geq\log\ell/(2\log2)$ and
$\log(\ell+1)\leq2\log\ell$.
Together with $T/d\leq d^{-1/(h+1)}$, these inequalities bound the logarithm of each summand by
\[
    2A\ell\log\ell
      -\frac{\log\ell\log d}{4(h+1)\log2}
    \leq-\frac{\log\ell\log d}{8(h+1)\log2}.
\]
For the last inequality, use $\ell\leq\zeta\log d$ and
$\zeta=1/[32A(h+1)\log2]$.
Indeed, the positive term is then at most
$\log\ell\log d/[16(h+1)\log2]$, so the negative term dominates.

Moreover, $\ell\geq2^{2h+2}$ implies that the final expression is at most $-\tfrac14\log d$.
Each of these summands is therefore at most $d^{-1/4}$.
There are only $\mathcal{O}_h(\log d)$ summands, so their total is $\mathcal{O}_h((\log d)d^{-1/4})=o(1)$.
Combining the two parts shows that \eqref{lo:intermediate-tail} is $o(1)$.
Thus the intermediate degrees contribute $o(T^{(h+1)/2}/d^{h/2})$ to the derivative bound.

\emph{The long cycles.}
For $\ell>L_d$, we use the deterministic estimate
$|\mathcal C_{\ell,T}|\leq d(6K)^\ell$ from \eqref{lo:long-chain-bound}.
Since this bound holds for every realization, it also gives
$\|\mathcal C_{\ell,T}\|_{L^2}\leq d(6K)^\ell$ for each degree.
Combining it with $|p_\ell'|/\ell\leq(2/r)^{\ell-1}$ bounds the degree-$\ell$ contribution by
$(rd/2)(12K\alpha/r)^\ell$.
Our choice of $\alpha$ ensures $12K\alpha/r\leq e^{-4/\zeta}$, so summing the resulting geometric series gives
\begin{equation}
\label{lo:long-tail}
    \sum_{\ell=L_d+1}^{T}
        \frac{\alpha^\ell|p_\ell'|}{\ell}
        \|\mathcal C_{\ell,T}\|_{L^2}
    \leq \frac{rd}{2}
          \frac{e^{-4(L_d+1)/\zeta}}{1-e^{-4/\zeta}}
    =\mathcal{O}_h(d^{-3}).
\end{equation}
Here $L_d+1>\zeta\log d$ implies
$e^{-4(L_d+1)/\zeta}\leq d^{-4}$.
The small fixed value of $\alpha$ therefore controls the entire long-degree tail once the deterministic estimate has reduced the degree dependence to an exponential.

Adding the three ranges and the exponentially small error in \eqref{lo:good-derivative-sum} yields
\[
    \int_{G_\kappa}|\partial_\theta p_\theta|\,d\nu
    \leq C_h\frac{T^{(h+1)/2}}{d^{h/2}}
       +o\!\left(\frac{T^{(h+1)/2}}{d^{h/2}}\right)
       +\mathcal{O}_h(d^{-3})+e^{-c_rd}.
\]
Finally, $T\leq c_0d^{h/(h+1)}$ gives
$T^{(h+1)/2}/d^{h/2}\leq c_0^{(h+1)/2}\leq1$.
The remaining terms are therefore $o(1)$, proving \eqref{lo:good-derivative}.
\end{proof}

\paragraph{Completion of the adaptive lower bound.}
Apply \Cref{lo:tv-split} with the good-record derivative bound \eqref{lo:good-derivative}.
Since $\kappa=2(h+1)$ and $T\leq d^{h/(h+1)}$, the exceptional-record term is $\mathcal{O}_h(d^{-1})$.
Consequently,
\begin{equation}
\label{lo:tv-full}
\begin{aligned}
    \operatorname{TV}(P_0^{(T)},P_{\pi/r}^{(T)})
       &\leq C_h\frac{T^{(h+1)/2}}{d^{h/2}}
          +d\left(\frac{2eT}{\kappa d}\right)^\kappa+o(1)\\
       &\leq C_hc_0^{(h+1)/2}+o(1).
\end{aligned}
\end{equation}

\begin{proof}[Proof of \Cref{lo:main}]
Choose $\alpha$ by \eqref{lo:cutoff-choices}, $\epsilon_t$ by \eqref{lo:epsilon}, and a fixed $c_0>0$ making the constant term in \eqref{lo:tv-full} less than $1/12$.
For large enough $d$, the endpoint total variation is less than $1/6$, contradicting \eqref{lo:necessary-tv} for any estimator using at most $\lfloor c_0d^{h/(h+1)}\rfloor$ copies.
Padding with unused rounds handles smaller budgets; smaller errors obey the same reduction.
Absorbing floors and dimension restrictions into $c_t,d_t$ proves the theorem.
\end{proof}

For $t=3,4$, the surviving lengths $3,4$ both have variance $\mathcal{O}(T^3/d^2)$, giving the common threshold $d^{2/3}$.
In general, moment matching leaves the first block $2^{h-1}<\ell\leq2^h$, with variance scale $T^{h+1}/d^h$; this explains the dyadic hierarchy.
Combining Theorems~\ref{up:main} and~\ref{lo:main} proves \Cref{thm:optimal-moments} for every fixed $0<\epsilon\leq\epsilon_t$.

\subsection{The stronger lower bound for nonadaptive protocols}
\label{sec:proof-nonadaptive}

We prove the lower bound in \Cref{thm:nonadaptive-moments} directly for the original transcript distribution.
Condition on the input-independent random seed, which fixes the entire POVM schedule in the nonadaptive model of \Cref{pre:learning-model}.
Rank-one refinement and padding to $T$ prescribed measurements preserve nonadaptivity.
The POVMs may differ between rounds, and the final classical processing is unrestricted.
All estimates below are uniform in the seed.

For this subsection, replace the adaptive parameters by
\begin{equation}
\label{lo:nonadaptive-parameters}
    r=t,\qquad \kappa=2t.
\end{equation}
The spectral path and physical prior remain unchanged in form, and the bound $K=2(\kappa+2)$ is evaluated at this threshold.
The first $t-1$ spectral moments agree along the path, whereas the endpoint gap is
\begin{equation}
\label{lo:nonadaptive-gap}
    p_t(\lambda(0))-p_t(\lambda(\pi/t))
    =2^{2-t}t^{1-t}>0.
\end{equation}
At any fixed accuracy at most $\alpha^t2^{-t-1}t^{1-t}$, the estimation-to-testing reduction therefore requires endpoint total variation at least $1/3-o(1)$.

\paragraph{Independent cycles in the original experiment.}
The adaptive proof used martingale differences and their orthogonality across rounds.
Conditional on the latent vectors and the seed, the original nonadaptive experiment has independent outcomes, giving additional orthogonality between different index sets.
This conditional independence does not require the POVMs to be identical across rounds.
We use the raw matrices here, since the auxiliary indicators depend on earlier outcomes and need not preserve independence.
Recall the raw matrices and cycles
\[
    \widehat Y_i
      =\frac{P_i-I/d}{d\operatorname{Tr}(P_i\rho_\theta)},
    \qquad
    \widehat{\mathcal C}_{\ell,T}
      =\sum_{\substack{i_1,\ldots,i_\ell\leq T\\\mathrm{all\ distinct}}}
         \operatorname{Tr}(\widehat Y_{i_1}\cdots\widehat Y_{i_\ell}).
\]
The one-step centering and variance calculations of \Cref{lo:Y-properties} apply to these raw matrices: they require only POVM normalization and the physical floor.

\begin{lemma}[Independent-cycle estimate]
\label[lemma]{lo:nonadaptive-cycles}
For every $2\leq\ell\leq T$, the original nonadaptive experiment satisfies
\begin{equation}
\label{lo:nonadaptive-raw-cycle-bound}
    \mathbb E_\theta\bigl|\widehat{\mathcal C}_{\ell,T}\bigr|^2
    \leq \ell!\,2^\ell\frac{T^\ell}{d^{\ell-1}}.
\end{equation}
In particular, for the good event $G_\kappa=\{\|\sum_{i=1}^T P_i\|_{\mathrm{op}}<\kappa\}$ from \eqref{lo:good-event},
\begin{equation}
\label{lo:nonadaptive-cycle-bound}
    \mathbb E_\theta\!\left[
       \mathbf1_{G_\kappa}\bigl|\widehat{\mathcal C}_{\ell,T}\bigr|^2\right]
    \leq \ell!\,2^\ell\frac{T^\ell}{d^{\ell-1}}.
\end{equation}
The expectations include the latent prior, the random seed, and the measurement outcomes.
\end{lemma}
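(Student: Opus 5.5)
We condition on the latent Gaussian vectors $g$ and the seed $z$ and prove the bound conditionally, uniformly in $(g,z)$; averaging then gives \eqref{lo:nonadaptive-raw-cycle-bound}. Under this conditioning the nonadaptive schedule is fixed, so the outcomes $y_1,\ldots,y_T$ are independent. Each $\widehat Y_i$ depends only on $y_i$. By the one-step calculations of \Cref{lo:Y-properties}, which use only POVM normalization and the floor $f_y\geq1/2$, we have $\mathbb E[\widehat Y_i\mid g,z]=0$ and $\mathbb E[\widehat Y_i^2\mid g,z]\preceq\frac2dI$.

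\emph{Expansion.} We expand
\[
    |\widehat{\mathcal C}_{\ell,T}|^2
    =\sum_{\mathbf i,\mathbf j}
     \operatorname{Tr}(\widehat Y_{i_1}\cdots\widehat Y_{i_\ell})\,
     \overline{\operatorname{Tr}(\widehat Y_{j_1}\cdots\widehat Y_{j_\ell})}
\]
over ordered tuples of distinct indices. Suppose the index sets $\{i_a\}$ and $\{j_b\}$ differ. Then some index $k$ occurs in exactly one of the two tuples, and it occurs there exactly once. The product is affine in $\widehat Y_k$, and $\widehat Y_k$ is independent of all other factors. Averaging over $y_k$ first therefore replaces $\widehat Y_k$ by its mean $0$, so the term vanishes. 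Only pairs with equal index sets survive. There are at most $\binom T\ell(\ell!)^2\leq\ell!\,T^\ell$ such ordered pairs. It remains to bound each surviving term by $2^\ell/d^{\ell-1}$.

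\emph{Bounding a surviving term.} By Cauchy--Schwarz in expectation, $|\mathbb E[\operatorname{Tr}(A)\overline{\operatorname{Tr}(B)}]|\leq(\mathbb E|\operatorname{Tr}A|^2\,\mathbb E|\operatorname{Tr}B|^2)^{1/2}$. So it suffices to show $\mathbb E|\operatorname{Tr}(\widehat Y_{i_1}\cdots\widehat Y_{i_\ell})|^2\leq 2^\ell/d^{\ell-1}$ for distinct indices. We average the factors out one at a time, using independence.

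First, average over $\widehat Y_{i_1}$ with $M=\widehat Y_{i_2}\cdots\widehat Y_{i_\ell}$ held fixed. The scalar bound \eqref{lo:Y-scalar-variance} gives the factor $\frac2d\|M\|_{\mathrm{HS}}^2$.

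Next, we peel off the remaining factors with the sandwich bound \eqref{lo:sandwich}, taking $L=I$ and placing the Hilbert--Schmidt norm on the right. Each factor except the last contributes $\frac2d$. The last factor contributes $\mathbb E\|\widehat Y_{i_\ell}\|_{\mathrm{HS}}^2\leq2$. Multiplying these gives $2^\ell/d^{\ell-1}$.

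One point needs checking. Lemmas~\ref{lo:Y-properties} and~\ref{lo:sandwich-lemma} are stated with conditioning on a past filtration. Here we need versions conditioned on the other (independent) factors, which hold because the one-step computations depend only on the distribution of $y_i$ given $(g,z)$. We expect this bookkeeping to be the main technical step: the order in which indices are averaged must be chosen so that the matrix being averaged is independent of its multipliers. Independence makes any order valid, so the step is routine once stated.

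\emph{Restriction to $G_\kappa$.} Inequality \eqref{lo:nonadaptive-cycle-bound} is immediate, because $\mathbf1_{G_\kappa}\leq1$ and $|\widehat{\mathcal C}_{\ell,T}|^2\geq0$. Note that inserting the indicator before the expansion would break the independence argument; it is applied only after the unconditional bound has been established.
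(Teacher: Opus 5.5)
Your proposal is correct and follows essentially the same route as the paper. You condition on the latent vectors and seed, use independence and centering to kill cross terms with different index sets, and bound each surviving term by Cauchy--Schwarz and the one-step variance bounds applied factor by factor. You then count at most $\ell!\,T^\ell$ surviving pairs and drop the indicator of $G_\kappa$ only at the end. The one cosmetic difference is that you integrate out the first factor with the scalar bound, whereas the paper integrates out the last; independence makes either order valid.
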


\begin{proof}
Condition on the latent vectors and seed, and write $\mathbb E_*$ for this conditional expectation.
Then the $\widehat Y_i$ are independent, centered, and satisfy
\[
    \mathbb E_*\widehat Y_i^2\preceq\frac2d I,
    \qquad
    \mathbb E_*|\operatorname{Tr}(M\widehat Y_i)|^2
       \leq\frac2d\|M\|_{\mathrm{HS}}^2
\]
for every fixed complex matrix $M$.
In particular, $\mathbb E_*\|\widehat Y_i\|_{\mathrm{HS}}^2\leq2$.

For an ordered distinct-index tuple $I=(i_1,\ldots,i_\ell)$, put
$W_I=\operatorname{Tr}(\widehat Y_{i_1}\cdots\widehat Y_{i_\ell})$.
Independence allows us to fix the other factors when integrating any one matrix.
Integrating the last factor using scalar variance, and then the remaining factors using matrix variance, gives
\[
    \mathbb E_*|W_I|^2
    \leq\frac2d\,
       \mathbb E_*\|\widehat Y_{i_1}\cdots\widehat Y_{i_{\ell-1}}\|_{\mathrm{HS}}^2
    \leq\frac{2^\ell}{d^{\ell-1}}.
\]
For the second step, independence and $\mathbb E_*\widehat Y_i^2\preceq2I/d$ imply
$\mathbb E_*\|Q\widehat Y_i\|_{\mathrm{HS}}^2\leq(2/d)\mathbb E_*\|Q\|_{\mathrm{HS}}^2$ whenever $Q$ is a product of the other factors.
Iterate this inequality down to one factor, whose squared Hilbert--Schmidt norm has expectation at most $2$.
To estimate the complete cycle, expand
\[
    \mathbb E_*\bigl|\widehat{\mathcal C}_{\ell,T}\bigr|^2
    =\sum_{I,J}\mathbb E_*[W_I\overline{W_J}].
\]
If $I$ and $J$ have different underlying index sets, some index appears in just one tuple.
Its matrix is independent of all other factors and appears linearly, so its zero mean gives
$\mathbb E_*[W_I\overline{W_J}]=0$.

For each $I$, exactly $\ell!$ tuples $J$ have the same index set.
Cauchy--Schwarz bounds each surviving cross term in absolute value by $2^\ell/d^{\ell-1}$.
There are at most $T^\ell$ choices of $I$, proving \eqref{lo:nonadaptive-raw-cycle-bound} after averaging over the latent vectors and seed.
The second bound follows by dropping the indicator $\mathbf1_{G_\kappa}$ from a nonnegative integrand.
In particular, no independence claim is made after conditioning on $G_\kappa$.
\end{proof}

\paragraph{Bounding the derivative on good records.}
Choose fixed parameters
\begin{equation}
\label{lo:nonadaptive-alpha}
    0<\alpha\leq
    \min\left\{\frac1{100},\frac1{8\kappa},
                  \frac{t e^{-4}}{12K}\right\},
    \qquad T=\left\lfloor c\,d^{1-1/t}\right\rfloor,
    \qquad 0<c\leq1.
\end{equation}
Write $p_\theta$ for the density of the original transcript law $P_\theta^{(T)}$ with respect to the common measure $\nu$.
Since $T\log(d+T+1)=o(d)$, the marked-cycle derivative identity \eqref{lo:score-expansion} gives
\begin{equation}
\label{lo:nonadaptive-score}
\begin{split}
    \partial_\theta p_\theta(H)
    &=p_\theta(H)\,\mathbb E_\theta\!\left[
       \left.
       \sum_{\ell=t}^{T}
          \frac{\alpha^\ell p_\ell'(\lambda(\theta))}{\ell}
              \widehat{\mathcal C}_{\ell,T}
       \,\right|\,H\right]
       +p_\theta(H)e_\theta(H),\\
    |e_\theta(H)|&\leq e^{-c_t'd}.
\end{split}
\end{equation}
The indicator of $G_\kappa$ is determined by the transcript and does not depend on $\theta$.
As in the adaptive proof, conditional Jensen and Cauchy--Schwarz bound each degree's contribution by its absolute coefficient times
$\bigl(\mathbb E_\theta[\mathbf1_{G_\kappa}|\widehat{\mathcal C}_{\ell,T}|^2]\bigr)^{1/2}$.
Substituting \eqref{lo:nonadaptive-cycle-bound} and
$|p_\ell'|/\ell\leq(2/t)^{\ell-1}$ gives the bound
\[
    b_\ell
      :=\alpha^\ell(2/t)^{\ell-1}
        \sqrt{\ell!\,2^\ell}\,
        \frac{T^{\ell/2}}{d^{(\ell-1)/2}}
\]
for the contribution of degree $\ell$.

Put $L=\lceil\log d\rceil$.
For $t\leq\ell<L$, the successive ratio satisfies
\[
    \frac{b_{\ell+1}}{b_\ell}
       =\frac{2\alpha}{t}\sqrt{2(\ell+1)T/d}
       \leq\frac{2\alpha}{t}\sqrt{2(L+1)T/d}=o(1),
\]
because $T/d\leq d^{-1/t}$.
For sufficiently large $d$, this ratio is at most $1/2$ throughout the range, so the sum is controlled by its first term:
\begin{equation}
\label{lo:nonadaptive-short-tail}
    \sum_{\ell=t}^{L}b_\ell
       \leq2b_t
       \leq C_t\frac{T^{t/2}}{d^{(t-1)/2}}.
\end{equation}

For the remaining degrees, the raw matrices agree on $G_\kappa$ with the retained matrices used only in the analysis.
The finite-difference bound of \Cref{lo:long-chains} therefore gives
$\mathbf1_{G_\kappa}|\widehat{\mathcal C}_{\ell,T}|\leq d(6K)^\ell$.
Since $12K\alpha/t\leq e^{-4}$, the entire long-degree contribution is bounded pointwise by
\begin{equation}
\label{lo:nonadaptive-long-tail}
\begin{split}
    \mathbf1_{G_\kappa}
    \sum_{\ell=L+1}^{T}
       \frac{\alpha^\ell|p_\ell'|}{\ell}
                 |\widehat{\mathcal C}_{\ell,T}|
    \leq\frac{td}{2}\sum_{\ell>L}e^{-4\ell}=\frac{td}{2}\frac{e^{-4(L+1)}}{1-e^{-4}}
      =\mathcal{O}_t(d^{-3}).
\end{split}
\end{equation}
Taking expectations bounds its contribution to the derivative integral by the same quantity.
Combining \eqref{lo:nonadaptive-score}--\eqref{lo:nonadaptive-long-tail} proves
\begin{equation}
\label{lo:nonadaptive-good-derivative}
    \int_{G_\kappa}|\partial_\theta p_\theta(H)|\,\nu(dH)
       \leq C_t\frac{T^{t/2}}{d^{(t-1)/2}}+o(1)
       \leq C_t c^{t/2}+o(1).
\end{equation}

\paragraph{Bad records and completion.}
The rare-record bound \eqref{lo:rare-records}, with $\kappa=2t$, yields
\[
    P_\theta^{(T)}(G_\kappa^c)
       \leq d\left(\frac{2eT}{\kappa d}\right)^\kappa
       \leq\left(\frac et\right)^{2t}c^{2t}d^{-1}.
\]
Combining this endpoint probability bound with \Cref{lo:tv-split} and \eqref{lo:nonadaptive-good-derivative} yields
\begin{equation}
\label{lo:nonadaptive-tv}
\begin{aligned}
    \operatorname{TV}\!\left(P_0^{(T)},P_{\pi/t}^{(T)}\right)
       &\leq\frac12\int_0^{\pi/t}
          \int_{G_\kappa}|\partial_\theta p_\theta(H)|
                            \,d\nu(H)\,d\theta
          +\mathcal{O}_t(d^{-1})\\
       &\leq C_t c^{t/2}+o(1).
\end{aligned}
\end{equation}
A sufficiently small fixed $c$ makes this less than $1/4+o(1)$, contradicting the testing requirement at the stated accuracy or smaller.
Thus arbitrary nonadaptive single-copy protocols require $\Omega_t(d^{1-1/t})$ copies.
Together with the nonadaptive collision upper bound~\cite{Li2026SingleSetting}, this proves \Cref{thm:nonadaptive-moments}.

\paragraph{Comparison with adaptive protocols.}
The adaptive and nonadaptive proofs follow the same framework: construct a moment-matching spectral path, expand the transcript derivative into cycles, control good and exceptional records, and integrate to bound the endpoint total variation.
The main technical difference is the second-moment estimate for the cycle statistics.

For adaptive protocols, we group the cycle terms by their latest round.
The increment identity
$\mathcal C_{\ell,n}-\mathcal C_{\ell,n-1}
=\ell\operatorname{Tr}(Y_n\mathcal A_{\ell-1,n-1})$
and martingale orthogonality give
\[
    \mathbb E_\theta|\mathcal C_{\ell,T}|^2
    \leq\frac{2\ell^2}{d}\sum_{n=1}^T
         \mathbb E_\theta\|\mathcal A_{\ell-1,n-1}\|_{\mathrm{HS}}^2
    \leq C_\ell
         \frac{T^{\lceil\log_2\ell\rceil+1}}
              {d^{\lceil\log_2\ell\rceil}}.
\]
The chain estimate recursively reduces the product length by roughly one half, producing the dyadic dependence on $\ell$.
For nonadaptive protocols, conditional independence instead allows us to expand the cycle's squared magnitude directly.
Cross terms with different underlying index sets vanish, leaving at most $\ell!T^\ell$ terms, each bounded by $2^\ell/d^{\ell-1}$.
This gives the stronger estimate
$\mathbb E_\theta|\widehat{\mathcal C}_{\ell,T}|^2
\leq\ell!2^\ell T^\ell/d^{\ell-1}$.
The auxiliary and raw cycles agree on good records, so both estimates control the same contribution to the transcript derivative.

For a concrete example of the additional cancellation, take
$I=(1,2,3,5)$ and $J=(1,2,4,5)$.
The corresponding cross term is
\[
    W_I\overline{W_J}
    =
    \operatorname{Tr}(
       \widehat Y_1\widehat Y_2\widehat Y_3\widehat Y_5)
    \overline{\operatorname{Tr}(
       \widehat Y_1\widehat Y_2\widehat Y_4\widehat Y_5)}.
\]
In the nonadaptive model, after fixing the latent vectors, the seed, and all matrices other than $\widehat Y_3$, this expression is linear in the independent, mean-zero matrix $\widehat Y_3$.
Its conditional expectation is therefore zero.
In an adaptive protocol, the fifth measurement may depend on the third outcome, so $\widehat Y_5$ can depend on $\widehat Y_3$.
Fixing the other matrices then need not preserve the zero mean of $\widehat Y_3$, and this cancellation is no longer guaranteed.

\ifanonymous
\else
    \section*{Acknowledgments}
    We thank Zihao Li for pointing out that the $\mathcal{O}(d^{1-1/t})$ sample complexity achieved in their work~\cite{Li2026SingleSetting} is optimal among all single-copy protocols that use the same measurement basis on every copy. This work grew out of extensive exchanges between the author and generative AI systems.
    Generative AI played an important role in developing the results and their derivations, as well as in writing the manuscript.
    The author takes full responsibility for the correctness of the results and the content of the article.
\fi


\phantomsection
\addcontentsline{toc}{section}{References}

\bibliographystyle{alphaurl}

\newcommand{\etalchar}[1]{$^{#1}$}

\end{document}